\documentclass[12pt]{article}
\usepackage[utf8]{inputenc}
\usepackage{amsmath}
\usepackage{amssymb}
\usepackage{amsthm}
\usepackage{amsfonts}
\usepackage{graphicx}
\usepackage{multirow}
\usepackage{enumerate}
\usepackage{url} 
\usepackage{float}
\usepackage{xcolor}
\usepackage{booktabs}
\usepackage{tikz}
\usetikzlibrary{calc}
\usepackage{authblk}
\usepackage{array}
\usepackage[top=1in,bottom=1in,left=1in,right=1in]{geometry}

\usepackage{natbib}
\usepackage[unicode=true,bookmarks=true,bookmarksnumbered=false,bookmarksopen=false,breaklinks=false,pdfborder={0 0 1},backref=false,colorlinks=true]{hyperref}
\hypersetup{citecolor=blue}

\newtheorem{theorem}{Theorem}

\newtheorem{assumption}{Assumption}

\newtheorem{remark}{Remark}
\newtheorem{example}{Example}

\newcommand{\bX}{\boldsymbol{X}}\newcommand{\bO}{\boldsymbol{O}}
\newcommand{\bW}{\boldsymbol{W}}
\newcommand{\bI}{\boldsymbol{I}}

\newcommand{\bD}{\boldsymbol{D}}

\newcommand{\bY}{\boldsymbol{Y}}

\newcommand{\bo}{\boldsymbol{o}}

\newcommand{\bzero}{\boldsymbol{0}}
\newcommand{\balpha}{\boldsymbol{\alpha}}
\newcommand{\bbeta}{\boldsymbol{\beta}}
\newcommand{\btheta}{\boldsymbol{\theta}}
\newcommand{\bTheta}{\boldsymbol{\Theta}}
\newcommand{\bpsi}{\boldsymbol{\psi}}

\newcommand{\bfB}{\mathbf{B}}
\newcommand{\bfV}{\mathbf{V}}

\title{Asymptotic inference with flexible covariate adjustment under rerandomization and stratified rerandomization}
\author{Bingkai Wang$^{1}$ and Fan Li$^{2,3}$
\vspace{10pt}

$^1$Department of Biostatistics, School of Public Health, University of Michigan, Ann Arbor, MI, USA

$^{2}$Department of Biostatistics, Yale School of Public Health, New Haven, CT, USA

$^{3}$Center for Methods in Implementation and Prevention Science, Yale School of Public Health, New Haven, CT, USA}

\begin{document}
\def\spacingset#1{\renewcommand{\baselinestretch}%
{#1}\small\normalsize} \spacingset{1}

\date{\vspace{-5ex}}

\maketitle
\begin{abstract}
Rerandomization is an effective treatment allocation procedure to control for baseline covariate imbalance. For estimating the average treatment effect, rerandomization has been previously shown to improve the precision of the unadjusted and the linearly-adjusted estimators over simple randomization without compromising consistency. However, it remains unclear whether such results apply more generally to the class of M-estimators, including the g-computation formula with generalized linear regression and doubly-robust methods, and more broadly, to efficient estimators with data-adaptive machine learners. In this paper, we develop the asymptotic theory for a more general class of covariate-adjusted estimators under rerandomization and its stratified extension. We prove that the asymptotic linearity and the influence function remain identical for any M-estimator under simple randomization and rerandomization, but rerandomization may lead to a non-Gaussian asymptotic distribution. We further explain, drawing examples from several common M-estimators, that asymptotic normality can be achieved if rerandomization variables are appropriately adjusted for in the final estimator. These results are extended to stratified rerandomization. Finally, we study the asymptotic theory for efficient estimators based on data-adaptive machine learners, and prove their efficiency optimality under rerandomization and stratified rerandomization. Our results are demonstrated via simulations and re-analyses of a cluster-randomized experiment that used stratified rerandomization.
\end{abstract}
\noindent%
{\it Keywords:} covariate adjustment, doubly-robust estimator, M-estimation, machine learning, influence function, randomized trials.
\vspace{20pt}

\spacingset{1.7}
\setcounter{page}{1}
\section{Introduction}
In randomized experiments, rerandomization \citep{morgan2012rerandomization} refers to a restricted randomization procedure that discards treatment allocation schemes corresponding to large baseline imbalance. That is, treatment assignments will be randomly re-generated until the balance statistics fall within a pre-specified threshold. In biomedical research, such a randomization procedure is also referred to as covariate-constrained randomization \citep{li2016evaluation,li2017evaluation} and increasingly applied to cluster randomization. Compared with simple randomization that assigns treatment by independent coin flips, rerandomization can provide effective design-based control of imbalance associated with a number of covariates, possibly of different types, and improve the study power. Compared to covariate-adaptive randomization, such as stratified randomization \citep{Zelen1974} for balancing discrete baseline variables, rerandomization can easily handle continuous baseline variables with continuously valued design parameters to control the randomization space, thereby offering additional flexibility. Given these potential advantages, rerandomization has become popular in economics \citep{bruhn2009pursuit} and biomedical research \citep{ivers2012allocation}.

Under rerandomization, the statistical properties of the unadjusted and linearly-adjusted estimators for the average treatment effect have been extensively studied under a finite-population design-based framework. \cite{li2018asymptotic} first established the design-based asymptotic theory for the unadjusted estimator (i.e., the two-sample difference-in-means estimator) under rerandomization. \cite{li2020rerandomization} generalized their results to covariate-adjusted estimators based on linear regression, and recommended the combination of rerandomization and regression adjustment to optimize statistical precision. Rerandomization has also been extended to accommodate tiers of covariates with varying importance \citep{morgan2015rerandomization}, sequential rerandomization in batches \citep{zhou2018sequential}, high-dimensional settings with a diverging number of covariates \citep{wang2022rerandomization}, split-plot designs \citep{shi2022rerandomization}, {survey experiments \citep{cytrynbaum2021optimal, yang2023rejective},} cluster rerandomization \citep{lu2023design}, stratified rerandomization \citep{wang2023rerandomization}, rerandomization based on p-values from regression-based balance assessment \citep{zhao2024no}, {finely stratified designs \citep{bai2022optimality,cytrynbaum2024finely}}, {and best-choice rerandomization \citep{WangLi2025}}.  A notable feature of much of this literature is its focus on unadjusted and linearly-adjusted estimators, natural estimator choices under the finite-population design-based framework. 

However, the analysis of randomized experiments in practice frequently involves more general covariate adjustment strategies than linear regression \citep{benkeser2021improving}, and the validity and efficiency of such general estimators remain unexplored in the existing literature on rerandomization. For example, with binary outcomes, the recent guidance from the U.S. Food and Drug Administration \citep{FDA2023} pointed out that ``\emph{Nonlinear models such as logistic regression \dots are commonly used in many clinical settings}''. In this guidance, the g-computation estimator based on logistic regression has been delineated with details as a reliable method for covariate adjustment and model-robust inference.
When the outcomes are missing at random, doubly-robust estimators \citep{robins2007} offer two opportunities for consistent estimation of the average treatment effect; i.e., consistency holds when either the missingness propensity score model or the outcome model is correctly specified but not necessarily both. In cluster-randomized experiments, mixed-effects models are used by 52\% of the primary analysis \citep{fiero2016statistical} and can simultaneously adjust for covariates and account for the intracluster correlation \citep{wang2021mixed}. Finally, data-adaptive machine learners have shown promise to maximally leverage baseline information to achieve full efficiency in randomized experiments 
\citep{chernozhukov2018double,wang2023CRT}. Only for selected scenarios under cluster randomization, previous simulations have demonstrated that rerandomization improved precision for mixed-effects models and generalized estimating equations estimators \citep{li2016evaluation, li2017evaluation}. However, to the best of our knowledge, there has been no formal development of the asymptotic theory for a more general class of covariate-adjusted estimators under rerandomization; thus few explicit recommendations exist for rerandomized experiments when inference involves flexible covariate adjustment.

In this paper, we introduce formal asymptotic results for the class of M-estimators as well as efficient estimators (nuisance models estimated via data-adaptive machine learners) under rerandomization and stratified rerandomization. First, we prove that, under standard regularity conditions for M-estimators, the asymptotic linearity and the influence function remain identical under simple randomization and rerandomization, but rerandomization may lead to a non-Gaussian asymptotic distribution. This result justifies the consistency of M-estimators under rerandomization, and obviates the need to re-derive the influence function under rerandomization. Next, we survey several common M-estimators in randomized experiments, and show that asymptotic normality can be achieved if rerandomization variables are appropriately adjusted for in the final estimator. This result clarifies when rerandomization is asymptotically \emph{ignorable} during the analysis stage, in which case one can conveniently draw inference under rerandomization as if simple randomization were carried out. Then, we develop parallel results to stratified rerandomization---a restricted randomization procedure combining stratification and rerandomization to achieve stronger control of chance imbalance. Finally, we turn to asymptotically efficient estimators based on the efficient influence functions and data-adaptive machine learners. 
We prove that, as long as the rerandomization variables are adjusted for in the machine learners, the efficient estimators continue to be optimally efficient without the need to invoke any additional regularity assumptions.
A key challenge in establishing these asymptotic results is addressing the correlation among individual observations induced by (stratified) rerandomization, for which we develop general technical arguments that accommodate flexible forms of covariate adjustment.


Throughout, we adopt a super-population sampling-based framework, which is an effective framework to study more general covariate-adjusted estimators in randomized experiments \citep{wang2023model}. We refer to \cite{robins2002} and \cite{ding2017bridging} for a comparison between the super-population sampling-based and finite-population design-based framework, and {more recent work of \cite{xu2021potential} and \cite{qu2025randomization} on alternative, designed-based asymptotic theory for M-estimators.} When the unadjusted and linear regression estimators are considered under rerandomization and stratified randomization, our results can be viewed as the sampling-based counterparts of earlier design-based results \citep{li2018asymptotic,li2020rerandomization,lu2023design,wang2023rerandomization}. However, our results go beyond linear regression and address a wider class of covariate-adjusted estimators under rerandomization and stratified rerandomization. {Finally, after posting the first version of our preprint, we became aware of the independent work of \citet{cytrynbaum2024finely}, who develops asymptotic theory for M-estimators under finely stratified randomization \citep{bai2022optimality} and whose result coincides with our Theorem~\ref{thm1} under rerandomization. \citet{cytrynbaum2024finely} further studies finite-population causal estimands and nonlinear imbalance criteria, whereas our analysis emphasizes super-population inference, identifies conditions under which M-estimators have normal limits without ex-post adjustment, and covers theory for addressing missing outcomes and debiased machine-learning estimators under (stratified) rerandomization. As such, these two parallel works remain distinct and are complementary contributions to the growing literature on rerandomization.}

The remainder of the paper is organized as follows. In Section \ref{sec:setup}, we introduce the super-population framework, randomization procedures, and M-estimation. Section~\ref{sec: main} characterizes the asymptotic distributions for M-estimators under rerandomization, and Section~\ref{sec:examples} surveys familiar M-estimators and provides sufficient conditions to achieve asymptotic normality under rerandomization. Section~\ref{sec: stratified-reran} presents parallel development under stratified rerandomization. Section~\ref{sec: eff} introduces the asymptotic theory for estimators based on the efficient influence function under rerandomization and stratified rerandomization. We report simulations in Section~\ref{sec:sim} and re-analyses of a completed cluster-randomized experiment in Section~\ref{sec:data}. Section~\ref{sec: discussion} concludes. R code for reproducing the numerical results is available at \url{https://github.com/BingkaiWang/Rerandomization}. 

\section{Notation, rerandomization, and M-estimation}\label{sec:setup}
\subsection{Assumptions and estimands}
We consider a randomized experiment with $n$ individuals. Each individual $i\ (i=1,\dots,n)$ has an outcome $Y_i$, a non-missing indicator of outcome $R_i$ ($R_i=1$ if $Y_i$ observed and $0$ if missing), a treatment assignment indicator $A_i$ ($A_i=1$ if assigned treatment and $0$ otherwise), and a vector of baseline covariates $\bX_i$. The observed data are $(\bO_1,\dots, \bO_n)$ with $\bO_i = (R_iY_i, R_i, A_i, \bX_i)$. We adopt the potential outcomes framework and define $Y_i(a)$ as the potential outcome if individual $i$ were assigned to treatment group $a, a\in \{0,1\}$. Similarly, we denote $R_i(a)$ as the potential non-missing indicator given treatment $a$. We assume causal consistency such that $Y_i = A_iY_i(1)+(1-A_i)Y_i(0)$ and $R_i = A_iR_i(1)+(1-A_i)R_i(0)$. Denoting the complete data vector for each individual as $\bW_i = (Y_i(1), Y_i(0), R_i(1), R_i(0),\bX_i)$, we make the following assumptions on $(\bW_1,\dots, \bW_n)$.

\begin{assumption}[Super-population sampling]\label{assp1}
    Each complete data vector $\bW_i, i=1,\dots,n$ is an independent and identically distributed draw from an unknown distribution $\mathcal{P}^{\bW}$ on $\bW = (Y(1), Y(0), R(1), R(0),\bX)$ with finite second moments.
\end{assumption}

\begin{assumption}[Missing at random]\label{assp2}
For each $a =0,1$, $R(a)$ is independent of $Y(a)$ given $\bX$ and $P(R(a)=1|\bX)$ is uniformly bounded away from 0.
\end{assumption}

Assumption~\ref{assp1} is a standard condition for making inference under a sampling-based framework. It postulates a notional super-population of units, from which the observed sample is randomly drawn and for which the target estimands may be defined. 
We invoke Assumption~\ref{assp2} as a standard condition to accommodate missing outcomes in randomized experiments. This assumption is required for the doubly-robust estimators in Theorems~\ref{thm2}, \ref{thm4}, \ref{thm: ml-rerandomization}, and \ref{thm: ml-stratified-rerandom}, but is not required for other results without missing data. 

Our goal is to estimate the average treatment effect, defined in a general form as
\begin{equation*}
    \Delta^* = f(E[Y(1)], E[Y(0)])
\end{equation*}
where $f(x,y)$ is a pre-specified function defining the scale of effect measure. For example, $f(x,y) = x-y$ corresponds to the treatment effect on the difference scale, whereas $f(x,y) = x/y$ corresponds to the risk ratio scale, which is a standard choice when the outcome is binary. 
Finally, by Assumption~\ref{assp1}, the expectation operator in the estimand definition (and throughout the paper) is defined with respect to the super-population distribution $\mathcal{P}^W$.

\subsection{Simple randomization and rerandomization}\label{sec: trt-assign}
Simple randomization assigns treatment via independent coin flips, that is, $A_i$ is independently determined by a Bernoulli distribution $\mathcal{P}^A$ with $P(A_i=1)=\pi\in (0,1)$.  By design, each $A_i$ is independent of $(\bW_1,\dots, \bW_n)$, and the observed data $(\bO_1,\dots, \bO_n)$ are independent and identically distributed given Assumption~\ref{assp1}.

Rerandomization improves simple randomization by controlling imbalance on a subset of measured baseline covariates, which we denote as $\bX^r$ with $\bX^r \subset \bX$ \citep{morgan2012rerandomization}; we refer to $\bX^r$ as rerandomization variables.  
Rerandomization involves the following three steps. First, we independently generate $A_1^*,\dots, A_n^*$ from a Bernoulli distribution with $P(A_i^*=1) = \pi\in (0,1)$ as in simple randomization. In the second step, we compute the imbalance statistic on $\bX^r_i$ and its variance estimator as
\begin{align*}
    \bI &= \frac{1}{N_1} \sum_{i=1}^n A^*_i \bX^r_i - \frac{1}{N_0} \sum_{i=1}^n (1-A_i^*) \bX^r_i, \\
    \widehat{Var}(\bI) &= \frac{1}{N_1N_0} \sum_{i=1}^n (\bX^r_i - \overline{\bX^r})(\bX^r_i - \overline{\bX^r})^\top,
\end{align*}
where $N_1 = \sum_{i=1}^n A_i^*$,  $N_0 = n - N_1$ and $\overline{\bX^r} = n^{-1} \sum_{i=1}^n \bX^r_i$. Lastly, given a pre-specified balance threshold $t >0$, we check whether $\bI^\top \{\widehat{Var}(\bI)\}^{-1} \bI < t$. If true, the final treatment assignment $(A_1,\dots, A_n)$ is set to be $(A_1^*,\dots, A_n^*)$; otherwise, we repeat the steps until we obtain the first randomization scheme that satisfies the balance criterion. 

The above rerandomization procedure considers the Mahalanobis distance to define the balance criterion, and $\bI^\top \{\widehat{Var}(\bI)\}^{-1} \bI$ follows a chi-squared distribution asymptotically. The balance threshold $t$ can then be selected to adjust the rejection rate. For example, letting $q$ denote the dimension of $\bX^r$, then $t$ being the $\alpha$-quantile of $\mathcal{X}^2_q$ leads to an approximate rejection rate $1-\alpha$. In general, a smaller $t$ corresponds to stronger control over imbalance; if $t\rightarrow +\infty$, then rerandomization reduces to simple randomization. From a statistical perspective, rerandomization introduces correlation among $(A_1,\dots, A_n)$ and hence correlation among observed data $(\bO_1,\dots,\bO_n)$, leading to a key complication in studying large-sample results of treatment effect estimators.
{Of note, this rerandomization scheme is based on simple randomization, and thus differs from complete randomization, which fixes the numbers of treated and control units. The latter can be viewed as a special case of stratified randomization with a single stratum, as discussed in Section~\ref{sec: stratified-reran}.}

Beyond the Mahalanobis distance, rerandomization can also be more generally based on $\bI^\top \widehat{\mathbf{H}}^{-1} \bI < t$, where $\widehat{\mathbf{H}}$ is any positive definite matrix (e.g., a diagonal matrix collecting the sample variance for each component of $\bI$). 
For our main theorems, we focus on rerandomization using the Mahalanobis distance, which leads to the most interpretable results. Nevertheless, we show how our results can be extended to accommodate such a more general distance function in Remarks~\ref{remark1} and \ref{remark3}. 


\subsection{M-estimation}
M-estimator \citep[Section 5]{vaart_1998} refers to a wide class of estimators that are solutions to estimating equations. Specifically, let $\btheta = (\Delta, \bbeta) \in \mathbb{R}^{l+1}$ be an $(l+1)$-dimensional vector of parameters, where $\Delta \in \mathbb{R}$ is the parameter of interest and  $\bbeta \in \mathbb{R}^l$ is an $l$-dimensional vector of nuisance parameters. An M-estimator $\widehat{\btheta} = (\widehat{\Delta}, \widehat{\bbeta})$ for $\btheta$ is the solution to $$\sum_{i=1}^n \bpsi(\bO_i;\btheta) = \bzero,$$
where $\bpsi$ is a pre-specified $(l+1)$-dimensional estimating function. For example, $\bpsi$ is the score function if $\widehat{\btheta}$ is obtained by maximum likelihood estimation. The M-estimator for $\Delta^*$ is $\widehat{\Delta}$. For $a= 0,1$, we denote $\bO(a) = (R(a)Y(a), R(a), a, \bX)$ and $||\cdot||_2$ as the $L_2$ matrix norm. Throughout, we assume the following regularity conditions for M-estimators.
\begin{assumption}[Regularity conditions]\label{assp3} (1) $\btheta \in \bTheta$, a compact subset of $\mathbb{R}^{l+1}$. \\
(2) $E[||\bpsi(\bO(a);\btheta)||_2^2] < \infty$ for all $\btheta \in \bTheta$ and $a \in \{0,1\}$.
(3) There exists a unique solution $\underline{\btheta} = (\underline{\Delta}, \underline{\bbeta})$, an inner point of $\bTheta$, to the equations $\pi E[\bpsi(\bO(1);\btheta)] + (1-\pi)E[\bpsi(\bO(0);\btheta)] = \bzero$.
(4) The function $\btheta \mapsto \bpsi(\bo;\btheta)$ is twice continuously differentiable for every $\bo$ in the support of $\bO$ and is dominated by an integrable function. 
(5) There exist an integrable function $v(\bo)$ that dominates the first and second derivatives of $\bpsi(\bo;\btheta)$ in  $||\btheta-\underline{\btheta}||_2 < C$ for some $C>0$. 
(6) $E\left[||\frac{\partial}{\partial \btheta} \bpsi(\bO(a);\btheta)\big|_{\btheta=\underline{\btheta}}||_2^2\right] < \infty$ for $a \in \{0,1\}$, and $\pi E\left[\frac{\partial}{\partial \btheta} \bpsi(\bO(1);\btheta)\big|_{\btheta=\underline{\btheta}}\right] + (1-\pi)E\left[\frac{\partial}{\partial \btheta} \bpsi(\bO(0);\btheta)\big|_{\btheta=\underline{\btheta}}\right]$ is invertible.
\end{assumption}

Assumption~\ref{assp3} are largely moment and continuity assumptions to rule out irregular or degenerate M-estimators, which are similar to those invoked in Section 5.3 of \cite{vaart_1998}. It is important to note that when the M-estimator is defined based on a working regression model, Assumption~\ref{assp3} does not necessarily imply the working model is correctly specified. For example, if the analysis of covariance (ANCOVA, Example \ref{eg:ancova} in Section~\ref{sec:examples}) is used, Assumption~\ref{assp3} (2)-(6) reduce to assuming $\bW$ has finite fourth moments and invertible covariance matrices, rather than assuming the ANCOVA model is correctly specified.

Under simple randomization, it has been established that Assumptions~\ref{assp1} and \ref{assp3} yield $\widehat{\Delta} \xrightarrow{p} \underline{\Delta}$, i.e., $\widehat{\Delta}$ converge  in probability to a limit quantity $\underline{\Delta}$, and $\sqrt{n}(\widehat{\Delta} - \underline{\Delta}) \xrightarrow{d} N(0,V)$, i.e., $\sqrt{n}$-rate asymptotic normality of $\widehat{\Delta}$ \citep{vaart_1998} with a certain asymptotic variance matrix $V$. To achieve consistent estimation, i.e., $\widehat{\Delta} \xrightarrow{p} \Delta^*$, one needs to choose appropriate working models (e.g., ANCOVA in randomized experiments) or make additional assumptions (e.g., Assumption~\ref{assp2} in the presence of missing outcomes) such that $\Delta^* = \underline{\Delta}$; we refer to Section~\ref{sec:examples} for detailed examples. In the next section, we formally describe the asymptotic behavior of $\widehat{\Delta}$ under rerandomization.

\section{Asymptotics for M-estimators under rerandomization}\label{sec: main}
\begin{theorem}\label{thm1}
    Given Assumptions~\ref{assp1} and \ref{assp3}, and under either simple randomization or rerandomization, we have $\widehat{\Delta} \xrightarrow{p} \underline{\Delta}$ and asymptotic linearity, i.e., 
    \begin{equation}
    \sqrt{n}(\widehat{\Delta} - \underline{\Delta}) = \frac{1}{\sqrt{n}}\sum_{i=1}^nIF(\bO_i) + o_p(1), \label{asymptoticlinearity}
    \end{equation} 
	where the influence function $IF(\bO_i)$ is the first entry of $-\bfB^{-1}\bpsi(\bO_i;\underline{\btheta})$ with\\ $\bfB = \pi E\left[\frac{\partial}{\partial \btheta} \bpsi(\bO(1);\btheta)\big|_{\btheta=\underline{\btheta}}\right] + (1-\pi)E\left[\frac{\partial}{\partial \btheta} \bpsi(\bO(0);\btheta)\big|_{\btheta=\underline{\btheta}}\right]$.

 Furthermore, under rerandomization with Mahalanobis distance as the balance criterion,
 \begin{equation}\label{eq: asymptotic distribution}
     \sqrt{n}(\widehat{\Delta} - \underline\Delta) \xrightarrow{d} \sqrt{V}\left(\sqrt{1-R^2}\ z + \sqrt{R^2}\ r_{q,t}\right),
 \end{equation}
where $z \sim N(0,1)$, $r_{q,t} \sim D_1|( \bD^\top  \bD < t)$ for $\bD \sim N(\bzero, \mathbf{I}_q)$ with $D_1$ being the first element of $\bD$ and $r_{q,t}$ being independent of $z$, $V$ is the asymptotic variance of $\widehat{\Delta}$ under simple randomization, and
\begin{equation*}
    R^2 = \frac{\pi(1-\pi)}{V} Cov[IF\{\bO(1)\}- IF\{\bO(0)\}, \bX^r]Var(\bX^r)^{-1}Cov[\bX^r, IF\{\bO(1)\}- IF\{\bO(0)\}].
\end{equation*}
Consistent estimators for $V$ and $R^2$ are provided in Supplementary Material A.
\end{theorem}

A central finding in Theorem \ref{thm1} is that the convergence in probability and asymptotic linearity properties remain identical under simple randomization and rerandomization. 
As a result, if an M-estimator is consistent to the average treatment effect ($\underline{\Delta} = \Delta^*$) under simple randomization, then it remains consistent under rerandomization. Furthermore, the influence function of an M-estimator remains unchanged under rerandomization.

In addition, Equation~\eqref{eq: asymptotic distribution} provides the asymptotic distribution of an M-estimator under rerandomization. 
This result resembles Theorem 1 of \cite{li2018asymptotic} and Theorem 1 of \cite{li2020rerandomization} (derived under a finite-population design-based framework) if $\Delta^*=E[Y(1)-Y(0)]$ and $\widehat{\Delta}$ is the unadjusted estimator or the linearly-adjusted estimator. 
Our new contribution in Theorem \ref{thm1} is to establish the asymptotic theory for a broad class of M-estimators and estimands with different scales of effect measure under rerandomization.

Different from simple randomization that leads to asymptotic normality of an M-estimator, rerandomization may correspond to a non-normal asymptotic distribution. By Theorem \ref{thm1}, the M-estimator converges weakly to a distribution given by a summation of two independent components---a normal variate $\sqrt{V}\sqrt{1-R^2}\ z$ and an independent \emph{truncated normal variate} $\sqrt{V}\sqrt{R^2}\ r_{q,t}$. 
{Here, \(R^2\in[0,1]\) measures the degree of alignment between the estimator and the rerandomized covariates \(\bX^r\). With a larger \(R^2\), rerandomization can yield a larger efficiency gain \citep{li2018asymptotic}. The magnitude of \(R^2\) is determined by the choice of estimator and by how strongly \(\bX^r\) are associated with the potential outcomes. For example, for the unadjusted estimator under equal allocation, \(R^2\) is driven by the covariance between \(\bX^r\) and $Y(1)-Y(0)$; thus, more prognostic rerandomization variables generally lead to greater variance reduction. In contrast, for covariate-adjusted estimators that already fully account for \(\bX^r\), the resulting influence function may be nearly orthogonal to the imbalance vector \(\bI\), leading to \(R^2\) close to zero. In this case, the efficiency gain from rerandomization overlaps with that from covariate adjustment, as discussed in Section~\ref{sec:examples}.}


Based on \cite{morgan2012rerandomization} and \citet{li2018asymptotic}, the density of $\sqrt{V}\left(\sqrt{1-R^2}\ z + \sqrt{R^2}\ r_{q,t}\right)$ is symmetric about zero and bell-shaped, resembling a normal distribution. However, its variance is $V\{1- (1-v_{q,t})R^2\}$, where $v_{q,t} = P(\mathcal{X}^2_{q+2} < t)/P(\mathcal{X}^2_{q} < t) < 1$, and therefore $V\{1- (1-v_{q,t})R^2\} \le V$. As a result, compared to simple randomization, rerandomization does not increase the asymptotic variance of an M-estimator. Under a design-based framework, Theorem 2 of \cite{li2018asymptotic} 
showed that the variance reduction for an unadjusted estimator is non-decreasing in $R^2$ and non-increasing in both $t$ and $q$. {Correspondingly, rerandomization can also reduce the asymptotic quantile ranges of the asymptotic distribution.} Owing to Theorem \ref{thm1}, these previous observations made for the unadjusted estimator continue to hold for general M-estimators. 

For statistical inference, we propose consistent estimators $\widehat{V}$ and $\widehat{R}^2$ in Supplementary Material A. The 95\% confidence interval for $\widehat{\Delta}$ can then be approximated by the following Monte Carlo approach. We first generate a large number of draws ($m=10,000$) of $z$ and $r_{q,t}$, e.g., via rejection sampling {or following the more efficient construction by Proposition 2 of \cite{li2018asymptotic}}. We then compute the interval $\widehat{\Delta}+\sqrt{\widehat{V}/n} \left(\sqrt{1-\widehat{R}^2}\ z + \sqrt{\widehat{R}^2}\ r_{q,t}\right)$ for each draw, and take the $2.5\%$ and $97.5\%$ quantiles of its empirical distribution to obtain the final confidence interval estimator $\widehat{CI}$. As $m,n\rightarrow \infty$, we have asymptotic nominal coverage such that $P(\Delta^*\in \widehat{CI}) \rightarrow 0.95$.

\begin{remark}\label{remark1}
When rerandomization is based on a more general balance criterion $\bI^\top \widehat{\mathbf{H}}^{-1} \bI < t$, Theorem~\ref{thm1} remains the same, except that the asymptotic distribution in \eqref{eq: asymptotic distribution} is now $\sqrt{V}\sqrt{1-R^2}\ z +  Cov[IF\{\bO(1)\}- IF\{\bO(0)\}, \bX^r]\bfV_I^{-1/2} \boldsymbol{R}_{\mathbf{H},t}$, where $\boldsymbol{R}_{\mathbf{H},t} \sim \bD|(\bD^\top\bfV_I^{1/2} \underline{\mathbf{H}}^{-1}\bfV_I^{1/2} \bD < t)$ and $\bfV_I = \{\pi(1-\pi)\}^{-1} Var(\bX^r)$ with $\underline{\mathbf{H}}$ being the probability limit of $n\widehat{\mathbf{H}}$. 
Compared to Equation~\eqref{eq: asymptotic distribution}, the first component of the asymptotic distribution remains the same, but the second component becomes more complex and depends on the choice of $\widehat{\mathbf{H}}$. Nevertheless, as long as $\underline{\mathbf{H}}$ is positive definite, rerandomization still leads to no increase in the asymptotic variance. In addition, the confidence interval can be similarly constructed with a Monte Carlo approach. Similar results were first pointed out by \cite{lu2023design} under a design-based framework with cluster rerandomization (for unadjusted and linearly-adjusted estimators); we extend their results to the class of M-estimators (proof in Supplementary Material B).
\end{remark}

\begin{remark}\label{rmk: tiers-of-covariates}
{Theorem 1 can be extended to accommodate rerandomization 
with tiers of covariates \citep{morgan2015rerandomization}. Specifically, we partition the $q$ rerandomization variables into $B$ non-overlapping groups. For each group $b = 1, \dots, B$, we denote the rerandomization variables as $\bX^{r,b}$ with dimension $q_b \leq q$, so that $\sum_{b=1}^B q_b = q$. 
Let $\mathbf{G}_b$ be the $q \times q_b$ selection matrix satisfying $\bX^{r,b} = \mathbf{G}_b^\top \bX^r$, where the columns of $\mathbf{G}_b$ are the standard basis vectors in $\mathbb{R}^q$ corresponding to the indices of tier $b$. The $q_b$-dimensional imbalance statistic is $\bI_b = \mathbf{G}_b^\top \bI$, where $\bI \in \mathbb{R}^q$ is defined in Section~\ref{sec: trt-assign}.
We define the $q_b \times q_b$ weighting matrix $\widehat{\mathbf{H}}_b$ and balancing threshold $t_b > 0$, and accept a randomization if $\bI_b^\top \widehat{\mathbf{H}}_b^{-1} \bI_b < t_b$ for all $b = 1, \dots, B$, allowing different tiers of covariates to be assigned separate balance criteria. Under this design, Theorem 1 still holds with the asymptotic distribution described in Remark 1, except that $\boldsymbol{R}_{\mathbf{H},t} \sim \bD|(\bD^\top\bfV_I^{1/2} \mathbf{G}_b \underline{\mathbf{H}}_b^{-1} \mathbf{G}_b^\top \bfV_I^{1/2} \bD < t_b: b = 1,\dots, B)$, with $\underline{\mathbf{H}}_b$ defined as the probability limit of 
$n\widehat{\mathbf{H}}_b$.}
\end{remark}


\section{Examples}\label{sec:examples}
In this section, we survey common M-estimators used for analyzing randomized experiments and provide important special cases where $R^2=0$, meaning that the M-estimator has fully accounted for rerandomization. 
Therefore, in these cases, each M-estimator remains asymptotically normal under rerandomization.
These examples serve as important clarifications on when standard inference may be carried out ignoring rerandomization. Among the four examples, estimators in Examples \ref{eg:ancova}, \ref{eg:gcomp}, and \ref{eg:mixed} are model-robust and hence consistent under arbitrary working model misspecification under simple randomization; see discussions in \citet{Tsiatis2008, colantuoni2015leveraging, wang2021mixed}; this model-robustness property is carried under rerandomization as implied by Theorem~\ref{thm1}. Example \ref{eg:dr} focuses on experiments with missing outcomes, and we additionally require Assumption~\ref{assp2} to establish the double robustness property under simple randomization \citep{robins2007}; this property remains to hold under rerandomization as a result of Theorem \ref{thm1}. 

\begin{example}[\textbf{ANCOVA estimator}]\label{eg:ancova}
In randomized experiments, the ANCOVA estimator considers the working model $E[Y|A,\bX] = \beta_0 + \beta_AA + \bbeta_{\bX}^\top \bX$ and uses ordinary least squares to obtain estimators $(\widehat\beta_0, \widehat\beta_A, \widehat{\bbeta}_X)$ for parameters $(\beta_0, \beta_A, \bbeta_X)$. In this context, $\widehat{\Delta}=\widehat\beta_A$ is the ANCOVA estimator for the average treatment effect estimand, $\Delta^* = E[Y(1)-Y(0)]$. 
\end{example}

\begin{example}[\textbf{G-computation estimator with working logistic regression}]\label{eg:gcomp}
When the outcome is binary, the logistic regression model, $\text{logit}\{P(Y=1|A,\bX)\} = \beta_0 + \beta_AA + \bbeta_{\bX}^\top \bX$, is commonly used to analyze randomized experiments. With maximum likelihood estimators $(\widehat\beta_0, \widehat\beta_A, \widehat{\bbeta}_X)$, we construct $\widehat{p}_i(a) = \text{logit}^{-1}(\widehat{\beta}_0 + \widehat{\beta}_Aa + \widehat{\bbeta}_{\bX}^\top \bX_i)$ for $a = 0,1$. The g-computation estimator for the average treatment effect on any scale $\Delta^*=f(E[Y(1)], E[Y(0)])$ is $\widehat{\Delta} = f(n^{-1}\sum_{i=1}^n \widehat{p}_i(1), n^{-1}\sum_{i=1}^n \widehat{p}_i(0))$. 
\end{example}

\begin{example}[\textbf{DR-WLS estimator for handling missing outcomes}]\label{eg:dr}
In the presence of missing outcomes, the doubly-robust weighted least squares (DR-WLS) estimator involves the following steps. First, we fit a logistic regression working model to predict missingness, i.e., $\text{logit}\{P(R=1|\bX)\} = \alpha_0 + \alpha_A A + \balpha_X \bX$, and obtain the missingness propensity score as $\widehat{P}(R=1|A,\bX) = \text{logit}^{-1}(\widehat{\alpha}_0 + \widehat{\alpha}_A A + \widehat{\balpha}_X \bX)$. Second, we fit an outcome regression working model, $E[Y|A,\bX] = g^{-1}(\beta_0 + \beta_A A + \bbeta_X \bX)$, using data with $R_i=1$ and canonical link function $g$, weighted by the inverse propensity score $\{\widehat{P}(R=1|A_i, \bX_i)\}^{-1}$. We denote the model fit as $\widehat{E}[Y|A,\bX] = g^{-1}(\widehat{\beta}_0 + \widehat{\beta}_A A + \widehat{\bbeta}_X \bX)$. Finally, we apply a g-computation estimator $\widehat{\Delta} = f\left(n^{-1}\sum_{i=1}^n \widehat{E}[Y|1,\bX_i], n^{-1}\sum_{i=1}^n \widehat{E}[Y|0,\bX_i]\right)$ to estimate the average treatment effect estimand, $\Delta^*=f(E[Y(1)], E[Y(0)])$. We assume missing at random (Assumption~\ref{assp2}) and at least one of the two working models is correctly specified.
\end{example}

\begin{example}[\textbf{Mixed-ANCOVA estimator under cluster randomization}]\label{eg:mixed}
When rerandomization is at the cluster level, the outcomes for each cluster become an $N_i$-dimensional vector $\bY_i = (Y_{i1},\dots, Y_{iN_i})$, where $N_i$ is the size of cluster $i$. In addition, the covariates $\bX_i$ is a collection of individual covariates $\{\bX_{i1}, \dots, \bX_{iN_i}\}$. For simplicity, we assume non-informative cluster sizes, i.e., $N_i$ is independent of all $(Y_{ij}, A_i, \bX_{ij})$, and all $(Y_{ij}, A_i, \bX_{ij})$ are {marginally identically distributed within and across clusters}, but they can have an arbitrary with-cluster correlation structure. Then, rerandomization is based on $\bX_i^r$ being a summary function of $(\bX_{i1}^r,\dots, \bX_{iN_i}^r)$, e.g., ${N_i}^{-1} \sum_{j=1}^{N_i} \bX_{ij}^r$.
The mixed-ANCOVA estimator involves fitting the linear mixed model $Y_{ij} = {\beta}_0 + {\beta}_A A_i + {\bbeta}_X \bX_{ij} + \delta_i + \varepsilon_{ij}$, where $\delta_i \sim N(0,\tau^2)$ is the random intercept and $\varepsilon_{ij} \sim N(0,\sigma^2)$ is the independent error. Under maximum likelihood estimation, we can construct a model-robust estimator $\widehat{\Delta} = \widehat{\beta}_A$ for the average treatment effect estimand $\Delta^* = E[Y_{ij}(1)-Y_{ij}(0)]$ \citep{wang2021mixed}. 
{With informative cluster sizes and heterogeneous within-cluster distributions, the mixed-ANCOVA estimator can be adapted to target cluster-level or individual-level estimands following \cite{wang2023CRT}.}
\end{example}

Each estimator described in Examples \ref{eg:ancova}-\ref{eg:mixed} is an M-estimator (with the corresponding estimating function provided in Supplementary Material B.1.3), and its consistency and asymptotic linearity properties directly carry from simple randomization to rerandomization. Furthermore, Theorem~\ref{thm2} clarifies, for each estimator, when asymptotic normality holds under rerandomization and thus inference can proceed by ignoring rerandomization.

\begin{theorem}\label{thm2}
    Assume Assumptions~\ref{assp1}-\ref{assp3} and that $\bX$ includes $\bX^r$ as a subset for covariate adjustment. Then, for the ANCOVA, logistic regression, DR-WLS, and mixed-ANCOVA estimators described in Examples \ref{eg:ancova}-\ref{eg:mixed}, we have $\sqrt{n}(\widehat{\Delta} - \Delta^*) \xrightarrow{d} N(0,V)$ under rerandomization if at least one of the following conditions holds: (1) $\pi = 0.5$ and $\Delta^* = E[Y(1)-Y(0)]$ is the estimand on the difference scale, or (2) the outcome regression model further includes a treatment-by-covariate interaction term $\bbeta_{AX^r}A\bX^r$ for $\bX^r$, or (3) the outcome regression model is correctly specified. In other words, $R^2=0$ under any of conditions (1), (2), or (3).
\end{theorem}

\begin{remark}
While the ANCOVA and mixed-ANCOVA estimators target $\Delta^* = E[Y(1)-Y(0)]$ through the treatment coefficient, they can be adapted to estimate treatment effect estimands on other scales through a g-computation procedure (as described in Examples \ref{eg:gcomp} and \ref{eg:dr}). When treatment-by-covariate interaction terms are included, they become the ANCOVA2 \citep{Tsiatis2008} and mixed-ANCOVA2 estimators \citep{wang2021mixed}, for which g-computation (instead of simply interpreting $\widehat{\beta}_A$) is needed for consistent estimation. {For the DR-WLS estimator, Theorem~\ref{thm2} can still hold if $\bX^r$ is omitted from the missingness model. However, including $\bX^r$ is recommended as it facilitates the correct specification of the missingness model and increases the chance for consistency. 
}
\end{remark}

Theorem~\ref{thm2} highlights the benefit of adjusting for baseline variables that are used for rerandomization. In this case, inference with many commonly used estimators under rerandomization can be performed as if simple randomization were used in the design stage; thus rerandomization becomes asymptotically ignorable in the analysis stage. This result has two implications. First, the choice of threshold $t$ or the weighting matrix $\widehat{\mathbf{H}}$ has no impact asymptotically for the estimators considered in Examples \ref{eg:ancova}-\ref{eg:mixed}. Therefore, the question of how one should select these design parameters for optimizing the asymptotic precision is immaterial in large samples. Second, many existing results about variance under simple randomization naturally extend to rerandomization. For example, for estimating the average treatment effect on the difference scale, the ANCOVA estimator under condition (1) or (2) in Theorem \ref{thm2} does not reduce the asymptotic precision by adjusting for baseline covariates under simple randomization \citep{Tsiatis2008}; the same result holds under rerandomization as long as the adjustment set included $\bX^r$. As another example, when condition (1) holds, not only ANCOVA and mixed-ANCOVA provide consistent point estimates, their model-based variance estimators are also consistent \citep{wang2019analysis,wang2021mixed}; such properties hold under rerandomization by Theorem \ref{thm2}. 

{In addition, covariate adjustment in Examples \ref{eg:ancova}-\ref{eg:mixed} sets $R^2 = 0$, which does not mean that the asymptotic variance in Theorem~\ref{thm1} is increased. This is because adjusting for prognostic covariates can reduce $V$, which often dominates the precision gain. Since rerandomization and covariate adjustment target the same prognostic variation, when the analysis correctly orthogonalizes with respect to the rerandomized covariates (so the relevant $R^2$ terms vanish), rerandomization offers no further asymptotic precision gain, and the variance equals that under simple randomization (Remark~\ref{remark:bound}). This finding refines the traditional view that rerandomization always improves efficiency: such gains arise for the unadjusted difference-in-means estimator \citep{morgan2012rerandomization} but asymptotically disappear once covariate adjustment fully accounts for the rerandomized variables \citep{li2020rerandomization}. In practice, however, covariate adjustment may be under-specified or constrained by model form or sample size, in which case rerandomization continues to deliver meaningful finite-sample gains and better covariate balance, as also observed in our simulations. 

On the contrary, if $\bX^r$ is not included in the adjustment set, Theorem~\ref{thm2} may not hold, and the asymptotic distribution of the estimator is non-normal (Theorem \ref{thm1}). 
Intuitively, this is the scenario where $\bX^r$ can further explain the variance of $\widehat{\Delta}$ so that $R^2> 0$. If inference is based on asymptotic results derived under simple randomization, the resulting confidence interval may be too wide, leading to overly conservative inference.
We thus recommend adjusting for rerandomization variables whenever using M-estimators described in Examples  \ref{eg:ancova}-\ref{eg:mixed}, such that conventional inferential procedure based on asymptotic normality can be justified by Theorem~\ref{thm2}.

\section{Extensions to stratified rerandomization}\label{sec: stratified-reran}
\subsection{Defining stratified rerandomization}\label{subsec: st-reran}
Stratified rerandomization is effective when we want to control the imbalance on $\bX^r$ at level $t$ but eliminate the imbalance on certain categorical variables via stratification \citep{wang2023rerandomization}. 
To formalize this scheme, we first introduce stratified randomization without rerandomization. 
{Stratified (block) randomization refers to a scheme that achieves exact balance within each pre-specified stratum. Let $S \in \bX$ be a categorical baseline variable encoding the randomization strata with domain $\mathcal{S}$.
For example, $S$ takes values in $\mathcal{S}=\{\textup{male smoker},\ \textup{female smoker},\ \textup{male non-smoker},\ \textup{female non-smoker}\}$ if the strata are defined by sex and smoking status. We assume the domain $\mathcal{S}$ has fixed size with $P(S =s ) > 0$ for each $s \in \mathcal{S}$, which precludes matched-pair designs. Assumption 1 implies that $S_1,\dots, S_n$ are independent and identically distributed.   }

{Within each randomization stratum, a size-$k$ permutation block with $\pi k$ ones and $(1-\pi)k$ zeros in random order is independently sampled to assign the treatment for the first $k$ participants (1 for treatment and 0 for control). When a permutation block is exhausted, a new one is sampled to assign the treatment for the next $k$ participants. In practice, the block size $k$ is chosen to make $\pi k$ an integer, and a smaller $k$ yields a stronger balance at intermediate stages of enrollment but increases predictability of future outcomes. However, 
{size-$k$ permutation blocks will be first-order equivalent to complete randomization asymptotically \citep{bugni2018}, and thus different choices of $k$ do not influence our asymptotic results.}
Unlike simple randomization where treatment assignment is independent, stratified randomization directly introduces correlation among treatment assignment and observed data $(\bO_1,\dots, \bO_n)$.}

Stratified rerandomization combines stratified randomization and rerandomization through the following steps. First, we generate $\widetilde{A}_1,\dots, \widetilde{A}_n$ by stratified randomization based on randomization strata $\{S_1,\dots, S_n\}$ and parameter $\pi \in (0,1)$. Second, we compute
    \begin{align*}
        \widetilde{\bI} &= \frac{1}{\widetilde{N}_1}\sum_{i=1}^n \widetilde{A}_i \bX_i^r - \frac{1}{\widetilde{N}_0} \sum_{i=1}^n (1-\widetilde{A}_i) \bX_i^r, \\
        \widehat{Var}(\widetilde{\bI}) &=  \frac{n}{\widetilde{N}_1\widetilde{N}_0} \left[\frac{1}{n}\sum_{i=1}^n \bX^r_i\bX^r_i{}^\top - \sum_{s \in \mathcal{S}} \frac{1}{n}\sum_{i=1}^n I\{S_i=s\}\overline{\bX_s^r}\overline{\bX_s^r}^\top\right],
    \end{align*}
where $\widetilde{N}_1 = \sum_{i=1}^n \widetilde{A}_i$, $\widetilde{N}_0 = n - \widetilde{N}_1$, and $\overline{\bX_s^r} = (\sum_{i=1}^n I\{S_i=s\})^{-1}\sum_{i=1}^n \bX_i^r I\{S_i=s\}$. In the last step, if $\widetilde{\bI}^\top \{\widehat{Var}(\widetilde{\bI})\}^{-1} \widetilde{\bI} < t$ for a pre-specified balance threshold $t$, we set the final treatment assignment $(A_1,\dots, A_n)$ to be $(\widetilde{A}_1,\dots, \widetilde{A}_n)$; otherwise, we return to the first step to obtain the first randomization scheme that satisfies the balance criterion.

Compared to rerandomization defined in Section~\ref{sec: trt-assign}, stratified rerandomization involves two changes: the first step replaces simple randomization by stratified randomization, and the variance estimator $\widehat{Var}(\widetilde{\bI})$ is updated to reflect the true variance of the imbalance statistic under stratification, which is smaller than under simple randomization. 
{This design is asymptotically equivalent to the first stratified rerandomization criterion (overall Mahalanobis distance) of \cite{wang2023rerandomization} since we assume $\pi$ is constant across randomization strata---the most common setting in randomized experiments.} 
Without loss of generality, we assume $S \notin \bX^r$; otherwise, rerandomization is effectively controlling for imbalance on $\bX^r\setminus S$.
Under stratified rerandomization, we next develop the large-sample properties for M-estimators regarding the general estimand $\Delta^*$.

\subsection{Asymptotics theory under stratified rerandomization}
\begin{theorem}\label{thm3}
    Given Assumptions~\ref{assp1} and \ref{assp3}, and under stratified rerandomization, we have $\widehat{\Delta} \xrightarrow{p} \underline{\Delta}$ and asymptotic linearity as in Equation~\eqref{asymptoticlinearity}, and
     \begin{align}\label{eq: stratified-rerandom}
     \sqrt{n}(\widehat{\Delta} - \underline\Delta) \xrightarrow{d} \sqrt{\widetilde{V}}\left\{\sqrt{1-\widetilde{R}^2}\ z + \sqrt{\widetilde{R}^2}\ r_{q,t} \right\},
 \end{align}
 where  $z \sim N(0,1)$, $r_{q,t} \sim D_1|( \bD^\top  \bD < t)$ for $\bD \sim N(\bzero, \mathbf{I}_q)$ with $D_1$ being the first element of $\bD$ and $r_{q,t}$ being independent of $z$, and
     \begin{align*}
        \widetilde{V} &= V - \pi(1-\pi) E\left[E[IF\{\bO(1)\}- IF\{\bO(0)\}|S]^2\right], \\
         \widetilde{R}^2 &= \frac{\pi(1-\pi)}{\widetilde{V}} \boldsymbol{C}^\top E[Var(\bX^r|S)]^{-1}\boldsymbol{C}.
    \end{align*}
with $\boldsymbol{C} = E[Cov[\bX^r, IF\{\bO(1)\}- IF\{\bO(0)\}|S]]$ and $V$ being the asymptotic variance under simple randomization. Consistent estimators for $\widetilde{V}$ and $\widetilde{R}^2$ are provided in Supplementary Material A.
\end{theorem}

Theorem~\ref{thm3} parallels Theorem~\ref{thm1} in terms of convergence in probability, asymptotic linearity, and weak convergence results. The major difference lies in the constant factors of the asymptotic distribution. In particular, $\widetilde{V}$ substitutes $V$ in Equation~\eqref{eq: stratified-rerandom} to 
account for the variance reduction from stratification. Likewise, compared to Theorem \ref{thm1}, $\widetilde{R}^2$ is also updated by replacing marginal covariance with the expectation of conditional covariance. Despite these changes, the properties of the asymptotic distribution remain similar to those under rerandomization, and statistical inference considerations under rerandomization discussed in Section~\ref{sec: main} still apply to stratified rerandomization. Finally, as a special case, if we consider the unadjusted estimator, the difference estimand $\Delta^*=E[Y(1)-Y(0)]$, and fixed strata categories, Theorem~\ref{thm3} becomes the counterpart of Theorem 3 of \cite{wang2023rerandomization} under a sampling-based framework. 

\begin{remark}\label{remark3}
    If rerandomization is based on $\bI^\top \widehat{\mathbf{H}}^{-1} \bI < t$, Theorem~\ref{thm3} remains the same with the asymptotic distribution in Equation~\eqref{eq: stratified-rerandom} updated to  $\sqrt{\widetilde{V}}\sqrt{1-\widetilde{R}^2}\ z +  \boldsymbol{C}^\top\widetilde{\bfV}_I^{-1/2} \widetilde{\boldsymbol{R}}_{\mathbf{H},t}$, where $\widetilde{\boldsymbol{R}}_{\mathbf{H},t} \sim \bD|(\bD^\top\widetilde{\bfV}_I^{1/2} \underline{\mathbf{H}}^{-1}\widetilde{\bfV}_I^{1/2} \bD < t)$ and $\widetilde{\bfV}_I = \{\pi(1-\pi)\}^{-1} E[Var(\bX^r|S)]$, and $\underline{\mathbf{H}}$ is the probability limit of $n\widehat{\mathbf{H}}$. As long as $\underline{\mathbf{H}}$ is positive definite, stratified rerandomization does not lead to asymptotic precision reduction over stratified or simple randomization.
\end{remark}

\begin{remark}
    Like rerandomization, stratified rerandomization can also accommodate tiers of covariates in a similar way as in Remark~\ref{rmk: tiers-of-covariates}. When we enforce $\sum_{i=1}^n A_i = \pi n$ in rerandomization as done in \cite{li2018asymptotic}, rerandomization with tiers of covariates accommodates stratified randomization as a special case \citep{wang2023rerandomization}; these schemes all fit into stratified rerandomization with tiers of covariates under our framework. 
    
\end{remark}

\subsection{Continued examples}

For Examples 1-4 in Section~\ref{sec:examples}, their model-robustness or double-robustness property remains the same under stratified rerandomization. We thus extend Theorem~\ref{thm3} for stratified rerandomization given below. 

\begin{theorem}\label{thm4}
Assume Assumptions~\ref{assp1}-\ref{assp3} and that $\bX$ includes $\bX^r$ as a subset and $S$ as dummy variables for covariate adjustment.  Then, for the ANCOVA, logistic regression, DR-WLS, or mixed-ANCOVA estimator in Examples \ref{eg:ancova}-\ref{eg:mixed}, we have $\sqrt{n}(\widehat{\Delta} - \Delta^*) \xrightarrow{d} N(0,V)$ under stratified rerandomization if at least one of the following conditions holds: (1) $\pi = 0.5$ and $\Delta^* = E[Y(1)-Y(0)]$, or (2) the outcome regression model further includes treatment-by-covariate interaction terms for both $\bX^r$ and $S$, or (3) the outcome regression model is correctly specified. In other words, $\widetilde{R}^2=0$ and $\widetilde{V} = V$ under condition (1), (2), or (3).    
\end{theorem}

Theorem~\ref{thm4} clarifies that adjusting for both $\bX^r$ and $S$ in the working models offers asymptotic normality of the considered M-estimators under stratified rerandomization. In this case, the asymptotic distribution for each estimator is no different from that under simple randomization, and asymptotic results developed for those estimators under simple randomization can directly apply even though the actual assignment comes from a stratified rerandomization procedure. 
Among the three conditions, conditions (1) and (3) remain the same as Theorem~\ref{thm3}, but condition (2) is modified to further include additional treatment-by-stratum interaction terms (as intuitively, the stratum variable is part of the randomization procedure). Similar to Theorem~\ref{thm3}, the threshold $t$ and weighting matrix $\widehat{\mathbf{H}}$ do not contribute to the asymptotic distribution, suggesting that the choice of these design parameters in general does not impact the asymptotic inference of the considered treatment effect estimators. 
Due to the appeal of applying normal approximation for inference, we maintain our recommendation that adjusted for rerandomization variables (including rerandomization variables and dummy stratum variables) in the analysis.

\section{Efficient inference with machine learning}\label{sec: eff}
Beyond traditional M-estimators based on parametric working models, data-adaptive machine learning models have emerged for analysis of randomized experiments under simple randomization, typically through the vehicle of efficient influence function \citep{van2011targeted, chernozhukov2018double}. 
While these estimators provide flexible covariate adjustment and achieve the asymptotic efficiency lower bound under simple randomization, 
their properties under rerandomization or stratified rerandomization remain unknown. In this section, we turn our focus to a class of efficient estimators via the efficient influence function and obtain their asymptotic distribution under rerandomization and stratified rerandomization. We focus on the setting where outcomes are missing at random (Assumption~\ref{assp2}), but our theory can be straightforwardly extended to accommodate other settings (e.g., cluster-randomized experiments), where efficient estimation has been previously investigated under simple randomization \citep{wang2023CRT}. 

\subsection{Efficient inference under rerandomization}\label{subsec: eff-rerandom}
Under rerandomization, we focus on the efficient estimator that uses double machine learning (DML) for estimating nuisance functions with cross-fitting \citep{chernozhukov2018double}. Let $\widehat{\eta}_a(\bX;\mathcal{D}), \widehat{\kappa}_a(\bX;\mathcal{D})$ denote the pre-specified estimators for $E[Y(a)|\bX], E[R(a)|\bX]$ trained on data $\mathcal{D}$ and evaluated at $\bX$. For cross-fitting, we randomly partition the index set $\{1,\dots,n\}$ into $K$ folds with approximately equal sizes. Specifically, let $\mathcal{I}_k$ denote the index set for the $k$-th fold, we have $\{1,\dots,n\}=\cup_{k=1}^K \mathcal{I}_k$ and $|\mathcal{I}_k| = K^{-1}n$. If $K$ does not divide $n$, we require $||\mathcal{I}_k|- K^{-1}n| \le 1$ instead. Defining $\mathcal{O}_{a,-k} = \{(R_iY_i,R_i,\bX_i): i = 1,\dots, n;i \notin \mathcal{I}_k; A_i =a\}$ as the training data for fold $k$ within treatment group $a$, we specify the nuisance function estimators as $\widehat{\eta}_a(\bX_i) = \sum_{k=1}^K I\{i \in \mathcal{I}_k\}\widehat{\eta}_a(\bX_i;\mathcal{O}_{a,-k})$ and $\widehat{\kappa}_a(\bX_i) = \sum_{k=1}^K I\{i \in \mathcal{I}_k\}\widehat{\kappa}_a(\bX_i;\mathcal{O}_{a,-k})$. Next, the efficient estimator for $E[Y(a)]$ is constructed based on the efficient influence function as
\begin{equation}\label{eq: eff}
    \widehat{\mu}_a^{\textup{dml}} = \frac{1}{n}\sum_{i=1}^n \left[\frac{I\{A_i=a\}}{\pi^a(1-\pi)^{1-a}}\frac{R_i}{\widehat{\kappa}_a(\bX_i)}\{Y_i - \widehat{\eta}_a(\bX_i)\} + \widehat{\eta}_a(\bX_i)\right],
\end{equation}
and output $\widehat{\Delta}^{\textup{dml}} = f(\widehat{\mu}_1^{\textup{dml}}, \widehat{\mu}_0^{\textup{dml}})$.  For this estimator, we assume the following conditions. 

\begin{assumption}[Conditions for nuisance function estimators]\label{assp: machine-learning}
  For $a = 0,1$, we assume (1) $E[\left\{\widehat{\eta}_a(\bX;\mathcal{D}_n)-E[Y(a)|\bX]\right\}^2]^{1/2} = o(n^{-1/4})$ and $E[\left\{\widehat{\kappa}_a(\bX;\mathcal{D}_n)-E[R(a)|\bX]\right\}^2]^{1/2} = o(n^{-1/4})$ for $\mathcal{D}_n = \{R_i(a)Y_i(a), R_i(a),\bX_i\}_{i=1}^n$; (2) $ \{\widehat{\kappa}_a(\bX;\mathcal{D}_n)\}^{-1}$ and $ E[Y^2(a)|\bX]$ are uniformly bounded with probability 1.
\end{assumption}

Assumption~\ref{assp: machine-learning}(1) requires that the nuisance function estimators converge to their target in $L_2$-norm with a rate faster than $n^{1/4}$, if the training data consist of independent and identically distributed data. This rate can be achieved by existing machine learning methods, including random forests \citep{wager2018estimation}, deep neural networks \citep{Farrell2021}, and highly adaptive lasso \citep{benkeser2016highly}, under mild regularity conditions. 
Of note, for the outcome model, the training data $\mathcal{D}_n$ are used to learn  $E[R(a)Y(a)|R(a), \bX]$, which yields $E[Y(a)|R(a)=1, \bX]$ and thus $E[Y(a)|\bX]$ under Assumption~\ref{assp2}.
In Assumption~\ref{assp: machine-learning}(2), we assume uniform boundedness on several components of $\mathcal{P}^W$ and the estimators so that the remainder terms can be appropriately controlled. Overall, Assumption~\ref{assp: machine-learning} resembles the standard assumptions for efficient inference made in \cite{chernozhukov2018double}, and importantly, no extra condition is specifically assumed for addressing rerandomization. Theorem~\ref{thm: ml-rerandomization} below provides the asymptotic results for $\widehat{\Delta}^{\textup{dml}}$.

\begin{theorem}\label{thm: ml-rerandomization}
 Given Assumptions~\ref{assp1}, \ref{assp2}, and \ref{assp: machine-learning}, and under rerandomization, we have consistency, i.e., $\widehat{\Delta}^{\textup{dml}} \xrightarrow{p} \Delta^*$, and asymptotic linearity, i.e., $\sqrt{n}(\widehat{\Delta}^{\textup{dml}}-\Delta^*) = \frac{1}{\sqrt{n}} \sum_{i=1}^n EIF(\bO_i) + o_p(1)$ with 
\begin{equation}\label{eq: eif}
    EIF(\bO_i) = \sum_{a=0}^1 f_a' \left[\frac{I\{A_i=a\}}{\pi^a(1-\pi)^{1-a}}\frac{R_i}{E[R_i(a)|\bX_i]}\{Y_i - E[Y_i(a)|\bX_i]\} + E[Y_i(a)|\bX_i] - E[Y_i(a)]\right],
\end{equation}
where $f_a'$ is the partial derivative of $f$ with respect to $E[Y(a)]$.

Furthermore, $\sqrt{n}(\widehat{\Delta}^{\textup{dml}}-\Delta^*) \xrightarrow{d} \sqrt{V}\left(\sqrt{1-R^2}\ z + \sqrt{R^2}\ r_{q,t}\right)$ as described in Equation~\eqref{eq: asymptotic distribution} with $V$ being the asymptotic variance under simple randomization and $R^2$ defined in Theorem \ref{thm1} by setting $IF(\bO_i)=EIF(\bO_i)$. Additionally, $R^2=0$ if the adjustment set $\bX$ includes the rerandomization variables $\bX^r$. Consistent estimators for $V$ and $R^2$ are provided in Supplementary Material A.
\end{theorem}

Theorem~\ref{thm: ml-rerandomization} justifies the validity of efficient estimation of the average treatment effect estimand under rerandomization. Additionally, the overall structure of its asymptotic expansion resembles that for M-estimators in Theorem~\ref{thm1}, with the only difference being that the influence function is given in Equation~\eqref{eq: eif}, which is the efficient influence function under simple randomization. If covariate adjustment includes the rerandomization variables, then $\widehat{\Delta}^{\textup{dml}}$ is asymptotically normal with asymptotic variance $V$---the usual efficiency lower bound given by the variance of $EIF(\bO_i)$ under simple randomization. On the contrary, if $\bX$ does not include all rerandomization variables, then  $V$ does not fully account for the precision gain from rerandomization, which leads to a positive $R^2$ and non-normal asymptotic distribution.



{Finally, although efficient estimators are asymptotically optimal, they may require large sample sizes to realize efficiency gains and to prevent overfitting of data-adaptive nuisance estimators. In comparison, when parametric models are correctly specified or nearly correct, M-estimators often outperform efficient estimators in finite samples with greater numerical stability and smaller variance estimates. The relative performance thus depends on the sample size, model misspecification, and complexity of the nuisance structure, and should be evaluated on a case-by-case basis. Consistent with prior recommendations for efficient inference in cluster-randomized experiments \citep{wang2023CRT}, efficient estimators are more likely to show improvement when the sample size per arm is relatively large (e.g., $n>100$).}

\begin{remark}\label{remark:bound}
{\cite{bai2023efficiency} in Theorem 4.1  showed that the semiparametric lower bound is unchanged across a variety of randomization schemes, including rerandomization schemes we study. In Supplementary Material B.3.3, we provide additional results that the observed-data tangent space and efficient influence function are also invariant to randomization schemes. Therefore, rerandomization may reduce the variance of estimators that are not orthogonal to prognostic covariates, but it cannot lower the classical efficiency bound. In particular, the DML estimator $\widehat{\Delta}^{\textup{dml}}$ remains semiparametrically efficient under rerandomization and stratified rerandomization.}
\end{remark}

\begin{remark}\label{rmk: ml}
    With no missing outcomes, we have $R_i \equiv1$, and the efficient estimator is simplified by setting $\widehat{\kappa}_a(\bX;\mathcal{D}_n) \equiv 1$. In this case, Assumption~\ref{assp: machine-learning}(1) is relaxed to only requiring a consistency assumption $E[\left\{\widehat{\eta}_a(\bX;\mathcal{D}_n)- E[Y(a)|\bX]\right\}^2] = o(1)$ without the need for regulating rate of convergence. Under this modified Assumption~\ref{assp: machine-learning}, Theorem~\ref{thm: ml-rerandomization} still holds and extends a special case of \cite{chernozhukov2018double} from simple randomization to rerandomization.
\end{remark}

\subsection{Results under stratified rerandomization}
When using stratified rerandomization, the efficient estimator in Section~\ref{subsec: eff-rerandom} requires modification to achieve our target asymptotic results. This is because stratification introduces additional correlation among observed data, which cannot be handled by the standard cross-fitting procedure. To overcome this challenge, we perform cross-fitting within each stratum $s \in \mathcal{S}$ and treatment group $a \in \{0,1\}$, an approach proposed by \cite{rafi2023efficient} to address stratification. 
Specifically, for each $s \in \mathcal{S}$ and $a \in \{0,1\}$, we randomly partition $\{i=1,\dots, n: S_i=s, A_i=a\}$ into $k$ folds and denote $\mathcal{I}_{ask}$ as the index set of the $k$-th fold. Similarly, we require roughly equal fold sizes across $k$ given $a$ and $s$. We next define the $k$-th fold in stratum $s$ as $\mathcal{I}_{sk} = \mathcal{I}_{1sk} \cup \mathcal{I}_{0sk}$ and define the training data for this fold as $\mathcal{O}_{as,-k} = \{(R_iY_i,Y_i,\bX_i): i = 1,\dots, n;i \notin \mathcal{I}_{sk}; A_i =a;S_i=s\}$. The nuisance function estimators are updated to $\widehat{\eta}_a(\bX_i) = \sum_{k=1}^K \sum_{s\in \mathcal{S}} I\{i \in \mathcal{I}_{sk}, S_i=s\}\widehat{\eta}_a(\bX_i;\mathcal{O}_{as,-k})$ and $\widehat{\kappa}_a(\bX_i) = \sum_{k=1}^K \sum_{s\in \mathcal{S}} I\{i \in \mathcal{I}_{sk}, S_i=s\}\widehat{\kappa}_a(\bX_i;\mathcal{O}_{as,-k})$. Then $\widehat{\mu}_a^{\textup{dml}}$ is defined as in Equation~\eqref{eq: eff} and $\widehat{\Delta}^{\textup{dml}} = f(\widehat{\mu}_1^{\textup{dml}}, \widehat{\mu}_0^{\textup{dml}})$. 
Based on this modified cross-fitting scheme, Theorem~\ref{thm: ml-stratified-rerandom} gives the counterpart of Theorem~\ref{thm: ml-rerandomization} under stratified rerandomization.

\begin{theorem}\label{thm: ml-stratified-rerandom}
Given Assumptions~\ref{assp1}, \ref{assp2}, and \ref{assp: machine-learning}, and under stratified rerandomization, we have the same consistency and asymptotic linearity as in Theorem~\ref{thm: ml-rerandomization}. Furthermore, $\sqrt{n}(\widehat{\Delta}^{\textup{dml}}-\Delta^*) \xrightarrow{d} \sqrt{\widetilde{V}}\left(\sqrt{1-\widetilde{R}^2}\ z + \sqrt{\widetilde{R}^2}\ r_{q,t}\right)$ as described in Equation~\eqref{eq: stratified-rerandom} with $\widetilde{V}$ and $\widetilde{R}^2$ now defined by setting $IF(\bO_i)=EIF(\bO_i)$. Additionally, $\widetilde{R}^2=0$ if the adjustment set $\bX$ includes the rerandomization variables $\bX^r$ and the dummy stratum variables $S$. Consistent estimators for $\widetilde{V}$ and $\widetilde{R}^2$ are provided in the Supplementary Material A.
\end{theorem}

\section{Simulations}\label{sec:sim}
We conduct simulation studies to demonstrate our asymptotic results. The first simulation focuses on continuous outcomes with a difference estimand $\Delta^*=E[Y(1)-Y(0)]$, and the second simulation focuses on binary outcomes with a ratio estimand $\Delta^*=E[Y(1)]/E[Y(0)]$. Both simulations consider complete outcomes and outcomes under missing at random, with simple randomization, rerandomization, and stratified rerandomization. 

\subsection{Simulation design}
For the first simulation with continuous outcomes, we set $n=400$ and independently generate three baseline covariates $X_{i1} \sim \mathcal{N}(1,1)$, $S_i|X_{i1} \sim \mathcal{B}(0, 0.4 + 0.2 I\{X_{i1}<1\})$, and $X_{i2}|(X_{i1},S_i) \sim \mathcal{N}(0,1)$ for $i=1,\dots, n$, where $\mathcal{N}, \mathcal{B}$ represent normal and Bernoulli distributions, respectively. Next, we independently generate $Y_i(a) \sim \mathcal{N}\left(4aS_iX_{i2}^2 + 2 e^{X_{i1}} + |X_{i2}|, 1\right)$, $R_i(a) \sim \mathcal{B}\left\{\textup{expit}(0.6 + 0.6a + X_{i2} + S_i)\right\}$ 
for $a = 0,1$ and $i =1,\dots, n$, where $\textup{expit}(x) = 1/(1+e^{-x})$. For treatment allocation, we set the randomization variables as $\bX^r_i = (X_{i1},X_{i2})$ and stratification variable as $S_i$. Then $(A_1,\dots, A_n)$ are generated under simple randomization, rerandomization, and stratified rerandomization as described in Sections~\ref{sec: trt-assign} and \ref{subsec: st-reran} with Mahalanobis distance, $\pi=0.5$, and threshold $t=1$ (corresponding to an acceptance rate of approximately $40\%$). Denoting $Y_i = A_iY_i(1) + (1-A_i)Y_i(0)$ and $R_i = A_iR_i(1) + (1-A_i)R_i(0)$, the observed data are $\{Y_i, A_i, X_{i1},X_{i2}, S_i:i=1,\dots, n\}$ under the no missing outcome scenario, and $\{R_iY_i, R_i, A_i, X_{i1},X_{i2}, S_i:i=1,\dots, n\}$ under the missing outcome scenario. We repeat the above procedure to generate $1000$ simulated data sets.

For each simulated data set, we implement the following three estimators under the no-missing-outcome scenario. The unadjusted estimator is given by the difference in mean outcomes between treatment groups and ignores covariates. The ANCOVA estimator (Example \ref{eg:ancova}) adjusts for all covariates $\bX_i=(X_{i1},X_{i2},S_i)$. The ML estimator refers to the efficient estimator described in Section~\ref{sec: eff} but setting $R_i = \widehat{\kappa}_a(\bX_i)=1$ as discussed in Remark~\ref{rmk: ml}. For the outcome regression model $\widehat{\eta}_a(\bX_i)$, we adopt an ensemble learner of generalized linear models, regression trees, and neural networks via SuperLearner \citep{van2007super}. Under the outcome missing at random scenario, we instead compute the DR-WLS estimator (described in Example \ref{eg:dr}) and the efficient DML estimator (as described in Section~\ref{sec: eff}); we use the same ensemble learners for estimating the two nuisance functions for the latter. Of note, since the outcome missingness is generated by a generalized linear model, the missingness propensity score model is correctly specified in both the DR-WLS and DML estimators.

For each estimator, we report the following performance metrics: bias, empirical standard error (ESE), average of standard error estimators assuming simple randomization (ASE$^*$, i.e., no extra adjustment is performed for rerandomization or stratified rerandomization), coverage probabilities and average confidence-interval lengths (computed either using normal approximations under simple randomization, or using the derived asymptotic distribution that accounts for the actual randomization procedure). 

In the second simulation study with binary outcomes, the data generating process is the same as the first simulation except that $Y_i(a) \sim \mathcal{B}\left\{\textup{expit}(-4+4aS_iX_{i2}^2 + e^{X_{i1}}-|X_{i2}|)\right\}$. Here, the unadjusted estimator is the ratio of mean outcomes between treatment groups. The outcome model used in the ANCOVA estimator and DR-WLS estimator is substituted by GLM2, which denotes logistic regression in Example~\ref{eg:gcomp} with all treatment-by-covariate interaction terms. The ML and DML estimators are also modified to target the ratio estimand, but the nuisance function estimators remain the same as in the first simulation study. 

\subsection{Simulation results}

Table~\ref{tab:sim1} summarizes the results of the first simulation study. Across different settings, all estimators have negligible bias and achieve nominal coverage probability based on our derived asymptotic distributions in Theorems~\ref{thm1}-\ref{thm: ml-stratified-rerandom}, thereby empirically supporting our theoretical results. For the unadjusted estimator, we observe that rerandomization and stratified rerandomization can improve precision, reflected by ESE being $\sim 20\%$ smaller than ASE$^*$ and CP-Normal being close to 1. Under rerandomization and stratified rerandomization, the ANCOVA, ML, DR-WLS, and DML estimators all have similar performance compared to simple randomization, and the corresponding coverage probabilities are all close to 0.95; this is expected because asymptotic normality holds for these covariate-adjusted estimators under rerandomization. These results further support Theorems~\ref{thm2}, \ref{thm4}, \ref{thm: ml-rerandomization}, and \ref{thm: ml-stratified-rerandom}, by which the ANCOVA, ML, DR-WLS, and DML estimators have been shown to fully account for the variance reduction brought by the rerandomization procedure ($R^2=0$). Of note, the empirical standard error tends to be slightly larger than the average standard error estimators for the adjusted estimators (especially when there are missing outcomes and data-adaptive estimation is used); this is driven by outlier estimates from a few simulated data sets, which have negligible impact on the coverage probability. We have repeated the simulation with $n=1000$, and this variance underestimation issue disappears (results omitted for brevity). Finally, in terms of finite-sample efficiency, we observe that parametric working models can improve precision over the unadjusted estimator, and machine learning estimators often lead to further variance reduction.

\begin{table}[ht]
    \centering
    \caption{\small Simulation results for continuous outcomes. ESE denotes the empirical standard error. ASE$^*$ denotes the average estimated standard error computed under the assumption of simple randomization. {CP-Normal and wCI-Normal denote the coverage probability and average width of confidence intervals based on normal approximations, respectively. CP-True and wCI-True denote the coverage probability and average width of confidence intervals based on the derived asymptotic distribution that accounts for rerandomization, respectively.} }\label{tab:sim1}
    \resizebox{\textwidth}{!}{
    \begin{tabular}{cccccccccc}
    \hline
    \begin{tabular}[c]{@{}c@{}}Missing\\     Outcomes?\end{tabular}   & Randomization & Estimator & Bias & ESE & ASE$^*$ & CP-Normal & CP-True & wCI-Normal & wCI-True\\
    \hline
    \multirow{9}{*}{No}     &  \multirow{3}{*}{\shortstack{Simple\\ randomization}} & Unadjusted & 0.00 & 1.23 & 1.19 & 0.95 & 0.95 & 4.67 & 4.62 \\ 
          & & ANCOVA & 0.01 & 0.82 & 0.79 & 0.95 & 0.96 & 3.10 & 3.07 \\ 
          & & ML & -0.00 & 0.43 & 0.40 & 0.94 & 0.95  & 1.57 & 1.56 \\ 
    \cmidrule(lr){2-10}
&  \multirow{3}{*}{\shortstack{Rerandomization}} & Unadjusted & -0.01 & 0.95 & 1.19 & 0.99 & 0.94 & 4.67 & 3.49 \\ 
          & & ANCOVA & -0.00 & 0.83 & 0.79 & 0.95 & 0.95 & 3.10 & 3.07 \\ 
          & & ML & 0.02 & 0.45 & 0.41 & 0.94 & 0.94 & 1.59 & 1.56 \\ 
    \cmidrule(lr){2-10}
&  \multirow{3}{*}{\shortstack{Stratified\\rerandomization}} & Unadjusted & 0.02 & 0.92 & 1.19 & 0.99 & 0.94 & 4.67 & 3.45 \\  
          & & ANCOVA & 0.02 & 0.82 & 0.79 & 0.94 & 0.95 & 3.10 & 3.07 \\ 
          & & ML & 0.01 & 0.51 & 0.47 & 0.95 & 0.94  & 1.83 & 1.78 \\  
          \hline
\multirow{6}{*}{Yes} & \multirow{2}{*}{\shortstack{Simple\\ randomization}} & DR-WLS & -0.00 & 1.00 & 0.94 & 0.95 & 0.94 & 3.67 & 3.74 \\ 
          & &  DML & 0.05 & 0.69 & 0.61 & 0.96 & 0.95 & 2.38 & 2.42 \\  
              \cmidrule(lr){2-10}
& \multirow{2}{*}{\shortstack{Rerandomization}} & DR-WLS & 0.03 & 0.97 & 0.94 & 0.95 & 0.95 & 3.68 & 3.74 \\ 
          & &  DML & 0.07 & 0.87 & 0.63 & 0.95 & 0.95 & 2.45 & 2.47 \\ 
              \cmidrule(lr){2-10}
& \multirow{2}{*}{\shortstack{Stratified\\ rerandomization}} & DR-WLS & 0.06 & 0.99 & 0.94 & 0.96 & 0.95 & 3.67 & 3.73 \\  
          & &  DML & 0.10 & 0.82 & 0.69 & 0.95 & 0.94 & 2.72 & 2.72 \\ 
\hline
    \end{tabular}
    }
\end{table}

Table~\ref{tab:sim2} presents the results for the second simulation, with similar overall findings to those from the first simulation. Here, since the ANCOVA estimator is replaced by logistic regression with treatment-by-covariate interactions, the GLM2 estimator is asymptotically normal as stated in Theorem~\ref{thm2}(2) and Theorem~\ref{thm4}(2). 

\begin{table}[ht]
    \centering
    \caption{\small Simulation results for binary outcomes. ESE denotes the empirical standard error. ASE$^*$ denotes the average estimated standard error computed under the assumption of simple randomization. {CP-Normal and wCI-Normal denote the coverage probability and average width of confidence intervals based on normal approximations, respectively. CP-True and wCI-True denote the coverage probability and average width of confidence intervals based on the derived asymptotic distribution that accounts for rerandomization, respectively.} }\label{tab:sim2}
    \resizebox{\textwidth}{!}{
    \begin{tabular}{cccccccccc}
    \hline
    \begin{tabular}[c]{@{}c@{}}Missing\\     Outcomes?\end{tabular}   & Randomization & Estimator & Bias & ESE & ASE$^*$ & CP-Normal & CP-True &wCI-Normal & wCI-True\\
    \hline
    \multirow{9}{*}{No}     &  \multirow{3}{*}{\shortstack{Simple\\ randomization}} & Unadjusted & 0.02 & 0.25 & 0.24 & 0.94 & 0.93 & 0.93 & 0.93 \\ 
          & & GLM2 &0.01 & 0.19 & 0.18 & 0.94 & 0.94& 0.72 & 0.71 \\  
          & & ML & 0.01 & 0.19 & 0.18 & 0.94 & 0.94 & 0.73 & 0.72 \\ 
    \cmidrule(lr){2-10}
&  \multirow{3}{*}{\shortstack{Rerandomization}} & Unadjusted & 0.02 & 0.21 & 0.24 & 0.97 & 0.94 & 0.93 & 0.82 \\ 
          & & GLM2 & 0.01 & 0.19 & 0.18 & 0.94 & 0.94 & 0.72 & 0.71 \\  
          & & ML & 0.02 & 0.20 & 0.19 & 0.94 & 0.94 & 0.77 & 0.76 \\  
    \cmidrule(lr){2-10}
&  \multirow{3}{*}{\shortstack{Stratified\\rerandomization}} & Unadjusted & 0.01 & 0.21 & 0.24 & 0.97 & 0.93 & 0.93 & 0.90 \\  
          & & GLM2 & 0.00 & 0.19 & 0.18 & 0.94 & 0.94 & 0.72 & 0.71 \\  
          & & ML & 0.01 & 0.19 & 0.18 & 0.95 & 0.94 & 0.73 & 0.72 \\  
          \hline
\multirow{6}{*}{Yes} & \multirow{2}{*}{\shortstack{Simple\\ randomization}} & DR-WLS & 0.02 & 0.21 & 0.21 & 0.95 & 0.95 & 0.82 & 0.81 \\  
          & &  DML & 0.03 & 0.26 & 0.22 & 0.96 & 0.96 & 0.86 & 0.85 \\  
              \cmidrule(lr){2-10}
& \multirow{2}{*}{\shortstack{Rerandomization}} & DR-WLS & 0.02 & 0.22 & 0.21 & 0.94 & 0.94 & 0.82 & 0.81 \\  
          & &  DML & 0.03 & 0.23 & 0.21 & 0.94 & 0.94 & 0.84 & 0.83 \\  
              \cmidrule(lr){2-10}
& \multirow{2}{*}{\shortstack{Stratified\\ rerandomization}} & DR-WLS & 0.02 & 0.22 & 0.21 & 0.94 & 0.94 & 0.82 & 0.81 \\ 
          & &  DML &  0.03 & 0.28 & 0.23 & 0.95 & 0.95 & 0.89 & 0.89 \\  
\hline
    \end{tabular}
    }
\end{table}

{To illustrate the performance of different approaches under moderate or small sample sizes, we replicate the first simulation without missing outcomes, setting $n=100$ and $n=30$. Table~\ref{tab:sim3} presents the results, mirroring the patterns in Table~\ref{tab:sim1}. Specifically, the unadjusted estimator continues to overestimate variance if rerandomization is ignored, and ANCOVA and ML estimators yield similar coverage probabilities with or without accounting for rerandomization. As sample size decreases, small biases and under-coverage emerge, reflecting the expected impact of model misspecification and numerical instability in finite samples.}

\begin{table}[ht]
    \centering
    \caption{\small Simulation results for continuous outcomes with moderate and small sample sizes (no missing outcomes).  {CP-Normal and wCI-Normal denote the coverage probability and average width of confidence intervals based on normal approximations, respectively. CP-True and wCI-True denote the coverage probability and average width of confidence intervals based on the derived asymptotic distribution that accounts for rerandomization, respectively.} }\label{tab:sim3}
    \resizebox{\textwidth}{!}{
    \begin{tabular}{cccccccccc}
    \hline
    Sample size & Randomization & Estimator & Bias & ESE & ASE$^*$ & CP-Normal & CP-True & wCI-Normal & wCI-True \\
    \hline
    \multirow{9}{*}{$n=100$}     &  \multirow{3}{*}{\shortstack{Simple\\ randomization}} & Unadjusted &  0.02 & 2.41 & 2.30 & 0.95 & 0.96 & 9.00 & 9.24 \\ 
          & & ANCOVA & -0.01 & 1.72 & 1.45 & 0.93 & 0.93 & 5.70 & 5.86 \\
          & & ML & 0.02 & 1.30 & 1.15 & 0.94 & 0.95& 4.52 & 4.64 \\  
    \cmidrule(lr){2-10}
&  \multirow{3}{*}{\shortstack{Rerandomization}} & Unadjusted & 0.02 & 1.85 & 2.30 & 0.99 & 0.96 & 9.02 & 6.77 \\  
          & & ANCOVA & -0.00 & 1.64 & 1.46 & 0.94 & 0.94 & 5.72 & 5.87 \\  
          & & ML & 0.06 & 1.31 & 1.15 & 0.95 & 0.95 & 4.50 & 4.47 \\ 
    \cmidrule(lr){2-10}
&  \multirow{3}{*}{\shortstack{Stratified\\rerandomization}} & Unadjusted &  -0.07 & 1.87 & 2.30 & 1.00 & 0.94& 9.02 & 6.66 \\ 
          & & ANCOVA &-0.07 & 1.63 & 1.46 & 0.95 & 0.95 & 5.71 & 5.86 \\ 
          & & ML & -0.04 & 1.53 & 1.31 & 0.95 & 0.95 & 5.13 & 5.06 \\ 
          \hline
    \multirow{9}{*}{$n=30$}     &  \multirow{3}{*}{\shortstack{Simple\\ randomization}} & Unadjusted &   -0.03 & 4.45 & 3.84 & 0.94 & 0.94 & 15.06 & 15.04 \\ 
          & & ANCOVA & -0.13 & 3.20 & 2.19 & 0.88 & 0.88 & 8.57 & 8.56 \\  
          & & ML &  0.02 & 3.39 & 2.80 & 0.95 & 0.95 & 10.98 & 10.97 \\ 
    \cmidrule(lr){2-10}
&  \multirow{3}{*}{\shortstack{Rerandomization}} & Unadjusted & -0.02 & 3.36 & 3.86 & 1.00 & 0.94 & 15.11 & 11.12 \\ 
          & & ANCOVA & -0.10 & 2.99 & 2.16 & 0.88 & 0.89 & 8.48 & 8.47 \\  
          & & ML &  0.08 & 3.22 & 2.81 & 0.95 & 0.94 & 11.00 & 10.48 \\ 
    \cmidrule(lr){2-10}
&  \multirow{3}{*}{\shortstack{Stratified\\rerandomization}} & Unadjusted & 0.02 & 3.30 & 3.85 & 0.99 & 0.94 & 15.09 & 10.75 \\  
          & & ANCOVA &0.01 & 2.84 & 2.16 & 0.90 & 0.90 & 8.48 & 8.47 \\ 
          & & ML & -0.03 & 3.49 & 3.09 & 0.95 & 0.94 & 12.11 & 11.44 \\  
\hline
    \end{tabular}
    }
\end{table}

\section{Data application}\label{sec:data}
The Effectiveness of Group Focused Psychosocial Support for Adults Affected by Humanitarian Crises (GroupPMPlus) study is a cluster-randomized experiment in Nepal designed to improve the mental health of people affected by humanitarian emergencies such as pandemics, war, and environmental disasters \citep{jordans2021effectiveness}. The cluster-level intervention was Group Problem Management Plus, a psychological treatment of 5 weekly sessions (versus standard care). 
In this study, 72 wards (the smallest administrative units in Nepal, representing clusters) were enrolled, consisting of 609 individuals in total. Stratified rerandomization was used for equal treatment allocation on clusters: stratification was based on gender (all individuals in the same ward have the same gender), and rerandomization was based on three binary cluster-level covariates: high or low access to mental health service, high or low disaster risk, and rural/urban status. However, we are unable to obtain detailed rerandomization parameters from the published report, including the weighting matrix or balance threshold. Therefore, we carry out our analysis assuming simple randomization: this choice will lead to conservative confidence intervals for an unadjusted analysis but has no impact on the covariate-adjusted estimators, as supported by our theory and simulations. 
The primary outcome was a continuous measure of psychological distress at the 3-month follow-up evaluated by the General Health Questionnaire (GHQ-12). The baseline variables we adjust for include the baseline GHQ-12 score and all variables used in stratified rerandomization. For this study, we implement the mixed-ANCOVA estimator (Example \ref{eg:mixed}) with individual-level data \citep{wang2021mixed}, and the unadjusted, ANCOVA, and ML estimators with cluster-level means (implemented as in the first simulation) to estimate the average treatment effect. For all estimators, we compute their point estimates, standard errors assuming simple randomization, and confidence intervals under normal approximation. 

The results are summarized in Table~\ref{tab:data1}. While all estimators have similar point estimates, their standard error estimates differ. For the unadjusted estimator, since we are unaware of the detailed rerandomization parameters, 0.78 should be a conservative estimate, leading to failure to reject the null at the 5\% level. In contrast, the standard error estimates for all three covariate-adjusted estimators should have fully accounted for the precision gain from stratified rerandomization as clarified by our theoretical results. For these estimators, it is important to highlight that validity of the standard error estimator is achieved without knowing the exact rerandomization parameters, which further endorses the recommendation of adjusting for stratification and rerandomization variables. Among the three covariate-adjusted estimators, the ANCOVA estimator appears to have the highest precision, while the machine learning estimator leads to the least variance reduction. This may be because either the sample size is relatively limited for machine learning to fully demonstrate asymptotic efficiency gain or the true data-generating distribution is nearly linear in covariates.

\begin{table}[ht]
    \centering
    \caption{Data analysis results for the GroupPMPlus study.}\label{tab:data1}
    \begin{tabular}{rrrrr}
  \hline
Estimator & Estimate & Standard error &  95\% confidence interval \\ 
  \hline
Unadjusted & -1.25 & 0.78 & (-2.77,\,\ 0.28) \\ 
  ANCOVA & -1.45 & 0.56 & (-2.55, -0.35) \\ 
  Mixed-ANCOVA & -1.35 & 0.62 & (-2.57, -0.13) \\ 
  ML & -1.44 & 0.64 & (-2.69, -0.18) \\ 
   \hline
\end{tabular}
\end{table}

\section{Discussion}\label{sec: discussion}
Covariate adjustment in randomized experiments can be achieved at the design stage, e.g., via rerandomization, and at the analysis stage, e.g., by outcome modeling. In this paper, our primary contribution is to clarify the impact of (stratified) rerandomization for the general class of M-estimators, and to further demonstrate when rerandomization can be asymptotically ignorable (and hence conventional asymptotic normality results can apply). These results provide clarifications to earlier simulation findings, for example, in cluster-randomized experiments \citep{li2016evaluation,li2017evaluation} where mixed-model ANCOVA was evaluated under rerandomization. We have also extended the theoretical development to efficient machine learning estimators, which maximally leverage baseline covariates to optimize asymptotic efficiency gain. Importantly, our results significantly expanded the existing rerandomization theory \citep{li2018asymptotic,li2020rerandomization,lu2023design,wang2023rerandomization,zhao2024no} from linear-adjusted estimators to essentially all commonly used covariate-adjusted estimators, and hence should have wide implications for randomized experiments, especially given the diversity of covariate-adjustment methods used in practice \citep{FDA2023}.

A practical question for conducting rerandomization is how to specify design parameters. 
Our general recommendation is to balance only prognostic covariates via the Mahalanobis distance and choose $t$ such that the rejection rate is no larger than 95\%. We retain our recommendation to adjust for rerandomization variables during analysis for maximum efficiency gain and convenience in inference. Finally, with our example estimators, although the choices of weighting matrices and $t$ have no impact asymptotically, they could have finite-sample implications. Alternatively, $t$ may be specified in a data-driven fashion, i.e., $t_n = f_n(\bX_1^r,\dots, \bX_n^r)$. Asymptotic analysis with an arbitrary $t_n$ is challenging as it requires special convergence results for conditional quantiles. {On this topic, \citet{wang2022rerandomization} developed asymptotic theory for sample-size-dependent thresholds \(t_n\) and discussed finite-sample diagnostics for choosing \(t_n\). More recently, \citet{WangLi2025} developed asymptotic theory for best-choice rerandomization. Although these works focus primarily on unadjusted estimators, we conjecture that their approaches can be extended to accommodate general \(M\)-estimators and machine learning estimators.}

{A potential limitation of the present theory is that it is formulated within the super-population framework routinely used for studying general covariate-adjusted estimators. This framework is useful for comparing estimators and deriving first-order properties, but it may not fully represent the finite-sample trade-offs that motivate rerandomization. For example, when covariate adjustment is constrained by model form, pre-specification requirements, or sample size, rerandomization can still improve covariate balance and may yield operating characteristics that are not fully captured by first-order asymptotic comparisons alone. A promising direction for future work is to develop refined asymptotic approximations that retain such practically relevant finite-sample features, perhaps through local asymptotic regimes in which the sequence of populations, the strength and dimension of covariate imbalance, and the degree of adjustment misspecification or regularization drift jointly.}

\section*{Acknowledgements}
Research in this article was supported by a Patient-Centered Outcomes Research Institute Award\textsuperscript{\textregistered} (PCORI\textsuperscript{\textregistered} Award ME-2022C2-27676) and the National Institute Of Allergy And Infectious Diseases of the National Institutes of Health under Award Number R00AI173395. The statements presented in this article are solely the responsibility of the authors and do not necessarily represent the official views of the National Institutes of Health or PCORI\textsuperscript{\textregistered}, its Board of Governors, or the Methodology Committee.

{
\bibliographystyle{apalike}
\bibliography{references}
}

\end{document}


\def\spacingset#1{\renewcommand{\baselinestretch}%
{#1}\small\normalsize} \spacingset{1}

\date{\vspace{-5ex}}

\maketitle

\spacingset{1.8}
\appendix

\section{Consistent variance estimators}
For M-estimators, we construct consistent variance estimators $\widehat{V}$ for $V$ in Equation~\eqref{def: vhat}, $\widehat{R}^2$ for $R^2$ in Equation~\eqref{def: rhat}, $\widehat{\widetilde{V}}$ for $\widetilde{V}$ in Equation~\eqref{def: vv-hat}, and  $\widehat{\widetilde{R}^2}$ for $\widetilde{R}^2$ in Equation~\eqref{def: rr-hat}. 
For efficient estimators, we construct consistent variance estimators $\widehat{V}$ for $V$ in Equation~\eqref{Vhat: ml}, $\widehat{R}^2$ for $R^2$ in Equation~\eqref{R2hat: ml}, $\widehat{\widetilde{V}}$ for $\widetilde{V}$ in Equation~\eqref{vhat: ml-st}, and  $\widehat{\widetilde{R}^2}$ for $\widetilde{R}^2$ in Equation~\eqref{R2hat: ml-st}. 
We do not reiterate those expressions here for brevity; the precise equations will be presented in the subsequent proofs.


\section{Proofs}
\subsection{Results under rerandomization}
\subsubsection{Lemmas}

\begin{lemma}[Conditional Lindeberg-Feller central limit theorem]\label{lem:condCLT}
Let $\{\bX_{ni}\}_{i=1}^n\subset\mathbb R^d$ be random vectors adapted to some sigma-field $\mathcal F_n$ and denote $\bS_n=\sum_{i=1}^n \bX_{ni}$. Assume (1) $\{\bX_{ni}\}_{i=1}^n$ are independent given $\mathcal F_n$, (2) $E[\bS_n\mid \mathcal F_n]=0$, (3)  $\bSigma_n=\sum_{i=1}^n E\!\big[\bX_{ni}\bX_{ni}^\top\mid \mathcal F_n\big]$ is finite almost surely, and (4)  the conditional Lindeberg condition in probability:
for each $\varepsilon>0$ and unit-lengh vector $\bu$, $\sum_{i=1}^n E\!\Big[(\bu^\top \bX_{ni})^2\,\mathbf 1\{|\bu^\top \bX_{ni}|>\varepsilon\}\ \Big|\ \mathcal F_n\Big]\xrightarrow{p}0.$

Then, for each unit $\bu$ and $t\in\mathbb R$, we have $E\!\left[e^{\,it\,\bu^\top \bS_n}\ \big|\ \mathcal F_n\right]-\exp\!\left(-\frac12 t^2\,\bu^\top\bSigma_n \bu\right)\ \xrightarrow{p}\ 0$; that is, $\bS_n$ and $N(0,\bSigma_n)$ are asymptotically equivalent conditioning on $\mathcal{F}_n$, which we denote as $\bS_n | \mathcal{F}_n \overset{\bullet}{\sim} N(0,\bSigma_n) | \mathcal{F}_n$.  
If moreover $\bSigma_n \xrightarrow{p} \bSigma$, then $\bS_n \xrightarrow{d} N(0,\bSigma)$ unconditionally.
\end{lemma}

\begin{proof}
This is the standard Lindeberg--Feller central limit theorem carried out with conditional characteristic functions (e.g.,
\citealp{aldous1978mixing} on stable convergence).
\end{proof}

\begin{lemma}\label{lemma1}
    Let $h(\bW;\btheta)$ be some measurable real-valued function $h$ on $\bW$ defined in Assumption 1 and parameters $\btheta \in \bTheta \subset \mathbb{R}^p$. 
    We assume that $\bTheta$ is a compact set, $h$ is a continuous function on $\btheta$, and  $h(\bW;\btheta)$ is dominated by an integrable function. 
    Under rerandomization and Assumption 1, we have $\sup_{\btheta \in \bTheta}||\frac{1}{n} \sum_{i=1}^n A_i h(\bW_i;\btheta) - \pi E[h(\bW_i;\btheta)] ||_2 \xrightarrow{p} 0$ and $\sup_{\btheta \in \bTheta}||\frac{1}{n} \sum_{i=1}^n (1-A_i) h(\bW_i;\btheta) - (1-\pi) E[h(\bW_i;\btheta)] ||_2 \xrightarrow{p} 0$.
\end{lemma}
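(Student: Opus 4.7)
My plan is a two-step reduction: first, reduce the problem under rerandomization to the same problem under complete randomization; second, reduce the weighted sum with treatment indicators to a standard uniform law of large numbers on the iid sequence $\{\bW_i\}$. For the first step, rerandomization is complete randomization conditioned on an acceptance event $\mathcal{E}_n$ defined by the covariate balance criterion. Under Assumption~1, the balance statistic is asymptotically tight with a limit law assigning positive mass to the acceptance region, so $\liminf_n P(\mathcal{E}_n) > 0$. Since $P(|Y_n|>\epsilon \mid \mathcal{E}_n) \le P(|Y_n|>\epsilon)/P(\mathcal{E}_n)$ for any random sequence $Y_n$, convergence in probability under complete randomization transfers to convergence in probability under rerandomization. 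Hence it suffices to prove the lemma under complete randomization.

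Under complete randomization, I would decompose
\begin{equation*}
\frac{1}{n}\sum_{i=1}^n A_i\, h(\bW_i;\btheta) \;=\; \pi \cdot \frac{1}{n}\sum_{i=1}^n h(\bW_i;\btheta) \;+\; \frac{1}{n}\sum_{i=1}^n (A_i-\pi)\, h(\bW_i;\btheta).
\end{equation*}
The first summand converges uniformly in $\btheta\in\bTheta$ to $\pi\, E[h(\bW;\btheta)]$ by Jennrich's uniform law of large numbers applied to the iid sequence $\{\bW_i\}$; the compactness of $\bTheta$, the continuity of $h(\bW;\cdot)$, and the integrable envelope are precisely its hypotheses.

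For the second summand, I would first establish pointwise convergence in probability at each fixed $\btheta$. Conditioning on $(\bW_i)_{i=1}^n$, under complete randomization with $n_1=\lfloor n\pi\rfloor$ treated units the $A_i$ are exchangeable with $E[A_i\mid\bW]=n_1/n\to\pi$ and $\mathrm{Cov}(A_i,A_j\mid\bW)=O(n^{-1})$; a Chebyshev bound combined with the $L^1$ envelope then gives $o_p(1)$ at each $\btheta$. To upgrade to uniform convergence, I would use a standard bracketing argument: cover $\bTheta$ by finitely many balls $B(\btheta_k,\delta)$, then invoke continuity of $h$ together with the integrable envelope (via dominated convergence) to deduce that $E\sup_{\btheta\in B(\btheta_k,\delta)}|h(\bW;\btheta)-h(\bW;\btheta_k)|\downarrow 0$ as $\delta\downarrow 0$. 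Because $|A_i-\pi|\le 1$ and $A_i\in\{0,1\}$, the same bound controls the oscillation of both weighted sums $\frac{1}{n}\sum A_i h(\bW_i;\btheta)$ and $\frac{1}{n}\sum h(\bW_i;\btheta)$ uniformly over each ball in probability, so uniform convergence reduces to pointwise convergence at the finitely many centers $\btheta_k$.

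The main obstacle is the stochastic equicontinuity step, which must be carried out under the non-product distribution of $(\ba,\bW)$ induced by complete randomization; fortunately the uniform bound $A_i\le 1$ renders the envelope control essentially immediate, so the iid ULLN machinery transfers with minimal adjustment. The analogous statement for $1-A_i$ follows symmetrically by replacing $\pi$ with $1-\pi$ throughout.
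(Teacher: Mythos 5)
Your proof is correct and its first step---reducing rerandomization to the unconditional design via $P(|Y_n|>\epsilon \mid \mathcal{E}_n) \le P(|Y_n|>\epsilon)/P(\mathcal{E}_n)$ together with $\liminf_n P(\mathcal{E}_n)>0$ (from the $\chi^2_q$ limit of the balance statistic)---is exactly the paper's argument. Where you diverge is in the unconditional step. The paper observes that under its ``simple randomization'' the pairs $(A_i^*,\bW_i)$ are i.i.d., so the class $\{a\,h(\bw;\btheta):\btheta\in\bTheta\}$ inherits compactness, continuity, and the integrable envelope, and Example 19.8 of van der Vaart applies in one line to give the uniform law for $\frac{1}{n}\sum_i A_i^* h(\bW_i;\btheta)$ directly. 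You instead split off $\pi\cdot\frac{1}{n}\sum_i h(\bW_i;\btheta)$ (handled by Jennrich) and treat the fluctuation $\frac{1}{n}\sum_i(A_i-\pi)h(\bW_i;\btheta)$ under fixed-margin complete randomization via conditional Chebyshev plus a finite bracketing cover. Your route is more self-contained and, usefully, covers the fixed-$n_1$ design where the $A_i$ are dependent; the paper's is shorter because it outsources the equicontinuity work to a cited Glivenko--Cantelli result. Two small points to tighten in your version: (i) the Chebyshev step needs a second moment, but the envelope is only assumed integrable, so you should make explicit the standard repair (e.g., bound the conditional variance by $\tfrac{1}{n^2}\sum_i H(\bW_i)^2 \le \tfrac{\max_{i\le n} H(\bW_i)}{n}\cdot\tfrac{1}{n}\sum_i H(\bW_i)\to 0$ a.s., or truncate $H$); (ii) you should confirm which assignment mechanism the paper's rerandomization conditions on---it treats $A_i^*$ as i.i.d.\ Bernoulli$(\pi)$, in which case your fluctuation term is an i.i.d.\ mean-zero average and the argument simplifies further.
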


\begin{proof}
Define $\mathcal{F}_n = \sigma(\bW_1,\dots,\bW_n)$ as the sigma-field generated under Assumption 1. For brevity, we further denote $U_n(A,\bW) = \sup_{\btheta \in \bTheta}||\frac{1}{n} \sum_{i=1}^n A_i h(\bW_i;\btheta) - \pi E[h(\bW_i;\btheta)]||_2$. Our goal is to show $P(U_n(A,\bW) > \varepsilon) \rightarrow 0$ for any $\varepsilon >0$, and it suffices to prove $P(U_n(A,\bW) > \varepsilon | \mathcal{F}_n) \xrightarrow{p} 0$ due to the dominated convergence theorem for convergence in probability \citep[Theorem 1.5.3]{durrett2019probability}. 

Using the definition of rerandomization, we have 
\begin{align*}
    P\left(U_n(A,\bW) > \varepsilon | \mathcal{F}_n\right) &= P\left(U_n(A^*,\bW) > \varepsilon | \mathcal{F}_n, \bI^\top \{\widehat{Var}(\bI)\}^{-1} \bI < t\right) \\
    &= \frac{ P\left(U_n(A^*,\bW) > \varepsilon, \bI^\top \{\widehat{Var}(\bI)\}^{-1} \bI < t | \mathcal{F}_n\right)}{P\left(\bI^\top \{\widehat{Var}(\bI)\}^{-1} \bI < t | \mathcal{F}_n\right)} \\
    &\le \frac{ P\left(U_n(A^*,\bW) > \varepsilon | \mathcal{F}_n\right)}{P\left(\bI^\top \{\widehat{Var}(\bI)\}^{-1} \bI < t | \mathcal{F}_n\right)},
\end{align*}
where $A^*$ represents treatment allocation under simple randomization and $\bI = \frac{1}{N_1} \sum_{i=1}^n A_i^* \bX^r_i - \frac{1}{N_0} \sum_{i=1}^n (1-A_i^*) \bX^r_i$. \underline{For the numerator term}, $A_i^* h(\bW_i;\btheta), i = 1,\dots, n$ are independent and identically distributed. Since we assume that $\bTheta$ is a compact set, $h$ is a continuous function on $\btheta$, and  $h(\bW;\btheta)$ is dominated by an integrable function, then Example 19.8 of \cite{vaart_1998} implies that $A^* h(\bW;\btheta)$ is P-Glinveko-Cantelli, which ensures that $P\left(U_n(A^*,\bW) > \varepsilon\right) \rightarrow 0$. Applying Markov's inequality, we have $P\left(U_n(A^*,\bW) > \varepsilon| \mathcal{F}_n\right) \xrightarrow{p} 0$.
\underline{For the denominator term}, we define $\bX_{in} = (A_i^*-\pi, (A_i^*-\pi)\bX_i^r)$, which are conditionally independent across $i$ given $\mathcal{F}_n$ (due to simple randomization), mean-zero, and satisfy the Lindberg condition due to the finite-second-moment assumption. Lemma~\ref{lem:condCLT} implies the conditionally asymptotic normality of $\bX_{in}$, and we apply the Delta method to get $\sqrt{n}\bI |\mathcal{F}_n \overset{\bullet}{\sim} N(0, \bV_{I,n}) |\mathcal{F}_n$, where $\bV_{I,n} = \frac{1}{\pi(1-\pi)} \frac{1}{n}\sum_{i=1}^n (X_{r,i}-\overline X_r)(X_{r,i}-\overline X_r)^\top$. Since $n\widehat{Var}(\bI) - \bV_{I,n} \xrightarrow{p}$ conditioning on $\mathcal{F}_n$, conditional Slutsky's theorem implies $\bI^\top \{\widehat{Var}(\bI)\}^{-1} \bI|\mathcal{F}_n\overset{\bullet}{\sim}\mathcal{X}^2_q|\mathcal{F}_n$, where $q$ is the dimension of $\bX^r$. Therefore, $P(\bI^\top \{\widehat{Var}(\bI)\}^{-1} \bI < t|\mathcal{F}_n) \rightarrow P(\mathcal{X}^2_q < t|\mathcal{F}_n) = P(\mathcal{X}^2_q < t)$, which is greater than zero since $t > 0$. Combining the convergence of both numerator and denominator terms, we obtain $P\left(U_n(A,\bW) > \varepsilon | \mathcal{F}_n\right) \xrightarrow{p} \frac{0}{P(\mathcal{X}^2_q < t)} = 0$. This completes the proof for $\sup_{\btheta \in \bTheta}||\frac{1}{n} \sum_{i=1}^n A_i h(\bW_i;\btheta) - \pi E[h(\bW_i;\btheta)] ||_2 \xrightarrow{p} 0$.


Following similar steps, we can get $\sup_{\btheta \in \bTheta}||\frac{1}{n} \sum_{i=1}^n (1-A_i) h(\bW_i;\btheta) - (1-\pi) E[h(\bW_i;\btheta)] ||_2 \xrightarrow{p} 0$.
\end{proof}

\begin{lemma}\label{lemma2}
    Let $\bZ_i(a) = h_a(Y_i(a), M_i(a), \bX_i)$  for some $k$-dimensional function $h_a, a \in \{0,1\}$ taking values in the Euclidean space such that $E[||\bZ_i(a)||_2^2] < \infty$. Under rerandomization and Assumption 1, we have
    \begin{align*}
        \frac{1}{\sqrt{n}}\sum_{i=1}^n \left\{A_i\bZ_i(1) + (1-A_i)\bZ_i(0) - \pi E[\bZ_i(1)] - (1-\pi) E[\bZ_i(0)]\right\} \xrightarrow{d}  \boldsymbol{\varepsilon}_k + \bfV_{UI}^\top \bfV_I^{-0.5} \bR_{q,t},
    \end{align*}
    where $\boldsymbol{\varepsilon}_k \sim N(\bzero,\bfV_U -  \bfV_{UI}^\top \bfV_I^{-1}\bfV_{UI})$, $\bR_{q,t} \sim  \bD_q|( \bD^\top  \bD < t)$,  where $\bD \sim N(\bzero, \mathbf{I}_q)$ with
    \begin{align*}
        \bfV_U &= Var\{A_i^*\bZ_i(1) + (1-A_i^*)\bZ_i(0)\} \\
        \bfV_I &= \frac{1}{\pi(1-\pi)}Var(\bX_i^r) \\
        \bfV_{UI} &= Cov\left\{\frac{A_i^*-\pi}{\pi(1-\pi)}(\bX_i^r-E[\bX_i^r]), A_i^*\bZ_i(1) + (1-A_i^*)\bZ_i(0)\right\}\\
        &= Cov\{\bX_i^r, \bZ_i(1)-\bZ_i(0)\}.
    \end{align*}
\end{lemma}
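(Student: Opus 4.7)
The plan is to first pass to the complete-randomization probability space (i.i.d.\ treatment indicators $A_i^*\sim\mathrm{Bern}(\pi)$), where the multivariate CLT gives joint asymptotic normality of the target sum and of $\sqrt{n}\bI$; then apply a linear-projection decomposition that splits the target sum into two asymptotically independent pieces; and finally transfer to rerandomization via the ratio identity used in Lemma~\ref{lemma1}, so that only the piece linearly predicted by $\bI$ is truncated. Concretely, define
\[
\bU_n=\frac{1}{\sqrt{n}}\sum_{i=1}^n\bigl\{A_i^*\bZ_i(1)+(1-A_i^*)\bZ_i(0)-\pi E[\bZ_i(1)]-(1-\pi)E[\bZ_i(0)]\bigr\}.
\]
The standard i.i.d.\ linearization $\sqrt{n}\bI = n^{-1/2}\sum_i[\pi(1-\pi)]^{-1}(A_i^*-\pi)(\bX_i^r-E[\bX_i^r]) + o_p(1)$ together with the multivariate CLT yields $(\bU_n,\sqrt{n}\bI)\xrightarrow{d} N(\bzero,\Sigma)$ under complete randomization, with blocks $\bfV_U,\bfV_I,\bfV_{UI}$ as in the statement; the identity $\bfV_{UI}=\mathrm{Cov}(\bX_i^r,\bZ_i(1)-\bZ_i(0))$ follows by direct expansion using $A_i^*\indep(\bZ_i(0),\bZ_i(1),\bX_i)$.

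Next I use the decomposition
\[
\bU_n=\underbrace{\bigl(\bU_n-\bfV_{UI}^\top\bfV_I^{-1}\sqrt{n}\bI\bigr)}_{\boldsymbol{\varepsilon}_n}+\bfV_{UI}^\top\bfV_I^{-1}\sqrt{n}\bI,
\]
and invoke the continuous mapping theorem together with the previous joint CLT to conclude that $(\boldsymbol{\varepsilon}_n,\sqrt{n}\bI)\xrightarrow{d}(\boldsymbol{\varepsilon}_k,\bD^0)$ under complete randomization, where $\boldsymbol{\varepsilon}_k\sim N(\bzero,\bfV_U-\bfV_{UI}^\top\bfV_I^{-1}\bfV_{UI})$, $\bD^0\sim N(\bzero,\bfV_I)$, and $\boldsymbol{\varepsilon}_k\indep\bD^0$ because the limit is jointly Gaussian with (by construction) zero cross-covariance.

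Finally, to bring in rerandomization I use that, exactly as in Lemma~\ref{lemma1}, $n\widehat{Var}(\bI)\xrightarrow{p}\bfV_I$ and $\bI^\top\widehat{Var}(\bI)^{-1}\bI\xrightarrow{d}\mathcal{X}_q^2$, so $\{\bI^\top\widehat{Var}(\bI)^{-1}\bI<t\}$ is a continuity set with strictly positive limiting probability. Writing the rerandomization-conditional probability as a ratio of unconditional ones and applying the Portmanteau theorem to the joint sequence $(\boldsymbol{\varepsilon}_n,\bfV_I^{-1/2}\sqrt{n}\bI,\bI^\top\widehat{Var}(\bI)^{-1}\bI)$, the conditional limit of $(\boldsymbol{\varepsilon}_n,\bfV_I^{-1/2}\sqrt{n}\bI)$ given the rerandomization event is the law of $(\boldsymbol{\varepsilon}_k,\bD)$ conditional on $\bD^\top\bD<t$, where $\bD=\bfV_I^{-1/2}\bD^0\sim N(\bzero,\bfI_q)$; since $\boldsymbol{\varepsilon}_k\indep\bD$, the conditioning leaves the marginal of $\boldsymbol{\varepsilon}_k$ unchanged, while $\bD\mid\bD^\top\bD<t$ is exactly $\bR_{q,t}$. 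Multiplying the second coordinate back by $\bfV_{UI}^\top\bfV_I^{-1/2}$ and adding $\boldsymbol{\varepsilon}_k$ gives the stated limit. The step I expect to be the main obstacle is this final transfer of joint weak convergence under a nondegenerate conditioning event: the continuity-set property and the ratio identity from Lemma~\ref{lemma1} do the bookkeeping, but the argument must be written carefully, because conditioning on events of vanishing or irregular probability can destroy weak convergence in general.
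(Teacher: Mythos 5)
Your proposal is correct and follows essentially the same route as the paper: pass to the complete-randomization space via the ratio identity, establish joint weak convergence of the target sum, $\sqrt{n}\bI$, and the balance statistic, use the continuity-set/portmanteau argument with the positive limiting probability of the acceptance event, and exploit the Gaussian orthogonal decomposition so that conditioning only affects the projected component. The only cosmetic difference is that you perform the linear-projection decomposition at the finite-sample level while the paper decomposes the Gaussian limit after conditioning; both are equivalent.
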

\begin{proof}
We first introduce some notation as follows:
\begin{align*}
    \bI_n(A) &= \sqrt{n} \left(\frac{\sum_{i=1} A_i \bX_i^r}{\sum_{i=1} A_i} - \frac{\sum_{i=1} (1-A_i) \bX_i^r}{\sum_{i=1} (1-A_i)}\right), \\
    \mathcal A_n &=\Big\{\, \bI_n(A^*)^\top\,\widehat{\mathrm{Var}}(\bI_n(A^*))^{-1}\,\bI_n(A^*)< t\,\Big\},\\
    \bU_n(A) &=\frac{1}{\sqrt n}\sum_{i=1}^n\!\Big\{A_i \bZ_i(1)+(1-A_i)\bZ_i(0)-\pi E\bZ(1)-(1-\pi)E\bZ(0)\Big\},\\
    \bT_n(A) &= \frac{1}{\sqrt n}\sum_{i=1}^n\!\Big\{(A_i-\pi) (\bZ_i(1)-\bZ_i(0)))\Big\},\\
    \bR_n &= \frac{1}{\sqrt n}\sum_{i=1}^n\!\Big\{\pi \bZ_i(1)+(1-\pi)\bZ_i(0)-\pi E\bZ(1)-(1-\pi)E\bZ(0)\Big\}.
\end{align*}
Then we have $\bU_n(A) = \bT_n(A) + \bR_n$. We further define $\mathcal{F}_n = \sigma(\bW_1,\dots,\bW_n)$ as the sigma-field generated under Assumption 1, and $X \overset{d}{=} Y$ if $X$ and $Y$ have the same distribution. By the definition of rerandomization, we have $g(A,\bW)| \mathcal{F}_n \overset{d}{=} g(A^*,\bW)| (\mathcal{F}_n, \mathcal{A}_n)$ for any measurable function $g$.

We first provide the conditional Normal approximation under simple randomization. To utilize Lemma~\ref{lem:condCLT}, we define $\bX_{ni} = \{A_i^*-\pi, (A_i^*-\pi)\bX_i^r, (A_i^*-\pi)(\bZ_i(1)-\bZ_i(0))\}$, which are conditionally independent across $i$ given $\mathcal{F}_n$ (due to simple randomization), mean-zero, and satisfies the Lindberg condition due to the finite-second-moment assumption. Then, Lemma~\ref{lem:condCLT} implies that $\bX_{ni}$ is asymptotically equivalent to a Gaussian distribution, conditioning on $\mathcal{F}_n$. 
Since $\bI_n(A^*), \bT_n(A^*)$ are functions of $\bX_{ni}$, the Delta method implies $(\bI_n(A^*), \bT_n(A^*)) | \mathcal{F}_n \overset{\bullet}{\sim} N(0, \bSigma_n) | \mathcal{F}_n$, where
\[
\bSigma_n=
\begin{bmatrix}
\displaystyle \bV_{I,n} & \displaystyle \bV_{IT,n}\\[4pt]
\displaystyle \bV_{IT,n}^\top & \displaystyle \bV_{T,n}
\end{bmatrix},
\]
with blocks (all $\mathcal F_n$-measurable sample moments)
\[
\bV_{I,n}=\frac{1}{\pi(1-\pi)}\cdot \frac{1}{n}\sum_{i=1}^n (\bX_i^r-\overline \bX_r)(\bX_i^r-\overline \bX_r)^\top,\quad
\bV_{IT,n}=\frac{1}{n}\sum_{i=1}^n (\bX_i^r-\overline \bX_r)\,(\bZ_i(1)-\bZ_i(0))^\top,
\]
\[
\bV_{T,n}=\pi(1-\pi)\cdot \frac{1}{n}\sum_{i=1}^n (\bZ_i(1)-Z_i(0))(\bZ_i(1)-\bZ_i(0))^\top.
\]
Following a similar procedure, we also get $n\widehat{Var}(\bI_n(A^*)) - \bV_{I,n} \xrightarrow{p} 0$ and
\begin{equation}\label{eq: ivi}
\{\bT_n(A^*), \bI_n(A^*)^\top n\widehat{Var}(\bI_n(A^*))^{-1}\bI_n(A^*)\} | \mathcal{F}_n\overset{\bullet}{\sim} \left\{\bT_n(A^*), \bI_n(A^*)^\top \bV_{I,n}^{-1}\bI_n(A^*) \right\} | \mathcal{F}_n
\end{equation}
as they are functions of $\bX_{ni}$.

We next derive the asymptotic distribution of $\bT_n(A^*) | \mathcal{F}_n, \mathcal{A}_n$. Denoting $\bvarepsilon_n = \bT_n(A^*) -  \bV_{IT,n}^\top   \bV_{I,n}^{-1}\bI_n(A^*)$, our result of $(\bI_n(A^*), \bT_n(A^*)) | \mathcal{F}_n \overset{\bullet}{\sim} N(0, \bSigma_n) | \mathcal{F}_n$ implies that $\bvarepsilon_n | \mathcal{F}_n \overset{\bullet}{\sim} N(0, \bV_{T,n} - \bV_{IT,n}^\top \bV_{I,n}^{-1}\bV_{IT,n}) | \mathcal{F}_n$ and that $\varepsilon_n$ is asymptotically independent of $I_n(A^*)$ given $\mathcal{F}_n$. Combined with Equation~\eqref{eq: ivi}, we have
\begin{align*}
    \bT_n(A^*)| \mathcal{F}_n, \mathcal{A}_n &\overset{\bullet}{\sim} \bT_n(A^*)| \mathcal{F}_n, \bI_n(A^*)^\top \bV_{I,n}^{-1}\bI_n(A^*)< t \\
    &\overset{\bullet}{\sim} \bvarepsilon_n| \mathcal{F}_n +  \bV_{IT,n}^\top   \bV_{I,n}^{-1}\bI_n(A^*)| \mathcal{F}_n, \bI_n(A^*)^\top \bV_{I,n}^{-1}\bI_n(A^*)< t \\
    &\overset{\bullet}{\sim}\bvarepsilon_n| \mathcal{F}_n +  \bV_{IT,n}^\top   \bV_{I,n}^{-0.5} \bR_{q,t} | \mathcal{F}_n,
\end{align*}
where the last step comes from $\bV_{I,n}^{-0.5} \bI_n(A^*)| \mathcal{F}_n \overset{\bullet}{\sim} \mathcal{N}(0, \bI_q)| \mathcal{F}_n$ and the definition of $\bR_{q,t}$ in the Lemma statement.

We then obtain
\begin{align*}
    \bU_n(A) |\mathcal{F}_n  &\overset{\bullet}{\sim} \bT_n(A) | \mathcal{F}_n\quad  + \quad  \bR_n  | \mathcal{F}_n \\
    &\overset{\bullet}{\sim} \bT_n(A^*)| (\mathcal{F}_n, \mathcal{A}_n) \quad + \quad \bR_n  | \mathcal{F}_n \\
    &\overset{\bullet}{\sim} (\bvarepsilon_n +  \bV_{IT,n}^\top   \bV_{I,n}^{-0.5} \bR_{q,t} + \bR_n)  | \mathcal{F}_n,
\end{align*}
where the first step is by definition, the second step is implied by rerandomization, and the last step uses the above results. 

Finally, we derive the asymptotic distribution of $\bU_n(A)$.
Conditionally on $\mathcal F_n$, $\bvarepsilon_n$ is independent of $\bI_n(A^*)$ and hence of
$\bR_{q,t}$. Since $\bR_n$ is $\mathcal F_n$-measurable, the triple
$(\bvarepsilon_n,\ \bR_{q,t},\ \bR_n)$ is conditionally independent in the sense that, given $\mathcal F_n$,
the joint conditional characteristic function factorizes:
\begin{align*}
& E\!\left[\exp\!\Big\{i\big(\bt_1^\top \bvarepsilon_n
+ \bt_2^\top \bV_{IT,n}^\top   \bV_{I,n}^{-0.5} \bR_{q,t}  + \bt_3^\top \bR_n)\big)\Big\}\ \Big|\ \mathcal F_n\right]\\
&=
\underbrace{E\!\left[\exp\{\,i \bt_1^\top \bvarepsilon_n\}\ \Big|\ \mathcal F_n\right]}_{=: \psi_{n}(\bt_1)}
\underbrace{E\!\left[\exp\{\,i \bt_2^\top  \bV_{IT,n}^\top   \bV_{I,n}^{-0.5} \bR_{q,t}\}\ \Big|\ \mathcal F_n\right]}_{=: \eta_{n}(\bt_2)}
 \exp\{\,i \bt_3^\top \bR_n\}.    
\end{align*}

For $\psi_{n}(\bt_1)$, since we have obtained $\bvarepsilon_n | \mathcal{F}_n \overset{\bullet}{\sim} N(0, V_{T,n} - \bV_{IT,n}^\top \bV_{I,n}^{-1}\bV_{IT,n}) | \mathcal{F}_n$, and by law of large numbers, $\bV_{T,n} \xrightarrow{p} Var\{(A^*-\pi)(\bZ(1)-\bZ(0)\}$, $\bV_{IT,n} \xrightarrow{p} \bV_{UI}$, $\bV_{I,n} \xrightarrow{p} \bV_I$, then $\psi_{n}(\bt_1) \xrightarrow{p} \exp\!\Big(-\tfrac12 \bt_1^\top \bSigma_e\, \bt_1\Big)$, where  $\bSigma_e = Var\{(A^*-\pi)(\bZ(1)-\bZ(0)\} - \bV_{UI}^\top \bV_I^{-1} \bV_{UI}$.
For $\eta_n(\bt_2)$, a similar prove shows that $\eta_n(\bt_2) = E\!\big[\exp\{\,i \bt_2^\top \bV_{UI}^\top \bV_I^{-1/2} \bR_{q,t}\}\big]$
Finally, by the classical (unconditional) CLT across units,
$\bR_n\ \xrightarrow{d}  N(0,\ \bSigma_r)$ with $\bSigma_r = Var\left\{\pi \bZ(1) + (1-\pi) \bZ(0)\right\}$.

To obtain the unconditional characteristic function of $\bU_n(A)$, we take the unconditional expectation and use the dominated convergence theorem to get
\begin{align*}
    & E\!\left[\exp\!\Big\{i\big(\bt_1^\top \bvarepsilon_n
+ \bt_2^\top \bV_{IT,n}^\top   \bV_{I,n}^{-0.5} \bR_{q,t}  + \bt_3^\top \bR_n)\big)\Big\}\ \right]\\
&\rightarrow \exp\!\Big(-\tfrac12 \bt_1^\top \Sigma_e\, \bt_1\Big) \eta_n(\bt_2) E[\exp\{i\bt_3^\top \bR_n\}] \\
&= \exp\!\Big\{-\tfrac12 \bt_1^\top (\bSigma_e +\bSigma_r) \bt_1\Big\} E\!\big[\exp\{\,i \bt_2^\top \bV_{UI}^\top \bV_I^{-1/2} \bR_{q,t}\}\big].
\end{align*}
This result shows that $\bU_n(A)$ coverages in distribution to $\bvarepsilon_k + \bV_{UI}^\top \bV_I^{-1/2} \bR_{q,t}$ with $\bvarepsilon_k \sim N(0, \bSigma_e +\bSigma_r)$ and $\bR_{q,t}$ being independent  of $\bvarepsilon_k$. 
Since direct algebra shows that 
\begin{align*}
    \bSigma_e +\bSigma_r &= Var\{(A^*-\pi)(\bZ(1)-\bZ(0)\} - \bV_{UI}^\top \bV_I^{-1} \bV_{UI} + Var\left\{\pi \bZ(1) + (1-\pi) \bZ(0)\right\} \\
    &= Var(A^* \bZ(1) + (1-A^*) \bZ(0)) - \bV_{UI}^\top \bV_I^{-1} \bV_{UI},
\end{align*}
we obtain the desired asymptotic distribution.
\end{proof}





\subsubsection{Proof of Theorem 1}

\begin{proof}[Proof of Theorem 1.]
The results under simple randomization can be directly obtained by Chapter 5 of \cite{vaart_1998}. Therefore, we focus on rerandomizaiton in this proof. We denote $\bO_i(a) = \{R_i(a), R_i(a)Y_i(a), a, \bX_i\}$ for $a  = 0,1$.

In the first step, we prove $\widehat{\btheta} \xrightarrow{p} \underline{\btheta}$, which implies $\widehat{\Delta} \xrightarrow{p} \underline{\Delta}$. The estimating equations for M-estimator can be rewritten as
\begin{align*}
    \bzero = \sum_{i=1}^n \bpsi(\bO_i;\btheta) = \sum_{i=1}^n [A_i\bpsi\{\bO_i(1);\btheta\} + (1-A_i) \bpsi\{\bO_i(0);\btheta\}].
\end{align*}
By regularity conditions (1) and (4), we can apply Lemma~\ref{lemma1} by setting $h(\bW_i; \btheta) = \bpsi\{\bO(a);\btheta\}$ and get $\sup_{\btheta \in \bTheta}||\frac{1}{n} \sum_{i=1}^n A_i \bpsi\{\bO_i(1);\btheta\} - \pi E[\bpsi\{\bO_i(1);\btheta\}] ||_2 \xrightarrow{p} 0$ and $\sup_{\btheta \in \bTheta}||\frac{1}{n} \sum_{i=1}^n (1-A_i) \bpsi\{\bO_i(0);\btheta\} - (1-\pi) E[\bpsi\{\bO_i(0);\btheta\}] ||_2 \xrightarrow{p} 0$. This implies that
\begin{align*}
    &\sup_{\btheta \in \bTheta}||\frac{1}{n} \sum_{i=1}^n\bpsi(\bO_i;\btheta) - \pi E[\bpsi(\bO_i(1);\btheta)] - (1-\pi)E[\bpsi(\bO_i(0);\btheta)]   ||_2 \\
    &\le \sup_{\btheta \in \bTheta}||\frac{1}{n} \sum_{i=1}^n A_i \bpsi\{\bO_i(1);\btheta\} - \pi E[\bpsi\{\bO_i(1);\btheta\}] ||_2\\
    &\quad + \sup_{\btheta \in \bTheta}||\frac{1}{n} \sum_{i=1}^n (1-A_i) \bpsi\{\bO_i(0);\btheta\} - (1-\pi) E[\bpsi\{\bO_i(0);\btheta\}] ||_2\\
    &= o_p(1) + o_p(1) \\
    &= o_p(1).
\end{align*}
Since $\frac{1}{n} \sum_{i=1}^n\bpsi(\bO_i;\widehat{\btheta})=\bzero$ and regularity condition (3) implies that $\underline{\btheta}$ is a unique solution to $\pi E[\bpsi(\bO_i(1);\btheta)] + (1-\pi)E[\bpsi(\bO_i(0);\btheta)] = \bzero$, Theorem 5.9 of \cite{vaart_1998} implies that $\widehat{\btheta} \xrightarrow{p} \underline{\btheta}$.

Next, we prove the asymptotic linearity of $\widehat{\btheta}$ following the general approach in Theorem 5.41 of \cite{vaart_1998}. By multivariate Taylor expansion of the function $\sum_{i=1}^n \bpsi\{\bO_i;\btheta\}$ on $\btheta = \underline{\btheta}$, there exists a $\widetilde{\btheta}$ (that depends on $\widehat{\btheta}$) on the line segment between $\widehat{\btheta}$ and $\underline{\btheta}$ such that
\begin{align*}
    \bzero &= \frac{1}{n}\sum_{i=1}^n \bpsi\{\bO_i;\widehat{\btheta}\} \\
    &= \frac{1}{n}\sum_{i=1}^n \bpsi(\bO_i;\underline{\btheta})+ \frac{1}{n}\sum_{i=1}^n \dot{\bpsi}(\bO_i;\underline{\btheta}) (\widehat{\btheta} - \underline{\btheta}) + \sum_{j=1}^{l+1}\frac{1}{2}(\widehat{\btheta} - \underline{\btheta})^\top \left\{\frac{1}{n}\sum_{i=1}^n \Ddot{\psi}_j(\bO_i;\widetilde{\btheta})\right\}(\widehat{\btheta} - \underline{\btheta})\boldsymbol{e}_j,
\end{align*}
where $\dot{\bpsi}(\bO_i;\underline{\btheta}) = \frac{\partial}{\partial \btheta}\bpsi(\bO_i;\btheta)\Big|_{\btheta=\underline{\btheta}}$, $\psi_j$ is the $j$-th element of $\bpsi$, $\Ddot{\psi}_j(\bO_i;\widetilde{\btheta}) = \frac{\partial^2}{\partial\btheta\partial\btheta^\top} \psi_j(\bO_i;\btheta)\Big|_{\btheta=\widetilde{\btheta}}$, and $\boldsymbol{e}_j$ is a $(l+1)$-dimensional vector with the $j$-th entry 1 and the rest 0. By regularity conditions (1), (4), and (5), we can apply Lemma 1 again to obtain $\frac{1}{n}\sum_{i=1}^n\dot{\bpsi}(\bO_i;\underline{\btheta}) = \pi E[\dot{\bpsi}\{\bO_i(1);\underline{\btheta}\}]+ (1-\pi) E[\dot{\bpsi}\{\bO_i(0);\underline{\btheta}\}] + o_p(\bone) = \bfB + o_p(\bone)$. Since $\widehat{\btheta} \xrightarrow{p} \underline{\btheta}$, Regularity (5) implies that, when $n$ is large enough,  $||\frac{1}{n}\sum_{i=1}^n \Ddot{\psi}_j(\bO_i;\widetilde{\btheta})||_2 \le \frac{1}{n}\sum_{i=1}^n v(\bO_i) = O_p(1)$ by law of large numbers. Combination of the above facts implies that 
\begin{equation}\label{eq: proof1-1}
    \frac{1}{n}\sum_{i=1}^n \bpsi(\bO_i;\underline{\btheta}) = - \{\bfB + o_p(1)\} (\widehat{\btheta} - \underline{\btheta}) - O_p(\bone) ||\widehat{\btheta} - \underline{\btheta}||_2^2.
\end{equation}
To obtain the desired asymptotic linearity, we still need to show $\sqrt{n}(\widehat{\btheta} - \underline{\btheta}) = O_p(\bone)$. To see this, regularity condition (2) states $E[||\bpsi(\bO(a);\underline{\btheta})||_2] < \infty$, and regularity condition (3) states $\pi E[\bpsi(\bO_i(1);\underline\btheta)] + (1-\pi)E[\bpsi(\bO_i(0);\underline\btheta)] = \bzero$. We can then apply Lemma~\ref{lemma2} with $\bZ_i(a) = \bpsi(\bO_i(a);\underline{\btheta})$ and get
\begin{align*}
    \frac{1}{\sqrt{n}}\sum_{i=1}^n \bpsi(\bO_i;\underline{\btheta}) = \frac{1}{\sqrt{n}}\sum_{i=1}^n [A_i\bpsi\{\bO_i(1);\underline{\btheta}\} + (1-A_i)\bpsi\{\bO_i(0);\underline{\btheta}\}] = O_p(\bone).
\end{align*}
Multiplying $\sqrt{n}$ to both sides of Equation~\eqref{eq: proof1-1}, we get
\begin{align*}
    O_p(\bone) = - \{\bfB + o_p(1)\} \sqrt{n}(\widehat{\btheta} - \underline{\btheta}) - O_p(\bone) \sqrt{n}||\widehat{\btheta} - \underline{\btheta}||_2^2.
\end{align*}
Since $\widehat{\btheta} - \underline{\btheta} = o_p(\bone)$ and $\bfB$ is invertible (regularity condition (6)), we have
\begin{align*}
     O_p(\bone) = - \{\bfB + o_p(1) + O_p(\bone)o_p(\bone)\} \sqrt{n}||\widehat{\btheta} - \underline{\btheta}||_2,
\end{align*}
implying $\sqrt{n}(\widehat{\btheta} - \underline{\btheta}) = O_p(\bone)$. Therefore, Equation~\eqref{eq: proof1-1} implies
\begin{align*}
    \sqrt{n}(\widehat{\btheta} - \underline{\btheta}) = - \frac{1}{\sqrt{n}}\sum_{i=1}^n \bfB^{-1} \bpsi(\bO_i;\underline{\btheta}) + o_p(\bone),
\end{align*}
which directly yields the asymptotic linearity of $\widehat{\Delta}$.

We then derive the asymptotic distribution for $\widehat{\Delta}$. By regularity condition (2), $IF(\bO(a))$ has finite second moments. Furthermore, $\pi E[IF(\bO(1))] + (1-\pi) E[IF(\bO(0))] = \bzero$ by the definition of $\underline{\btheta}$. Applying Lemma~\ref{lemma2} with $\bZ_i(a) = IF(\bO_i(a))$, we obtain 
\begin{align*}
    \sqrt{n}(\widehat{\Delta} - \underline{\Delta}) \xrightarrow{d} \varepsilon_1 + \bfV_{UI}^\top \bfV_I^{-0.5} \bR_{q,t},
\end{align*}
where $\varepsilon_1 \sim N(0, V-\bfV_{UI}^\top \bfV_I^{-1}\bfV_{UI})$ and $\bR_{q,t} \sim  \bD|( \bD^\top  \bD < t)$,  where $\bD \sim N(\bzero, \mathbf{I}_q)$ and $\varepsilon_1$ is independent of $\bR_{q,t}$. Since $V = Var\{A^* IF(\bO(1))+(1-A^*)IF(\bO(0))\}$, $V$ is the asymptotic variance under simple randomization. Since $\bfV_{UI}^\top \bfV_I^{-0.5}$ is a vector, Lemma A1 of \cite{li2018asymptotic} shows that $\bfV_{UI}^\top \bfV_I^{-0.5} \bR_q \overset{d}{=} \sqrt{\bfV_{UI}^\top \bfV_I^{-1} \bfV_{UI}}\ D_1|\bD^\top \bD < t$. Defining
\begin{align*}
    R^2 &= \frac{\bfV_{UI}^\top \bfV_I^{-1} \bfV_{UI}}{V}\\
    &= \frac{\pi(1-\pi)}{V} Cov[IF\{\bO(1)\}- IF\{\bO(0)\}, \bX^r]Var(\bX^r)^{-1}Cov[\bX^r, IF\{\bO(1)\}- IF\{\bO(0)\}],
\end{align*}
we get the desired asymptotic distribution. 

In the final step, we construct consistent estimators for $V$ and $R^2$ based on sandwich variance estimation. First, define $\widehat{IF}_i$ as the first element of $\left\{ \frac{1}{n}\sum_{i=1}^n \dot{\bpsi}(\bO_i; \widehat{\btheta}) \right\}^{-1} \bpsi(\bO_i; \widehat{\btheta})$ and
\begin{equation}\label{def: vhat}
    \widehat{V} = \frac{1}{n}\sum_{i=1}^n \widehat{IF}_i^2
\end{equation}
Since $\widehat{V}$ is the first-row first-column entry of 
\begin{equation*}
   \left\{ \frac{1}{n}\sum_{i=1}^n \dot{\bpsi}(\bO_i; \widehat{\btheta}) \right\}^{-1} \left\{\frac{1}{n}\sum_{i=1}^n \bpsi(\bO_i; \widehat{\btheta})\bpsi(\bO_i; \widehat{\btheta})\right\}\left\{ \frac{1}{n}\sum_{i=1}^n \dot{\bpsi}(\bO_i; \widehat{\btheta}) \right\}^{-1\ T},
\end{equation*}
To show $\widehat{V} \xrightarrow{p} V$, it suffices to show $\frac{1}{n}\sum_{i=1}^n \dot{\bpsi}(\bO_i; \widehat{\btheta}) \xrightarrow{p} \bfB$ and $\frac{1}{n}\sum_{i=1}^n \bpsi(\bO_i; \widehat{\btheta})\bpsi(\bO_i; \widehat{\btheta}) \xrightarrow{p} \pi E[\bpsi(\bO_i(1); \widehat{\btheta})\bpsi(\bO_i(1); \widehat{\btheta})^\top] + (1-\pi)E[\bpsi(\bO_i(0); \underline{\btheta})\bpsi(\bO_i(0); \underline{\btheta})^\top]$. We next detail the proof for the former argument, and the latter can be derived following a similar way. Denoting $\dot{\bpsi}_{j}(\bO_i;  \widehat{\btheta}) = \dot{\psi}_j(\bO_i;  \widehat{\btheta})$, which is also the transpose of the $j$-th row of $\dot{\bpsi}(\bO_i;  \widehat{\btheta})$. By multivariate Taylor's expansion of $\dot{\bpsi}_{j}(\bO_i;  {\btheta})$ at $\btheta = \underline{\btheta}$, we have
\begin{align*}
    \frac{1}{n} \sum_{i=1}^n \dot{\bpsi}_{j}(\bO_i;  \widehat{\btheta}) - \frac{1}{n} \sum_{i=1}^n \dot{\bpsi}_{j}(\bO_i; \underline{\btheta}) = \frac{1}{n} \Ddot{\bpsi}_j(\bO_i; \widetilde{\btheta}) (\widehat{\btheta}- \underline{\btheta})
\end{align*}
for some $\widetilde{\btheta}$ on the line segment between $\widehat{\btheta}$ and $\underline{\btheta}$. By regularity condition (5) and $\widehat{\btheta}- \underline{\btheta} = o_p(\bone)$, we get $\frac{1}{n} \sum_{i=1}^n \dot{\bpsi}_{j}(\bO_i;  \widehat{\btheta}) - \frac{1}{n} \sum_{i=1}^n \dot{\bpsi}_{j}(\bO_i; \underline{\btheta}) = o_p(\bone)$. Applying Lemma~\ref{lemma1} with $h_a(\bW_i;\btheta) =  \dot{\bpsi}_{j}(\bO_i(a); \underline{\btheta})$, we get $\frac{1}{n} \sum_{i=1}^n \dot{\bpsi}_{j}(\bO_i; \underline{\btheta}) = \bfB + o_p(1)$. Therefore, $\frac{1}{n}\sum_{i=1}^n \dot{\bpsi}(\bO_i; \widehat{\btheta}) \xrightarrow{p} \bfB$. 

For estimating $R^2$, we define 
\begin{equation}\label{def: rhat}
    \widehat{R^2} = \frac{1}{\widehat{V}} \left\{\frac{1}{n}\sum_{i=1}^n \frac{A_i-\pi}{\pi(1-\pi)}\widehat{IF}_i(\bX_i^r-\overline{\bX}^r)^\top\right\}\{n\widehat{Var}(\bI)\}^{-1} \left\{\frac{1}{n}\sum_{i=1}^n \frac{A_i-\pi}{\pi(1-\pi)}\widehat{IF}_i(\bX_i^r-\overline{\bX}^r)\right\}.
\end{equation}
The result $\widehat{R^2} \xrightarrow{p} R^2$ can be proved in a similar way to $\widehat{V} \xrightarrow{p} V$. 
\end{proof}

\subsubsection{Proof of Theorem 2}
\begin{proof}
    Since the ANCOVA and logistic regression estimators can be viewed as special cases of the DR-WLS estimator, for brevity, we focus on the proof for the DR-WLS estimator here. The proof for the mixed-ANCOVA estimator is given separately in Proposition~\ref{prop: mixed-ancova} for a clear presentation. For notation convenience, we denote $\bZ = (1, A, \bX^\top)^\top$, $\bZ^* = (1, A^*, \bX^\top)^\top$, $\bZ(1) = (1, 1, \bX^\top)^\top$, $\bZ(0) = (1, 0, \bX^\top)^\top$, $\balpha = (\alpha_0, \alpha_A, \balpha_X^\top)^\top$, $\widehat\balpha = (\widehat\alpha_0, \widehat\alpha_A, \widehat\balpha_X^\top)^\top$, $\bbeta = (\beta_0, \beta_A, \bbeta_X^\top)^\top$, $\widehat\bbeta = (\widehat\beta_0, \widehat\beta_A, \widehat\bbeta_X^\top)^\top$, $\underline{\balpha}$ as the probability limit of $\widehat{\balpha}$, $\underline{\bbeta}$ as the probability limit of $\widehat{\bbeta}$. Furthermore, we define $e(\bz) = \text{expit}(\underline{\balpha}^\top \bz)$, $\dot{e}(\bz)=\frac{\partial}{\partial \balpha}\text{expit}(\balpha^\top \bz)\Big|_{\balpha = \underline{\balpha}}$, $h(\bz) = g^{-1}(\underline{\bbeta}^\top \bz)$, and $\dot{h}(\bz) = \frac{\partial}{\partial \bbeta}g^{-1}(\underline{\bbeta}^\top \bz)\Big|_{\bbeta = \underline{\bbeta}}$ for $\bz = \bZ, \bZ^*, \bZ(1)$ or $\bZ(0)$. Finally, we denote $\pi_a = \pi^a(1-\pi)^{1-a}$ and $E^*[h(\bO)] = \pi E[h(\bO(1))] + (1-\pi)E[h(\bO(0))]$ for any integrable function $h$.

    Given the above notation, the DR-WLS estimator can be equivalently computed by solving estimating functions with $\btheta = (\Delta, \mu_1, \mu_0, \bbeta^\top, \balpha^\top)^\top$ and
    \begin{align*}
        \bpsi(\bO;\btheta) = \left(\begin{array}{c}
            f(\mu_1,\mu_0) - \Delta \\
            g^{-1}(\bbeta^\top \bZ(1)) - \mu_1 \\
            g^{-1}(\bbeta^\top \bZ(0)) - \mu_0 \\
            \frac{R}{\text{expit}(\balpha^\top \bZ)}\{Y - g^{-1}(\bbeta^\top \bZ)\} \bZ \\
            \{R-\text{expit}(\balpha^\top \bZ)\} \bZ
        \end{array}\right)
    \end{align*}
    since $g$ and $\text{expit}$ are both canonical link functions.

    We next show that the influence function for $\widehat{\Delta}$ is
    \begin{align}
    IF(\bO)& = \left(\dot{\underline{f}}_1\frac{A}{\pi} + \dot{\underline{f}}_0\frac{(1-A)}{1-\pi} - \boldsymbol{c}_1^\top \bZ\right)\frac{R\{Y-h(\bZ)\}}{ e(\bZ)} - \boldsymbol{c}_2^\top \bZ \{R - e(\bZ)\} \nonumber\\
    &\quad + \dot{\underline{f}}_1\{h(\bZ(1)) - \underline{\mu}_1\} + \dot{\underline{f}}_0\{h(\bZ(0)) - \underline{\mu}_0\}, \label{IF_formula}  
    \end{align}
where $\{\dot{\underline{f}}_1,  \dot{\underline{f}}_0\}$ is the gradient of $f$ evaluated at $(\underline{\mu}_1, \underline{\mu}_0)$,
\begin{align*}
    \boldsymbol{c}_1^\top  &= E^*\left[\left\{\dot{\underline{f}}_1\frac{A}{\pi} + \dot{\underline{f}}_0\frac{(1-A)}{1-\pi}\right\}\left(\frac{R}{e(\bZ)}-1\right)\dot{h}(\bZ) \right]^\top E^*\left[\frac{R}{e(\bZ)}\dot{h}(\bZ)\bZ^\top\right]^{-1},\\
    \boldsymbol{c}_2^\top &=  E[\dot{\underline{f}}_1\dot{h}(\bZ(1)) + \dot{\underline{f}}_0\dot{h}(\bZ(0))]^\top E^*\left[\frac{R}{e(\bZ)}\dot{h}(\bZ)\bZ^\top\right]^{-1} E^*\left[\frac{R\dot{e}(\bZ)}{e(\bZ)^2}\{Y - h(\bZ)\}\bZ^\top\right]E^*[ \dot{e}(\bZ)\bZ^\top]^{-1}.
\end{align*}
By the definition of $\bfB$, we have
\begin{align*}
    \bfB = \left(\begin{array}{ccccc}
        -1 & \dot{\underline{f}}_1 &  \dot{\underline{f}}_0 & \bzero & \bzero\\
        0 & -1 & 0 & E[\dot{h}(\bZ(1))] & \bzero \\
        0 & 0 & -1 & E[\dot{h}(\bZ(0))] & \bzero \\
        0 & 0 & 0 &  -E^*[\frac{R}{e(\bZ)} \dot{h}(\bZ) \bZ^\top] & -E^*[\frac{R\dot{e}(\bZ)}{e(\bZ)^2}\{Y-h(\bZ)\}\bZ^\top] \\
        0 & 0 & 0 & 0 & - E^*[\dot{e}(\bZ) \bZ^\top]
    \end{array}\right).
\end{align*}
Using the block matrix inversion formula, the first row of $-\bfB^{-1}$ is \\ $-\boldsymbol{e}_1^\top \bfB^{-1} = (1, \dot{\underline{f}}_1,  \dot{\underline{f}}_0, E[\dot{\underline{f}}_1\dot{h}(\bZ(1)) + \dot{\underline{f}}_0\dot{h}(\bZ(0))]^\top E^*\left[\frac{R}{e(\bZ)}\dot{h}(\bZ)\bZ^\top\right]^{-1}, -\boldsymbol{c}_2^\top)$, which implies
\begin{align*}
    IF(\bO)&=  f(\underline{\mu}_1,\underline{\mu}_0) - \underline{\Delta} + \dot{\underline{f}}_1\{h(\bZ(1)) - \underline{\mu}_1\} + \dot{\underline{f}}_0\{h(\bZ(0)) - \underline{\mu}_0\}\\
    &\quad + E[\dot{\underline{f}}_1\dot{h}(\bZ(1)) + \dot{\underline{f}}_0\dot{h}(\bZ(0))]^\top E^*\left[\frac{R}{e(\bZ)}\dot{h}(\bZ)\bZ^\top\right]^{-1} \bZ  \frac{R\{Y-h(\bZ)\}}{ e(\bZ)} - \boldsymbol{c}_2^\top \bZ \{R - e(\bZ)\}.
\end{align*}
To get the expression~\eqref{IF_formula}, it remains to show
\begin{align*}
    E[\dot{\underline{f}}_1\dot{h}(\bZ(1)) + \dot{\underline{f}}_0\dot{h}(\bZ(0))]^\top E^*\left[\frac{R}{e(\bZ)}\dot{h}(\bZ)\bZ^\top\right]^{-1} \bZ = \dot{\underline{f}}_1\frac{A}{\pi} + \dot{\underline{f}}_0\frac{(1-A)}{1-\pi} - \boldsymbol{c}_1^\top \bZ.
\end{align*}
To see this, we denote $\mathbf{M} = E^*\left[\frac{R}{e(\bZ)}\dot{h}(\bZ)\bZ^\top\right]$ and observe that $\mathbf{M}$ is symmetric since $\dot{h}(\bZ) = \frac{\partial}{\partial x} g^{-1}(x)\big|_{x=\underline{\bbeta}^\top \bZ} \bZ$. In addition, we also have
\begin{align*}
    E[\dot{\underline{f}}_1\dot{h}(\bZ(1)) + \dot{\underline{f}}_0\dot{h}(\bZ(0))] &= E^*\left[\left\{\dot{\underline{f}}_1\frac{A}{\pi} + \dot{\underline{f}}_0\frac{(1-A)}{1-\pi}\right\} \dot{h}(\bZ)\right] \\
    \mathbf{M} \left(\frac{\dot{\underline{f}}_0}{1-\pi}, \frac{\dot{\underline{f}}_1}{\pi}-\frac{\dot{\underline{f}}_0}{1-\pi}, \bzero\right)^\top &= E^*\left[\left\{\dot{\underline{f}}_1\frac{A}{\pi} + \dot{\underline{f}}_0\frac{(1-A)}{1-\pi}\right\} \frac{R}{e(\bZ)}\dot{h}(\bZ)\right].
\end{align*}
Then, by the definition of $\boldsymbol{c}_1$, we have
\begin{align*}
    \boldsymbol{c}_1^\top \bZ &=E^*\left[\left\{\dot{\underline{f}}_1\frac{A}{\pi} + \dot{\underline{f}}_0\frac{(1-A)}{1-\pi}\right\}\frac{R}{e(\bZ)} \dot{h}(\bZ)\right]^\top \mathbf{M}^{-1} \bZ - E^*\left[\left\{\dot{\underline{f}}_1\frac{A}{\pi} + \dot{\underline{f}}_0\frac{(1-A)}{1-\pi}\right\} \dot{h}(\bZ)\right]^\top  \mathbf{M}^{-1} \bZ \\
    &= \left(\frac{\dot{\underline{f}}_0}{1-\pi}, \frac{\dot{\underline{f}}_1}{\pi}-\frac{\dot{\underline{f}}_0}{1-\pi}, \bzero\right) \mathbf{M}^\top \mathbf{M}^{-1} \bZ - E[\dot{\underline{f}}_1\dot{h}(\bZ(1)) + \dot{\underline{f}}_0\dot{h}(\bZ(0))]\mathbf{M}^{-1} \bZ \\
    &= \dot{\underline{f}}_1\frac{A}{\pi} + \dot{\underline{f}}_0\frac{(1-A)}{1-\pi} - E[\dot{\underline{f}}_1\dot{h}(\bZ(1)) + \dot{\underline{f}}_0\dot{h}(\bZ(0))]\mathbf{M}^{-1} \bZ,
\end{align*}
which completes the derivation of Equation~\eqref{IF_formula}.

Given the influence function~\eqref{IF_formula}, we get
\begin{align*}
    IF(\bO(1)) - IF(\bO(0)) &= \left(\frac{\dot{\underline{f}}_1}{\pi} - \boldsymbol{c}_1^\top \bZ(1) \right)\frac{R(1)\{Y(1)-h(\bZ(1))\}}{ e(\bZ(1))} - \boldsymbol{c}_2^\top \bZ(1)\{R(1)- e(\bZ(1))\} \\
    & -  \left(\frac{\dot{\underline{f}}_0}{1-\pi} - \boldsymbol{c}_1^\top \bZ(0) \right)\frac{R(0)\{Y(0)-h(\bZ(0))\}}{ e(\bZ(0))} + \boldsymbol{c}_2^\top \bZ(0)\{R(0)- e(\bZ(0))\}.
\end{align*}
To compute $Cov(IF(\bO(1)) - IF(\bO(0)), \bX^r)$, we observe the following facts. If the missingness model is correctly specified, we have $\boldsymbol{c}_1 = \bzero$ and $E[\boldsymbol{c}_2^\top \bZ(a)\{R- e(\bZ(a))\}|\bX^r] = 0$. If the outcome regression model is correctly specified, we have $\boldsymbol{c}_2 = \bzero$ and under MAR,
\begin{align*}
    E\left[\boldsymbol{c}_1^\top \bZ(a)\frac{R(a)\{Y(a)-h(\bZ(a))\}}{ e(\bZ(a))}\bigg| \bX^r\right] = E\left[\boldsymbol{c}_1^\top \bZ(a)\frac{E[R(a)|\bZ(a)]\{E[Y(a)|\bZ(a)]-h(\bZ(a))\}}{ e(\bZ(a))}\bigg| \bX^r\right] = 0.
\end{align*}
As a result, if either of the two working models is correctly specified, we have
\begin{align}
    & Cov\{IF(\bO(1)) - IF(\bO(0)), \bX^r\}\nonumber\\
    &= Cov\left\{\frac{\dot{\underline{f}}_1}{\pi} \frac{R(1)\{Y(1)-h(\bZ(1))\}}{ e(\bZ(1))} -  \frac{\dot{\underline{f}}_0}{1-\pi} \frac{R(0)\{Y(0)-h(\bZ(0))\}}{ e(\bZ(0))}, \bX^r\right\}.\label{proof1-cov}
\end{align}
Given the above expression, if the outcome model is correctly specified, i.e., $E[Y(a)|\bZ(a)] = h(\bZ(a))$, we directly get $Cov\{IF(\bO(1)) - IF(\bO(0)), \bX^r\} = 0$, which implies $R^2=0$ by definition. If, alternatively, $\pi = 0.5$ and $\dot{\underline{f}}_1 = - \dot{\underline{f}}_0 = 1$, we get
\begin{align*}
    & Cov\{IF(\bO(1)) - IF(\bO(0)), \bX^r\}\\
    &= 4\ Cov\left\{\pi \frac{R(1)\{Y(1)-h(\bZ(1))\}}{ e(\bZ(1))} + (1-\pi) \frac{R(0)\{Y(0)-h(\bZ(0))\}}{ e(\bZ(0))}, \bX^r\right\}.
\end{align*}
Since $\pi E[\bpsi(\bO(1);\underline{\btheta}] + (1-\pi) E[\bpsi(\bO(0);\underline{\btheta}] =\bzero$, the definition of $\bpsi$ implies that
\begin{align*}
    E\left[\pi \frac{R(1)\{Y(1)-h(\bZ(1))\}}{ e(\bZ(1))} + (1-\pi) \frac{R(0)\{Y(0)-h(\bZ(0))\}}{ e(\bZ(0))}\right] &= \bzero \\
    E\left[\left\{\pi \frac{R(1)\{Y(1)-h(\bZ(1))\}}{ e(\bZ(1))} + (1-\pi) \frac{R(0)\{Y(0)-h(\bZ(0))\}}{ e(\bZ(0))}\right\}\bX^r\right] &= \bzero,
\end{align*}
which also yields $Cov\{IF(\bO(1)) - IF(\bO(0)), \bX^r\} = \bzero$ and hence $R^2=0$.

Finally, if we include treatment-covariate interaction terms, then update the definition to $\bZ = (1, A, \bX^\top, A\bX^\top)^\top$, $\bZ^* = (1, A^*, \bX^\top, A\bX^\top)^\top$, $\bZ(1) = (1, 1, \bX^\top, A\bX^\top)^\top$, $\bZ(0) = (1, 0, \bX^\top, A\bX^\top)^\top$ and update $\balpha, \bbeta$ to further include $\balpha_{AX}, \bbeta_{AX}$. Then the estimating function and influence function remain unchanged. Following a similar proof, we get the same expression of $Cov\{IF(\bO(1)) - IF(\bO(0)), \bX^r\}$ as in Equation~\eqref{proof1-cov}. By the definition of $\underline{\btheta}$ and $\bpsi$, we have
\begin{align*}
    E\left[\frac{R(a)\{Y(a)-h(\bZ(a))\}}{ e(\bZ(a))}\right] &= \bzero \\
    E\left[\frac{R(a)\{Y(a)-h(\bZ(a))\}}{ e(\bZ(a))}\bX^r\right] &= \bzero,
\end{align*}
which implies $ Cov\{IF(\bO(1)) - IF(\bO(0)), \bX^r\} = \bzero$ and hence $R^2=0$.
\end{proof}

\begin{proposition}\label{prop: mixed-ancova}
    Theorem 2 in the main paper holds for the mixed-ANCOVA estimator. 
\end{proposition}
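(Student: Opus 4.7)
The plan is to mirror the structure of the proof of Theorem 2, but specialized to the mixed-ANCOVA estimating equations. First, I would cast the mixed-ANCOVA estimator as an M-estimator with parameter vector $\btheta = (\Delta, \mu_1, \mu_0, \bbeta^\top)^\top$ and stacked estimating function whose last block is the weighted least-squares score for the working regression $E[Y \mid A, \bX]$ on the responders (using only observations with $R=1$, which gives the ``mixed'' design that includes a subset of treatment-covariate interactions), and whose first three components encode $\mu_a = g^{-1}(\underline{\bbeta}^\top \bZ(a))$ and $\Delta = f(\mu_1,\mu_0)$ as in the DR-WLS proof. This lets me reuse Theorem 1 once I verify the regularity conditions, so the asymptotic linearity and the formula for $R^2$ in terms of $\mathrm{Cov}\{IF(\bO(1))-IF(\bO(0)), \bX^r\}$ are immediate.

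Next I would compute the influence function by inverting the block-triangular matrix $\bfB$, exactly as in the DR-WLS derivation. Because there is no missingness model block in mixed-ANCOVA (equivalently, the IPW weights are identically $1$), $\bfB$ is smaller and the term involving $\boldsymbol{c}_2$ drops out. The resulting influence function has the form
\begin{align*}
IF(\bO) &= \Bigl(\dot{\underline{f}}_1 \tfrac{A}{\pi} + \dot{\underline{f}}_0 \tfrac{1-A}{1-\pi} - \boldsymbol{c}_1^\top \bZ\Bigr) R\{Y - h(\bZ)\} \\
&\quad + \dot{\underline{f}}_1\{h(\bZ(1)) - \underline{\mu}_1\} + \dot{\underline{f}}_0\{h(\bZ(0)) - \underline{\mu}_0\},
\end{align*}
where $\boldsymbol{c}_1$ is defined by an analogous block-inversion identity. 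I would then compute $IF(\bO(1)) - IF(\bO(0))$ and reduce the covariance with $\bX^r$ to the same two terms as in Theorem 2's proof.

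The verification of $R^2=0$ then proceeds case by case. If the outcome regression is correctly specified, the tower property together with MAR immediately gives $\mathrm{Cov}\{IF(\bO(1))-IF(\bO(0)), \bX^r\} = 0$, exactly as in equation~\eqref{proof1-cov}. If $\pi = 1/2$ and $\dot{\underline{f}}_1 = -\dot{\underline{f}}_0 = 1$, I would invoke the population version of the OLS/weighted-score normal equations that define $\underline{\bbeta}$: because the design vector includes $1$, the covariates $\bX$ (a subset of which comprise $\bX^r$), and enough treatment-covariate interaction terms, the identity $E^*[R\{Y-h(\bZ)\}\bZ] = \bzero$ forces both $E^*[R\{Y-h(\bZ)\}] = 0$ and $E^*[R\{Y-h(\bZ)\}\bX^r] = \bzero$, and under covariate-treatment balance these translate to the required zero covariance. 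For the case with full treatment-covariate interactions, the argument is identical to the last paragraph of the Theorem 2 proof.

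The main obstacle will be handling the ``mixed'' design carefully: mixed-ANCOVA need not include interactions for every covariate, so the normal equations only deliver orthogonality with respect to the covariates that do appear interacted with $A$. I therefore expect to need to split the argument depending on whether a component of $\bX^r$ enters the interaction block, and to use the symmetry at $\pi=1/2$ (where $A/\pi - (1-A)/(1-\pi) = 2(2A-1)$) together with the main-effect normal equations for the non-interacted covariates to close the gap. This careful bookkeeping, rather than any new probabilistic idea, is where most of the work will go.
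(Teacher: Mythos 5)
There is a genuine gap, and it stems from a misreading of what the ``mixed-ANCOVA'' estimator is. In this paper, \emph{mixed} refers to the linear mixed-effects model (random cluster intercepts) used for cluster-randomized data, not to a design that ``mixes in'' a subset of treatment--covariate interactions fit on responders. The paper's proof works at the cluster level: the estimating function stacks the GLS score $\bZ^\top\bfV(\bY-\bZ\bbeta)$ with two additional equations for the variance components $(\sigma^2,\tau^2)$, where $\bfV=(\sigma^2\bfI_N+\tau^2\bone_N\bone_N^\top)^{-1}$, and the influence function for $\widehat\Delta$ involves the cluster-aggregated residual $\frac{1}{\underline\sigma_C^2+N\underline\tau_C^2}\bone_N^\top(\bY-\bZ\underline\bbeta)$ normalized by $E[N/(\underline\sigma_C^2+N\underline\tau_C^2)]^{-1}$. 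None of this appears in your plan: your estimating equations are individual-level, carry a missingness indicator $R$ that is not part of the mixed-ANCOVA block, and omit the variance-component nuisance parameters entirely.

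The step you would not be able to close is the reduction of $Cov\{IF(\bO(1))-IF(\bO(0)),\bX^r\}$ to the population normal equations. For the actual estimator this hinges on two cluster-specific facts: (i) $\bX^r$ is a cluster-level covariate entering the design as $\bX^r\bone_N^\top$, and (ii) $\bone_N^\top\bfV=\frac{1}{\sigma^2+N\tau^2}\bone_N^\top$, so the GLS normal equation for that column reads $\pi E[\frac{1}{\underline\sigma_C^2+N\underline\tau_C^2}\bX^r\bone_N^\top(\bY(1)-\bZ(1)\underline\bbeta)]+(1-\pi)E[\cdots]=\bzero$, which exactly matches the $N$-dependent weight appearing in $IF(\bO(1))-IF(\bO(0))$. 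Your version of the orthogonality argument (unweighted residuals, individual-level scores) would not produce terms of this form, so the cancellation under $\pi=1/2$, $\dot{\underline{f}}_1=-\dot{\underline{f}}_0=1$ would fail for the estimator actually being analyzed. The worry you flag about which covariates enter the interaction block is not the relevant difficulty; the relevant difficulty is the random cluster size $N$ entering the GLS weights, which is resolved precisely because $\bX^r$ is constant within cluster.
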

\begin{proof}
    We inherit all notations from the proof of Theorem 2. 
    Since we are handling cluster data, we define $\bY = (\bY_{.1}, \dots, \bY_{.N})^\top$, $\bZ = (\bone_N, A\bone_N, \mathbf{X})$, $\bZ(1) = (\bone_N, \bone_N, \mathbf{X})$, $\bZ = (\bone_N, \bzero_N, \mathbf{X})$, where $\mathbf{X} = (\bX_{.1}, \dots, \bX_{.N})^\top$.
    By \cite{wang2023CRT}, the estimating equation for the mixed-model ANCOVA estimator is 
    \begin{equation}\label{eq: lmm-psi-C}
    \bpsi(\bO,\btheta) = \left(\begin{array}{c}
    f(\mu_1,\mu_0) -\Delta\\
    \mu_1 - \frac{1}{N}\bone_{N}^\top\bZ(1)\bbeta \\
    \mu_0 - \frac{1}{N}\bone_{N}^\top\bZ(0)\bbeta \\
    \bZ^{\top}\bfV(\bY - \bZ\bbeta)  \\
    -\textrm{tr}(\bfV) + (\bY - \bZ\bbeta)^\top \bfV^2(\bY - \bZ\bbeta)\\
  -\bone_{N}^\top\bfV\bone_{N} + (\bY - \bZ\bbeta)^\top \bfV\bone_{N}\bone_{N}^\top \bfV(\bY - \bZ\bbeta)    
   \end{array}\right),
\end{equation}
where $\bfV = (\sigma^2 \bfI_N + \tau^2 \bone_N \bone_N^\top)^{-1} = \frac{1}{\sigma^2} \bfI_N -  \frac{\tau^2}{\sigma^2(\sigma^2+N\tau^2)}\bone_{N}\bone_{N}^\top$, and the influence function for $\widehat{\Delta}$ is
\begin{align*}
    IF(\bO) &= \left\{\dot{\underline{f}}_1\frac{A}{\pi} + \dot{\underline{f}}_0\frac{(1-A)}{1-\pi}\right\} E\left[\frac{N}{ \underline\sigma_C^2 + N\underline\tau_C^2}\right]^{-1} \frac{1}{\underline\sigma_C^2 + N\underline\tau_C^2} \bone_N^\top (\bY - \bZ\underline{\bbeta})  \\
    &\quad + \dot{\underline{f}}_1\left\{\frac{1}{N}\bone_{N}^\top\bZ(1)\underline{\bbeta} - \underline{\mu}_1\right\} + \dot{\underline{f}}_0\left\{\frac{1}{N}\bone_{N}^\top\bZ(0)\underline{\bbeta} - \underline{\mu}_0\right\}.
\end{align*}
Then, we get
\begin{align*}
    IF(\bO(1)) - IF(\bO(0)) & = \dot{\underline{f}}_1\frac{1}{\pi}E\left[\frac{N}{ \underline\sigma_C^2 + N\underline\tau_C^2}\right]^{-1} \frac{1}{\underline\sigma_C^2 + N\underline\tau_C^2} \bone_N^\top \{\bY(1) - \bZ(1)\underline{\bbeta}\} \\
    &\quad - \dot{\underline{f}}_0\frac{1}{1-\pi}E\left[\frac{N}{ \underline\sigma_C^2 + N\underline\tau_C^2}\right]^{-1} \frac{1}{\underline\sigma_C^2 + N\underline\tau_C^2} \bone_N^\top \{\bY(0) - \bZ(0)\underline{\bbeta}\}.
\end{align*}
Given the definition of $\bpsi$, since $\bX^r$ is a cluster-level covariate included in $\mathbf{X}$ as $\bX^r \bone_{N}^\top$,  we have
\begin{align*}
    \bzero &= \pi E[\bX^r \bone_{N}^\top \bfV (\bY(1) - \bZ(1)\underline{\bbeta})] + (1-\pi) E[\bX^r \bone_{N}^\top \bfV (\bY(0) - \bZ(0)\underline{\bbeta})] \\
    &= \pi E\left[\frac{1}{\underline\sigma_C^2 + N\underline\tau_C^2} \bX^r \bone_N^\top (\bY(1) - \bZ(1)\underline{\bbeta}) \right] + (1-\pi) E\left[\frac{1}{\underline\sigma_C^2 + N\underline\tau_C^2} \bX^r \bone_N^\top (\bY(0) - \bZ(0)\underline{\bbeta}) \right].
\end{align*}
As a result, if $\dot{\underline{f}}_1 = - \dot{\underline{f}}_0 = 1$ and $\pi = 1-\pi$, we get 
\begin{align*}
   & E[\{IF(\bO(1)) - IF(\bO(0))\}\bX^r] \\
   &= 4E\left[\frac{N}{ \underline\sigma_C^2 + N\underline\tau_C^2}\right]^{-1} \sum_{a=0}^1 \pi^a(1-\pi)^a E\left[\frac{1}{\underline\sigma_C^2 + N\underline\tau_C^2} \bX^r \bone_N^\top (\bY(a) - \bZ(a)\underline{\bbeta}) \right] \\
   &= \bzero.
\end{align*}
Following a similar proof, we also get $E[IF(\bO(1)) - IF(\bO(0))] = 0$. This implies, under condition (1), we have $Cov(IF(\bO(1)) - IF(\bO(0)), \bX^r) = \bzero$. 

If the mixed-ANCOVA model is correctly specified, we directly have $E[\bY(a)|\bZ(a)] = \bZ(a)\bbeta$, which implies $Cov(IF(\bO(1)) - IF(\bO(0)), \bX^r) = \bzero$. 

If the mixed-ANCOVA model further includes treatment-covariate interaction terms, the formula of $\{IF(\bO(1)) - IF(\bO(0))\}$ remains the same except that $\bZ(1) = (\bone_N, \bone_N, \mathbf{X}, \mathbf{X})$ and $\bZ(0) = (\bone_N, \bzero_N, \mathbf{X}, \bzero)$. Then the definition of $\bpsi$ implies that $E\left[\frac{1}{\underline\sigma_C^2 + N\underline\tau_C^2} \bX^r \bone_N^\top (\bY(a) - \bZ(a)\underline{\bbeta}) \right] = \bzero$, resulting in $Cov(IF(\bO(1)) - IF(\bO(0)), \bX^r) = \bzero$.
\end{proof}

\subsubsection{Rerandomization based on other balance criteria (distance functions)}
Consider rerandomization using $\bI^\top \widehat{\bfH}^{-1} \bI < t$. We assume $n\widehat{\bfH} \xrightarrow{p} \underline{\bfH}$. 
Since we have shown that $\bI \xrightarrow{d} N(0,\bfV_I)$, then the Delta method implies that $\bI^\top \bfV_{I}^{1/2}\underline{\bfH}^{-1} \bfV_{I}^{1/2}\bI \xrightarrow{d} \bD^\top \bfV_{I}^{1/2} \underline{\bfH}^{-1} \bfV_{I}^{1/2}\bD$, which is a generalized chi-squared distribution. By the assumption $n\widehat{\bfH} \xrightarrow{p} \underline{\bfH}$, the Slutsky's theorem implies that $\bI^\top \widehat{\bfH}^{-1} \bI \xrightarrow{d} \bD^\top \bfV_{I}^{1/2} \underline{\bfH}^{-1} \bfV_{I}^{1/2}\bD$, the same generalized chi-squared distribution. Therefore, $P(\bD^\top \bfV_{I}^{1/2} \underline{\bfH}^{-1} \bfV_{I}^{1/2}\bD < t) > 0$ if $t > 0$. Therefore, Lemma~\ref{lemma1} remains the same except that $\widehat{Var}(\bI)$ is replaced by $\widehat{\bfH}$ in the proof. Similarly, Lemma~\ref{lemma2} remains the same except that $\bR_{qt}$ is substituted by $\widetilde{\bR}_{\bfH,t} = \bD|(\bD^\top\bfV_{I}^{1/2} \underline{\bfH}^{-1} \bfV_{I}^{1/2}\bD <t)$. Following the proof of Theorem 1, we get the desired weak convergence result. 

To show rerandomization does not increase variance, we first give a proposition below.

\begin{proposition}
    Let $X, Y$ be two independent, positive random variables. Then (1) $E[X|X<t]$ is nondecreasing in $t$ and (2) $E[X|X+Y<t] \le E[X|X<t]$ for any $t >0$.
\end{proposition}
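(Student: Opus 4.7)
The plan is to establish part (1) directly by a mixture decomposition, and then deduce part (2) by conditioning on $Y$ and invoking independence to reduce to a statement of the form proved in (1).

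For part (1), I would fix $0 < s < t$ with $P(X < s) > 0$ (otherwise the inequality is vacuous) and write the law of total expectation
\begin{equation*}
E[X \mid X < t] = P(X < s \mid X < t)\, E[X \mid X < s] + P(s \le X < t \mid X < t)\, E[X \mid s \le X < t].
\end{equation*}
Since $X$ is positive and $X < s$ on the first event, $E[X \mid X < s] \le s$, whereas $X \ge s$ on the second event gives $E[X \mid s \le X < t] \ge s \ge E[X \mid X < s]$. Therefore the right-hand side is a convex combination of $E[X \mid X < s]$ and a quantity no smaller than $E[X \mid X < s]$, so $E[X \mid X < t] \ge E[X \mid X < s]$, which is the desired monotonicity. (If $P(s \le X < t) = 0$ the two conditional expectations coincide trivially.)

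For part (2), I would condition on $Y$. By independence and positivity of $Y$, the joint event $\{X+Y < t\}$ decomposes as
\begin{equation*}
E[X \mathbf{1}\{X+Y < t\}] = \int_0^t E[X \mathbf{1}\{X < t-y\}]\, dF_Y(y), \quad P(X+Y<t) = \int_0^t P(X < t-y)\, dF_Y(y),
\end{equation*}
so that
\begin{equation*}
E[X \mid X+Y < t] = \frac{\int_0^t E[X \mid X < t-y]\, P(X < t-y)\, dF_Y(y)}{\int_0^t P(X < t-y)\, dF_Y(y)}.
\end{equation*}
Because $Y > 0$, the range of integration gives $t - y < t$, and part (1) yields $E[X \mid X < t-y] \le E[X \mid X < t]$ for every such $y$ (with $P(X < t-y) > 0$ contributing to the integrals). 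Factoring the uniform upper bound out of the numerator then gives $E[X \mid X+Y < t] \le E[X \mid X < t]$.

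The only substantive obstacle I anticipate is the careful handling of degenerate cases, in particular the regions of $y$ on which $P(X < t-y) = 0$ (which contribute zero to both integrals and can be discarded) and the threshold $t$ chosen small enough that $P(X+Y<t)$ could vanish (in which case the claim is vacuous). With those caveats, both steps are essentially one-line consequences of the monotonicity established in (1) combined with independence.
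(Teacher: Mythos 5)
Your proof is correct, and part (1) takes a genuinely different route from the paper. The paper proves monotonicity of $t \mapsto E[X \mid X < t]$ by writing it as $\int_0^t x f(x)\,dx / \int_0^t f(x)\,dx$ and differentiating in $t$, which tacitly assumes $X$ admits a density $f$; your mixture decomposition
\begin{equation*}
E[X \mid X < t] = P(X < s \mid X < t)\, E[X \mid X < s] + P(s \le X < t \mid X < t)\, E[X \mid s \le X < t]
\end{equation*}
together with the elementary bounds $E[X \mid X < s] \le s \le E[X \mid s \le X < t]$ avoids calculus entirely and works for an arbitrary positive random variable, so it actually proves the proposition at the stated level of generality (and is what is needed in the application, where $X = \widetilde{D}_i^2$ does have a density, but the extra generality costs nothing). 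For part (2) the two arguments are essentially identical: both condition on $Y$, apply part (1) in the form $E[X \mathbf{1}\{X < t-y\}] \le E[X \mid X < t]\, P(X < t - y)$ for each $y \in (0,t)$, and factor the uniform bound out of the numerator; your version is stated with $dF_Y$ rather than a density $f(y)\,dy$, which again is slightly more general. Your handling of the degenerate cases ($P(X < t-y) = 0$ contributing zero to both integrals, and vacuity when $P(X+Y<t)=0$) is the right bookkeeping and is in fact more careful than the paper's.
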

\begin{proof}
For (1), we have
\begin{align*}
    \frac{d}{dt}E[X|X<t] &= \frac{d}{dt} \frac{\int_{0}^t x f(x)dx}{\int_{0}^t  f(x)dx} = \frac{tf(t)\int_{0}^t  f(x)dx - \int_{0}^t x f(x)dx f(t)}{\left(\int_{0}^t  f(x)dx\right)^2} = \frac{f(t)\int_{0}^t  (t-x)f(x)dx}{\left(\int_{0}^t  f(x)dx\right)^2} \ge 0.
\end{align*}
For (2), (1) implies that $E[X|X<t-y] \le E[X|X<t]$ for any $y \ge0$, which implies
\begin{align*}
    \int_{0}^{t-y} x f(x) dx \le E[X|X<t] \int_{0}^{t-y} f(x) dx.
\end{align*}
Therefore, we have
\begin{align*}
    E[X|X+Y<t] &= \frac{\int_{0}^t \int_{0}^{t-y} x f(x) dx f(y) dy}{\int_{0}^t \int_{0}^{t-y}  f(x) dx f(y) dy} \\
    &\le \frac{\int_{0}^t \int_{0}^{t-y} E[X|X<t] \int_{0}^{t-y} f(x) dx dx f(y) dy}{\int_{0}^t \int_{0}^{t-y}  f(x) dx f(y) dy} \\
    &= E[X|X<t].
\end{align*}
\end{proof}
Returning to $\widetilde{\bR}_{\bfH,t}$, we aim to show $Var(\bD)-Var(\widetilde{\bR}_{\bfH,t})$ is positive semi-definite. Letting $\mathbf{P}\mathbf{\Lambda}\mathbf{P}^\top$ be the singular value decomposition of $\bfV_{I}^{1/2} \underline{\bfH}^{-1} \bfV_{I}^{1/2}$, we denote $\widetilde{\bD} = \mathbf{P}^\top \bD$, we have $\widetilde{\bD} \sim N(0,\bfI_q)$ and  $Var(\widetilde{\bR}_{\bfH,t}) = \mathbf{P}^\top Var(\widetilde{\bD}| \widetilde{\bD}^\top \mathbf{\Lambda}\widetilde{\bD} < t)\mathbf{P}$, which implies that it suffices to prove $Var(\widetilde{\bD}) - Var(\widetilde{\bD}| \widetilde{\bD}^\top \mathbf{\Lambda}\widetilde{\bD} < t)$ is positive definite. 
Since $\widetilde{\bD} \sim N(0,\bfI_q)$, we have $E[\widetilde{D}_i|\widetilde{D}_j= d_j, \widetilde{\bD}^\top \mathbf{\Lambda}\widetilde{\bD} < t] = E[\widetilde{D}_i|\sum_{j'\ne j} \lambda_{j'} D_{j'}^2 < t - \lambda_j d_j^2] = E[-\widetilde{D}_i|\sum_{j'\ne j} \lambda_{j'} D_{j'}^2 < t - \lambda_j d_j^2] = 0$, which implies $E[\widetilde{D}_i\widetilde{D}_j|\widetilde{\bD}^\top \mathbf{\Lambda}\widetilde{\bD} < t] = 0$ for $i\ne j$.  Therefore, it is sufficient to show $E[\widetilde{D}_i^2|\widetilde{\bD}^\top \mathbf{\Lambda}\widetilde{\bD} < t] \le E[\widetilde{D}_i^2]$. Now, using the above proposition with $X = \widetilde{D}_i^2$ and $Y = \sum_{j\ne i} \lambda_j \widetilde{D}_j^2$, we have $X$ is independent of $Y$, and then $E[\widetilde{D}_i^2|\widetilde{\bD}^\top \mathbf{\Lambda}\widetilde{\bD} < t] \le E[\widetilde{D}_i^2|\lambda_i\widetilde{D}_i^2<t] \le E[\widetilde{D}_i^2]$, which completes the proof.


\subsubsection{A counterexample showing rerandomization may increase variance in finite-samples}

We consider a randomized experiment with $N_1 = 1$ and $N_0=2$. Let $X^r$ follow a dichotomous distribution with $P(X^r=2) = p,\ p < 0.5$ and $P(X^r=-1)=1-p$, $Y(a) \sim N(0, 1 + M a I\{X^r = -1\})$ with $M$ being a positive constant, and $t=1$ for rerandomization. For the unadjusted estimator, $\widehat{\Delta} = \sum_{i=1}^n {A_iY_i/N_1 - (1-A_i)Y_i/N_0}$, we denote its variance under simple randomization as $V^*$ and under rerandomization as $V$. Without loss of generality, our derivation is conditioned on $N_1$ being a fixed number. 

We next enumerate all possible values of $X_{1:3} = (X_1^r,X_2^r, X_3^r)$. When $X_{1:3} = (2,2,2)$ or $(-1,-1,-1)$, the covariance imbalance $I$ is always zero, yielding the same variance under simple or rerandomization. When $X_{1:3} = (-1,-1,2)$, we have $I = 3$ (if the one with $X^r =2$ is assigned to treatment) or $I=-1.5$ (if the one with $X^r =2$ is assigned to control). Since $\widehat{V}(I) = 3$, only the second scenario will be picked by rerandomization with $t=1$. Therefore, conditioning on $X_{1:3} = (-1,-1,2)$, the variance of $\widehat{\Delta}$ is $ M+1.5$ under rerandomization. Since the other scenario gives variance 1.5, the variance of $\widehat{\Delta}$ under simple randomization is $1.5/3 + 2(M+1.5)/3 = 2M/3 + 1.5$. Therefore, these covariate value specifications make rerandomization increase the variance. Finally, when $X_{1:3} = (2,2,-1)$, a similar procedure gives the variance of $\widehat{\Delta}$ as 1.5 under rerandomization and $M/3 + 1.5$ under simple randomization. Marginalizing over the distribution of $X_{1:3}$ and noting $P(X_{1:3} = (-1,-1,2)) = 3p(1-p)^2$ and $P(X_{1:3} = (2,2,-1)) = 3p^2(1-p)$, we have
\begin{align*}
    V^*-V &= 3p(1-p)^2 \left\{(M + 1.5) - \left(\frac{2M}{3} + 1.5\right)\right\} + 3p^2(1-p) \left\{1.5 - \left(\frac{M}{3} + 1.5\right)\right\} \\
    &= Mp(1-p)^2 - Mp^2(1-p) \\
    &= Mp(1-p)(1-2p),
\end{align*}
 which is positive when $M>0$ and $p< 0.5$.  Notably, the same results also hold if ANCOVA is used to adjust for $X^r$, following the same proof.

\subsection{Results under stratified rerandomization}
\subsubsection{Lemmas}

\begin{lemma}\label{lemma:stratified}
Let $\tA_1,\dots, \tA_n$ be treatment allocation under stratified randomization, $\bh(\bW_i)$ denote a multidimensional function of $\bW_i$ with finte second moments, and $\mathcal{F}_n$ denote the sigma field generated by $\bW_1,\dots, \bW_n$. We have $n^{-1/2} \sum_{i=1}^n (\tA_i - \pi) \bh(\bW_i) | \mathcal{F}_n \overset{\bullet}{\sim} N(0, \widetilde{\bSigma}_n) | \mathcal{F}_n$, where $\widetilde{\bSigma}_n = \frac{\pi(1-\pi)}{n}\sum_{s \in \mathcal{S}} \sum_{i: S_i = s} (\bh(\bW_i) - \overline{\bh}_s)(\bh(\bW_i) - \overline{\bh}_s)^\top$ with \\
$\overline{\bh}_s = (\sum_{i=1}^n I\{S_i=s\})^{-1}(\sum_{i: S_i = s} \bh(\bW_i))$.
\end{lemma}

\begin{proof}
By conditioning on $\mathcal{F}_n$, $\bh(\bW_i)$ is deterministic, and the only randomness comes from treatment assignment. Under stratified randomization, treatment assignment is independent across strata and follows a complete randomization within each stratum. Therefore, the standard finite-sample central limit theorem \citep{li2017general} can be applied to each stratum, and we combine all strata to obtain the joint convergence. See \cite{pirondini2022covariate} for a detailed proof. 
\end{proof}

\begin{lemma}\label{lemma3}
Let $\mathcal{F}_n$ denote the sigma field generated by $\bW_1,\dots, \bW_n$.
    Under stratified randomization, $\sqrt{n}\widetilde{\bI}|\mathcal{F}_n \overset{\bullet}{\sim} N(0, \widetilde{\bfV}_{I,n})|\mathcal{F}_n$ and $n\widehat{Var}(\widetilde{\bI}) - \widetilde{\bfV}_{I,n} \xrightarrow{p}0$ conditioning on $\mathcal{F}_n$,  where $\widetilde{\bfV}_{I,n} = \frac{1}{n\pi(1-\pi)}\sum_{s \in \mathcal{S}} \sum_{i: S_i = s} (\bX_i^r - \overline{\bX}_s^r)(\bX_i^r - \overline{\bX}_s^r)^\top$. Furthermore, define
    \begin{align*}
        \widetilde{\bI}^{\dagger} = \sum_{s \in \mathcal{S}} \widehat{p}_s\sum_{i=1}^n \left[\frac{I\{S_i=s\}\tA_i\bX_i^r}{\sum_{i=1}^n I\{S_i=s\}\tA_i} - \frac{I\{S_i=s\}(1-\tA_i)\bX_i^r}{\sum_{i=1}^n I\{S_i=s\}(1-\tA_i)}\right]
    \end{align*}
    where $\widehat{p}_s=\frac{1}{n}\sum_{i=1}^n I\{S_i=s\}$. Here, $\widetilde{\bI}^{\dagger}$ is used for computing the overall Mahalanobis distance in \cite{wang2022rerandomization}. Then $\sqrt{n}\widetilde{\bI}^{\dagger}|\mathcal{F}_n \overset{\bullet}{\sim} N(0, \widetilde{\bfV}_{I,n})|\mathcal{F}_n$, implying that $\widetilde{\bI}^{\dagger}$ is asymptotically equivalent to $\widetilde{\bI}$.
\end{lemma}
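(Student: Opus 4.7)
The plan is to establish the three claims in order. Write $\widetilde{\bI}$ in the convenient form
\begin{align*}
\widetilde{\bI}=\frac{1}{n\pi(1-\pi)}\sum_{i=1}^n(\widetilde{A}_i-\pi)\bX_i^r=\frac{1}{n\pi(1-\pi)}\sum_{i=1}^n(\widetilde{A}_i-\pi)(\bX_i^r-\bar{\bX}^r_{S_i}),
\end{align*}
where the second equality uses the exact within-stratum constraint $\sum_{i:S_i=s}(\widetilde{A}_i-\pi)=0$ that stratified randomization imposes (assuming $\pi n_s$ is an integer in each stratum). This isolates the within-stratum covariate deviations, explaining why the limiting variance sees only $E[Var(\bX^r|S)]$.

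For the first claim I condition on $\mathcal{F}_n=\sigma\{(S_i,\bX_i^r):i\le n\}$. Conditionally, the treatment assignments within each stratum are a uniform permutation of $\pi n_s$ ones and $(1-\pi)n_s$ zeros, independent across strata. The conditional mean is zero, and a finite-population-sampling computation gives
\begin{align*}
Var(\sqrt{n}\widetilde{\bI}\mid\mathcal{F}_n)=\frac{1}{n\pi(1-\pi)}\sum_{s\in\mathcal{S}}\frac{n_s}{n_s-1}\sum_{i:S_i=s}(\bX_i^r-\bar{\bX}^r_s)(\bX_i^r-\bar{\bX}^r_s)^\top.
\end{align*}
By the WLLN, $n_s/n\xrightarrow{p} p_s=P(S=s)$ and each within-stratum sample covariance converges in probability to $Var(\bX^r|S=s)$, so the right-hand side converges in probability to $\widetilde{\bfV}_I$. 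I then apply H\'ajek's combinatorial CLT for simple random sampling without replacement within each stratum (the Lindeberg condition follows from $E\|\bX^r\|_2^2<\infty$ in Assumption~1) and sum across the finite stratum set $\mathcal{S}$; integrating out $\mathcal{F}_n$ by dominated convergence yields $\sqrt{n}\widetilde{\bI}\xrightarrow{d}N(\bzero,\widetilde{\bfV}_I)$.

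For the second claim, $n\widehat{Var}(\widetilde{\bI})$ is the pooled plug-in built from within-arm, within-stratum sample covariances of $\bX^r$, and its convergence to $\widetilde{\bfV}_I$ follows from the WLLN applied to those sample covariances (which depend only on the i.i.d.\ $(S_i,\bX_i^r)$), in close analogy to the denominator argument at the end of the proof of Lemma~\ref{lemma1}. For the third claim, under exact stratum proportions $\sum_{i=1}^n I\{S_i=s\}\widetilde{A}_i=\pi n_s$ and $\sum_{i=1}^n I\{S_i=s\}(1-\widetilde{A}_i)=(1-\pi)n_s$, so the definition of $\widetilde{\bI}^{\dagger}$ telescopes to
\begin{align*}
\widetilde{\bI}^{\dagger}=\sum_{s\in\mathcal{S}}\frac{n_s}{n}\left[\frac{\sum_{i:S_i=s}\widetilde{A}_i\bX_i^r}{\pi n_s}-\frac{\sum_{i:S_i=s}(1-\widetilde{A}_i)\bX_i^r}{(1-\pi)n_s}\right]=\frac{1}{n}\sum_{i=1}^n\frac{\widetilde{A}_i-\pi}{\pi(1-\pi)}\bX_i^r=\widetilde{\bI},
\end{align*}
so the third claim is immediate (with residual $O(1/n)$ from rounding if $\pi n_s$ is not exactly integer, absorbed via Slutsky).

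The main obstacle is Step~1: the combinatorial CLT and the bookkeeping across strata and arms. I would assume, as in \cite{wang2022rerandomization}, that $|\mathcal{S}|$ is fixed and $p_s>0$ for every $s$, which ensures $n_s\to\infty$ almost surely and legitimates the per-stratum application of H\'ajek's CLT; Steps~2 and~3 reduce to a WLLN and an algebraic identity, respectively.
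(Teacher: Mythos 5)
Your proof is correct in substance but follows a genuinely different route from the paper's. The paper treats $\sqrt{n}\widetilde{\bI}$ (and $\sqrt{n}\widetilde{\bI}^{\dagger}$) as M-estimators and invokes the general covariate-adaptive-randomization machinery of \cite{wang2023model} (their Theorem 1 plus Lemmas 1--2, extended to vectors via \cite{bugni2018}): it derives an asymptotically linear representation $\sqrt{n}\widetilde{\bI} = n^{-1/2}\sum_i\{\frac{\widetilde{A}_i}{\pi}(\bX_i^r - E[\bX_i^r]) - \frac{1-\widetilde{A}_i}{1-\pi}(\bX_i^r - E[\bX_i^r])\} + o_p(1)$, applies the cited CLT (whose between-stratum component $Var(E[\bZ(1)-\bZ(0)\mid S])$ vanishes here since $\bZ(1)=\bZ(0)=\bX^r$), and shows $\widetilde{\bI}^{\dagger}$ has the \emph{same} linear expansion, hence the same limit. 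You instead condition on $\{(S_i,\bX_i^r)\}$, model the within-stratum assignment as simple random sampling without replacement, compute the conditional variance exactly, and apply the H\'ajek combinatorial CLT stratum by stratum; the third claim becomes an exact algebraic identity $\widetilde{\bI}^{\dagger}=\widetilde{\bI}$. Your argument is more elementary and self-contained, and the conditional-variance computation is a nice way to see why only $E[Var(\bX^r\mid S)]$ survives. The trade-off is scope: your rewriting $\widetilde{\bI}=\frac{1}{n\pi(1-\pi)}\sum_i(\widetilde{A}_i-\pi)\bX_i^r$, the permutation model, and the identity $\widetilde{\bI}^{\dagger}=\widetilde{\bI}$ all lean on the exact within-stratum constraint $\sum_{i:S_i=s}(\widetilde{A}_i-\pi)=0$, i.e., on stratified \emph{block} randomization with $\pi n_s$ integral. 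The paper's route covers the broader class of covariate-adaptive schemes in which within-stratum imbalance is only asymptotically negligible (and in which $\widetilde{N}_1/n\rightarrow\pi$ without $\widetilde{N}_1=n\pi$ exactly), which is why it goes through the M-estimator representation rather than exact algebra; if the main text's definition of stratified randomization is that broader class, you would need to carry the imbalance terms $D_{ns}=\sum_{i:S_i=s}(\widetilde{A}_i-\pi)$ explicitly and show they contribute $o_p(\sqrt{n})$, which is precisely what the cited lemmas package for you. Your Lindeberg verification and the dominated-convergence step to deconditon are fine.
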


\begin{proof}
 Setting $\bh(\bW_i) = (1,\bX_i^r)$, Lemma~\ref{lemma:stratified} implies the conditional Gaussian approximation of  $(A_i-\pi,(A_i-\pi)\bX_i^r)$. Since $\widetilde{\bI}$ is a deterministic function of $(A_i-\pi,(A_i-\pi)\bX_i^r)$ conditioning on $\mathcal{F}_n$, we apply the Delta method and obtain the desired conditional convergence of $\sqrt{n}\widetilde{\bI}$ and $n\widehat{Var}(\widetilde{\bI})$. For $\widetilde{\bI}^{\dagger}$, we set $\bh(\bW_i) = (I\{S_i=1\}, I\{S_i=1\}\bX_i^r,\dots, I\{S_i=|\mathcal{S}|\}, I\{S_i=|\mathcal{S}|\}\bX_i^r)$, and the same procedure yeilds the desired convergence.
\end{proof}



\begin{lemma}\label{lemma4}
    Let $h(\bW;\btheta)$ be some measurable real-valued function $h$ on $\bW$ defined in Assumption 1 and parameters $\btheta \in \bTheta \subset \mathbb{R}^p$. 
    We assume that $\bTheta$ is a compact set, $h$ is a continuous function on $\btheta$, and  $h(\bW;\btheta)$ is dominated by an integrable function. 
    Under stratified rerandomization and Assumption 1, we have $\sup_{\btheta \in \bTheta}||\frac{1}{n} \sum_{i=1}^n A_i h(\bW_i;\btheta) - \pi E[h(\bW_i;\btheta)] ||_2 \xrightarrow{p} 0$ and $\sup_{\btheta \in \bTheta}||\frac{1}{n} \sum_{i=1}^n (1-A_i) h(\bW_i;\btheta) - (1-\pi) E[h(\bW_i;\btheta)] ||_2 \xrightarrow{p} 0$.
\end{lemma}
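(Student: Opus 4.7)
The plan is to mirror the proof of Lemma~\ref{lemma1}, with stratified randomization and Lemma~\ref{lemma3} playing the roles that simple randomization and the standard central limit theorem/law of large numbers played there. Write $\widetilde{A}_i$ for the stratified-randomization counterpart of $A_i$ and let $\mathcal{E}_n$ denote the balance event that defines stratified rerandomization. By construction, the joint distribution of $(A_1,\ldots,A_n)$ equals that of $(\widetilde{A}_1,\ldots,\widetilde{A}_n)$ conditional on $\mathcal{E}_n$; writing any probability involving the $A_i$'s as a conditional probability under stratified randomization and bounding it by $P(\mathcal{E}_n)^{-1}$ exactly as in Lemma~\ref{lemma1} reduces the task to two subclaims: (a) $P(\mathcal{E}_n)$ converges to a strictly positive constant, and (b) the uniform law of large numbers
\begin{equation*}
\sup_{\btheta\in\bTheta}\bigl\|\tfrac{1}{n}\textstyle\sum_{i=1}^n\widetilde{A}_i h(\bW_i;\btheta)-\pi E[h(\bW_i;\btheta)]\bigr\|_2\xrightarrow{p}0
\end{equation*}
holds under stratified randomization, together with its $1-\widetilde{A}_i$ analogue.

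Subclaim (a) follows directly from Lemma~\ref{lemma3}: Slutsky's theorem applied to $\sqrt{n}\widetilde{\bI}\xrightarrow{d}N(\bzero,\widetilde{\bfV}_I)$ and $n\widehat{Var}(\widetilde{\bI})\xrightarrow{p}\widetilde{\bfV}_I$ gives $\widetilde{\bI}^\top\{\widehat{Var}(\widetilde{\bI})\}^{-1}\widetilde{\bI}\xrightarrow{d}\mathcal{X}^2_q$, so $P(\mathcal{E}_n)\to P(\mathcal{X}^2_q<\widetilde{t})>0$; the same argument goes through if the balance criterion is built from $\widetilde{\bI}^{\dagger}$, thanks to the second half of Lemma~\ref{lemma3}.

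Subclaim (b) is what I expect to be the main obstacle, because $(\widetilde{A}_1,\ldots,\widetilde{A}_n)$ are dependent within strata and Example~19.8 of \cite{vaart_1998} cannot be invoked verbatim. I would circumvent this by writing
\begin{equation*}
\tfrac{1}{n}\textstyle\sum_{i=1}^n\widetilde{A}_i h(\bW_i;\btheta)-\pi E[h(\bW_i;\btheta)]=\pi\bigl(\tfrac{1}{n}\sum_{i=1}^n h(\bW_i;\btheta)-E[h(\bW_i;\btheta)]\bigr)+\tfrac{1}{n}\sum_{i=1}^n(\widetilde{A}_i-\pi)h(\bW_i;\btheta),
\end{equation*}
handling the first term by the classical i.i.d.\ uniform law of large numbers in $\bW_i$ (Example~19.8 of \cite{vaart_1998} applied to the covariates alone) and controlling the second by a bracketing argument: compactness of $\bTheta$, continuity of $h$ in $\btheta$, and the integrable envelope produce, for any $\eta>0$, a finite $\delta$-cover $\{\btheta^{(1)},\ldots,\btheta^{(K)}\}$ of $\bTheta$ in which the $\btheta$-oscillation of $h$ has expectation at most $\eta$; the oscillation piece is then dominated by an i.i.d.\ average of an integrable envelope (the factor $|\widetilde{A}_i-\pi|\le 1$ is trivial), while the pointwise behaviour at each $\btheta^{(k)}$ is supplied by result~(I) in the proof of Lemma~\ref{lemma3} with $\bZ_i(1)=h(\bW_i;\btheta^{(k)})$ and $\bZ_i(0)=\bzero$. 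Taking the maximum over the finite cover and then letting $\eta\downarrow 0$ delivers the supremum, and the analogous statement for $1-A_i$ follows by symmetry after replacing $\pi$ by $1-\pi$.
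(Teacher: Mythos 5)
Your proposal is correct and shares the paper's overall skeleton: both proofs condition on the balance event, bound the rerandomization probability by the unconditional probability divided by $P(\widetilde{\bI}^\top \{\widehat{Var}(\widetilde{\bI})\}^{-1} \widetilde{\bI} < t)$, and use Lemma~\ref{lemma3} plus Slutsky/continuous mapping to show the denominator tends to $P(\mathcal{X}^2_q < t) > 0$. Where you genuinely diverge is in the numerator, i.e.\ the uniform law of large numbers under stratified randomization. The paper does not attempt a self-contained argument: it defines the stratum-conditional processes $\{h(\bW^{(s)};\btheta)\}$, verifies via Example~19.8 of \cite{vaart_1998} that each is P-Glivenko-Cantelli with a finite uniform mean, and then invokes Lemma~1(2) from the supplementary material of \cite{wang2023model}, which delivers the stratified ULLN as a black box. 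You instead decompose the sum into $\pi(\frac{1}{n}\sum_i h - E[h])$ plus the centered term $\frac{1}{n}\sum_i(\widetilde{A}_i-\pi)h(\bW_i;\btheta)$, dispatch the first by the classical i.i.d.\ Glivenko-Cantelli theorem, and handle the second by a finite $\delta$-cover with oscillation control from the integrable envelope and pointwise convergence from result~(I) in the proof of Lemma~\ref{lemma3}. This is essentially re-proving the cited external lemma by hand; it is more transparent and does not require the reader to unpack the conditional-process machinery, at the cost of being longer. One small technical point to tidy up: result~(I) as quoted in the paper is stated for $\bZ_i(a)$ with finite \emph{second} moment, whereas your cover points $h(\bW_i;\btheta^{(k)})$ are only guaranteed an integrable envelope; you should either note that the pointwise stratified WLLN you need holds under a first-moment condition (it does, and that is all your argument uses) or verify square-integrability before citing result~(I) verbatim. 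This does not affect the validity of the approach.
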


\begin{proof}
This proof is similar to the proof for Lemma~\ref{lemma1}. The major difference is in that we use Lemma~1 from the Supplementary Material of \cite{wang2023model} to handle stratified randomization and use Lemma~\ref{lemma3} to handle the distribution $\widetilde{\bI}^\top \{\widehat{Var}(\widetilde{\bI})\}^{-1} \widetilde{\bI}$.

Define $\mathcal{F}_n = \sigma(\bW_1,\dots,\bW_n)$ as the sigma-field generated under Assumption 1. For brevity, we further denote $U_n(A,\bW) = \sup_{\btheta \in \bTheta}||\frac{1}{n} \sum_{i=1}^n A_i h(\bW_i;\btheta) - \pi E[h(\bW_i;\btheta)]||_2$. Our goal is to show $P(U_n(A,\bW) > \varepsilon) \rightarrow 0$ for any $\varepsilon >0$, and it suffices to prove $P(U_n(A,\bW) > \varepsilon | \mathcal{F}_n) \xrightarrow{p} 0$ due to the dominated convergence theorem for convergence in probability \citep[Theorem 1.5.3]{durrett2019probability}. 

Using the definition of rerandomization, we have 
\begin{align*}
    P\left(U_n(A,\bW) > \varepsilon | \mathcal{F}_n\right) &= P\left(U_n(\tA,\bW) > \varepsilon | \mathcal{F}_n, \widetilde{\bI}^\top \{\widehat{Var}(\widetilde{\bI})\}^{-1} \widetilde{\bI} < t\right) \\
    &= \frac{ P\left(U_n(\tA,\bW) > \varepsilon, \widetilde{\bI}^\top \{\widehat{Var}(\widetilde{\bI})\}^{-1} \widetilde{\bI} < t | \mathcal{F}_n\right)}{P\left(\widetilde{\bI}^\top \{\widehat{Var}(\widetilde{\bI})\}^{-1} \widetilde{\bI} < t | \mathcal{F}_n\right)} \\
    &\le \frac{ P\left(U_n(\tA,\bW) > \varepsilon | \mathcal{F}_n\right)}{P\left(\widetilde{\bI}^\top \{\widehat{Var}(\widetilde{\bI})\}^{-1} \widetilde{\bI} < t | \mathcal{F}_n\right)},
\end{align*}
where $\tA$ represents treatment allocation under stratified randomization and $\widetilde{\bI} = \frac{1}{N_1} \sum_{i=1}^n \tA_i \bX^r_i - \frac{1}{N_0} \sum_{i=1}^n (1-\tA_i) \bX^r_i$.
\underline{For the numerator term}, we define $\{h(\bW^{(s)};\btheta): \btheta \in \bTheta\}$ as the conditional process of $\{h(\bW;\btheta): \btheta \in \bTheta\}$ given $S=s$ for any $s$ in the support of $S$ (see Definition 1 in the Supplementary Material of \cite{wang2023model} for details). Since $\bW^{(s)}$ lies in the support of $\bW$, the regularity condition (4) implies that the map $\btheta \mapsto h(\bW^{(s)};\btheta)$ is continuous in its support and dominated by an integrable function. Then, Example 19.8 of \cite{vaart_1998} implies that $\{h(\bW^{(s)};\btheta): \btheta \in \bTheta\}$ is P-Glivenko-Cantelli. Furthermore, $\sup_{\btheta \in \bTheta} ||E[h(\bW^{(s)};\btheta)]||_2 < \infty$ since $h(\bW^{(s)};\btheta)$ is dominated by an integrable function. As a result, Lemma 1 (2) in the Supplementary Material of \cite{wang2023model} implies that $ P\left(U_n(\tA,\bW)\right) \rightarrow 0$. Applying Markov's inequality, we have $P\left(U_n(\tA,\bW) > \varepsilon| \mathcal{F}_n\right) \xrightarrow{p} 0$.
\underline{For the denominator term}, Lemma~\ref{lemma3} and Slutsky's theorem implies $\widetilde{\bI}^\top \{\widehat{Var}(\widetilde{\bI})\}^{-1} \widetilde{\bI}|\mathcal{F}_n\overset{\bullet}{\sim}\mathcal{X}^2_q|\mathcal{F}_n$, where $q$ is the dimension of $\bX^r$. Therefore, $P(\widetilde{\bI}^\top \{\widehat{Var}(\widetilde{\bI})\}^{-1} \widetilde{\bI} < t|\mathcal{F}_n) \rightarrow P(\mathcal{X}^2_q < t|\mathcal{F}_n) = P(\mathcal{X}^2_q < t)$, which is greater than zero since $t > 0$. Combining the convergence of both numerator and denominator terms, we obtain $P\left(U_n(A,\bW) > \varepsilon | \mathcal{F}_n\right) \xrightarrow{p} \frac{0}{P(\mathcal{X}^2_q < t)} = 0$. This completes the proof for $\sup_{\btheta \in \bTheta}||\frac{1}{n} \sum_{i=1}^n A_i h(\bW_i;\btheta) - \pi E[h(\bW_i;\btheta)] ||_2 \xrightarrow{p} 0$ under stratified rerandomization. Following a similar proof, we can get $\sup_{\btheta \in \bTheta}||\frac{1}{n} \sum_{i=1}^n (1- A_i) h(\bW_i;\btheta) - (1-\pi) E[h(\bW_i;\btheta)] ||_2 \xrightarrow{p} 0$.
\end{proof}

\begin{lemma}\label{lemma5}
    Let $\bZ_i(a) = h_a(Y_i(a), M_i(a), \bX_i)$  for some $k$-dimensional function $h_a, a \in \{0,1\}$ taking values in the Euclidean space such that $E[||\bZ_i(a)||_2^2] < \infty$. Under stratified rerandomization and Assumption 1, we have
    \begin{align*}
        \frac{1}{\sqrt{n}}\sum_{i=1}^n \left\{A_i\bZ_i(1) + (1-A_i)\bZ_i(0) - \pi E[\bZ_i(1)] - (1-\pi) E[\bZ_i(0)]\right\} \xrightarrow{d}  \widetilde{\boldsymbol{\varepsilon}}_k + \widetilde{\bfV}_{UI}^\top \widetilde{\bfV}_I^{-0.5} \bR_{q,t},
    \end{align*}
    where $\tbepsilon_k \sim N(\bzero,\widetilde{\bfV}_U -  \widetilde{\bfV}_{UI}^\top \widetilde{\bfV}_I^{-1}\widetilde{\bfV}_{UI})$, $\bR_{q,t} \sim  \bD_q|( \bD^\top  \bD < t)$,  where $\bD \sim N(\bzero, \mathbf{I}_q)$ with
    \begin{align*}
        \widetilde{\bfV}_U &= Var\{A_i^*\bZ_i(1) + (1-A_i^*)\bZ_i(0)\} - \pi(1-\pi) Var(E[\bZ(1)-\bZ(0)|S]) \\
        \widetilde{\bfV}_I &= \frac{1}{\pi(1-\pi)}E[Var(\bX_i^r|S)] \\
        \widetilde{\bfV}_{UI} &= E[Cov\{\bX_i^r, \bZ_i(1)-\bZ_i(0)|S\}].
    \end{align*}
\end{lemma}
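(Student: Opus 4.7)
The plan is to mirror the proof of Lemma~\ref{lemma2}, replacing each i.i.d.\ ingredient with its stratified-randomization analogue from Lemma~\ref{lemma3} and from \cite{wang2023model}. By the definition of stratified rerandomization, the target quantity is distributionally equal to the unconstrained stratified-randomization sum $\sqrt{n}\widetilde{\bU}_n$ conditioned on $\{\widetilde{\bI}^\top\{\widehat{Var}(\widetilde{\bI})\}^{-1}\widetilde{\bI}<t\}$, where $\widetilde{\bU}_n = \frac{1}{n}\sum_i \{\widetilde{A}_i \bZ_i(1)+(1-\widetilde{A}_i)\bZ_i(0) - E^*[\bZ]\}$. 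So it suffices to establish the joint weak convergence of $(\sqrt{n}\widetilde{\bU}_n,\sqrt{n}\widetilde{\bI})$ unconditionally, then condition on the Mahalanobis-ball event using the continuity-set argument from the proof of Lemma~\ref{lemma2}.

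For the joint CLT, I would apply the multivariate extension of result (II) in the proof of Lemma~\ref{lemma3} (Lemma 2 of \cite{wang2023model} generalized via Lemmas B.1--B.2 of \cite{bugni2018}) to the stacked vector $(\bZ_i(a)^\top,\bX_i^r{}^\top)^\top$. This yields
\begin{align*}
\left(\begin{array}{c}\sqrt{n}\widetilde{\bU}_n \\ \sqrt{n}\widetilde{\bI}\end{array}\right) \xrightarrow{d} N\!\left(\bzero,\, \left(\begin{array}{cc}\widetilde{\bfV}_U & \widetilde{\bfV}_{UI}^\top \\ \widetilde{\bfV}_{UI} & \widetilde{\bfV}_I\end{array}\right)\right).
\end{align*}
The diagonal blocks follow directly from result (II) and Lemma~\ref{lemma3}: the $\widetilde{\bU}_n$ variance picks up the $-\pi(1-\pi)\mathrm{Var}(E[\bZ(1)-\bZ(0)|S])$ correction because stratification eliminates the between-stratum component of the treatment-assignment variability. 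For the cross-term $\widetilde{\bfV}_{UI}$, I would compute the asymptotic covariance of $\sqrt{n}\widetilde{\bU}_n$ with $\sqrt{n}\widetilde{\bI}$ via the same linearization $\sqrt{n}\widetilde{\bI}=\frac{1}{\sqrt{n}}\sum_i \{\frac{\widetilde{A}_i}{\pi}-\frac{1-\widetilde{A}_i}{1-\pi}\}(\bX_i^r-E[\bX_i^r])+o_p(1)$ used in Lemma~\ref{lemma3}; after expanding and using that cross-stratum products vanish in expectation, what survives is $E[\mathrm{Cov}(\bX^r,\bZ(1)-\bZ(0)\mid S)]$.

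Given joint normality, the second step is identical to Lemma~\ref{lemma2}: Lemma~\ref{lemma3} gives $\widetilde{\bI}^\top\{\widehat{Var}(\widetilde{\bI})\}^{-1}\widetilde{\bI}\xrightarrow{d}\mathcal{X}^2_q$ by Slutsky, so $(-\infty,t)$ is a continuity set of the limit for $t>0$, and the continuous-mapping/continuity-set argument passes from joint unconditional convergence to conditional convergence. Then the orthogonal decomposition $\widetilde{\bU}^*_\infty = \widetilde{\boldsymbol{\varepsilon}}_k + \widetilde{\bfV}_{UI}^\top\widetilde{\bfV}_I^{-1}\widetilde{\bI}_\infty$ with $\widetilde{\boldsymbol{\varepsilon}}_k\indep\widetilde{\bI}_\infty$ (uncorrelated implies independent under joint normality) yields the stated form, and the affine transformation $\widetilde{\bfV}_I^{-1/2}\widetilde{\bI}_\infty\mid(\widetilde{\bI}_\infty^\top\widetilde{\bfV}_I^{-1}\widetilde{\bI}_\infty<t)\overset{d}{=}\bR_{q,t}$ completes the identification.

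The main obstacle is rigorously establishing the joint CLT with the cross-covariance $\widetilde{\bfV}_{UI}=E[\mathrm{Cov}(\bX^r,\bZ(1)-\bZ(0)\mid S)]$, since the scalar result in \cite{wang2023model} must be lifted to random vectors and the cross-covariance terms between $\widetilde{\bU}_n$ and $\widetilde{\bI}$ are not immediately covered by a standard Cram\'er--Wold device (because the $\widetilde{A}_i$ introduce stratum-level dependencies). I would handle this by applying the Cram\'er--Wold device to linear combinations of $(\bZ_i(a)^\top,\bX_i^r{}^\top)^\top$ and then invoke the scalar result (II) once, noting that the variance of any such linear combination decomposes into a within-stratum piece (which is unaffected by stratification) and a between-stratum piece (where $\bX^r$ contributes zero since stratified assignment balances stratum means). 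The rest of the argument is a transcription of the proof of Lemma~\ref{lemma2} with $\bfV_U,\bfV_I,\bfV_{UI}$ replaced by their tildes and $\bI$ by $\widetilde{\bI}$.
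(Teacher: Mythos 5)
Your proposal follows essentially the same route as the paper: reduce to the unconditional stratified-randomization statistic via the rerandomization definition, establish the joint CLT for $(\sqrt{n}\widetilde{\bU}_n,\sqrt{n}\widetilde{\bI})$ by applying the vector-valued extension (via Lemmas B.1--B.2 of \cite{bugni2018}) of the stratified CLT from \cite{wang2023model} to a stacked vector, then transfer to the conditional law with the continuity-set argument and finish with the orthogonal Gaussian decomposition. The only detail worth noting is that $\widetilde{\bU}_n$ must first be rewritten in the $\frac{\widetilde{A}_i}{\pi}(\cdot)-\frac{1-\widetilde{A}_i}{1-\pi}(\cdot)$ form required by that CLT (the paper stacks $(\pi\bZ_i(1),\bX_i^r)$ and $(-(1-\pi)\bZ_i(0),\bX_i^r)$ and absorbs the leftover $\frac{1}{\sqrt{n}}\sum_i(\widetilde{A}_i-\pi)E[\bZ_i(1)-\bZ_i(0)]$ into an $o_p(1)$ term), a minor bookkeeping step your plan clearly accommodates.
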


\begin{proof}
This proof is similar to the proof for Lemma~\ref{lemma2}. The major difference is that we use Lemma~\ref{lemma:stratified} to obtain the weak convergence under stratified randomization. We use the same notation as in the proof for Lemma~\ref{lemma2}, which are
\begin{align*}
    \bI_n(A) &= \sqrt{n} \left(\frac{\sum_{i=1} A_i \bX_i^r}{\sum_{i=1} A_i} - \frac{\sum_{i=1} (1-A_i) \bX_i^r}{\sum_{i=1} (1-A_i)}\right), \\
    \mathcal A_n &=\Big\{\, \bI_n(\tA)^\top\,\widehat{\mathrm{Var}}(\bI_n(\tA))^{-1}\,\bI_n(\tA)< t\,\Big\},\\
    \bU_n(A) &=\frac{1}{\sqrt n}\sum_{i=1}^n\!\Big\{A_i \bZ_i(1)+(1-A_i)\bZ_i(0)-\pi E\bZ(1)-(1-\pi)E\bZ(0)\Big\},\\
    \bT_n(A) &= \frac{1}{\sqrt n}\sum_{i=1}^n\!\Big\{(A_i-\pi) (\bZ_i(1)-\bZ_i(0)))\Big\},\\
    \bR_n &= \frac{1}{\sqrt n}\sum_{i=1}^n\!\Big\{\pi \bZ_i(1)+(1-\pi)\bZ_i(0)-\pi E\bZ(1)-(1-\pi)E\bZ(0)\Big\}.
\end{align*}
Then we have $\bU_n(A) = \bT_n(A) + \bR_n$. We further define $\mathcal{F}_n = \sigma(\bW_1,\dots,\bW_n)$ as the sigma-field generated under Assumption 1, and $X \overset{d}{=} Y$ if $X$ and $Y$ have the same distribution. By the definition of rerandomization, we have $g(A,\bW)| \mathcal{F}_n \overset{d}{=} g(\tA,\bW)| (\mathcal{F}_n, \mathcal{A}_n)$ for any measurable function $g$.

We first provide the conditional Normal approximation under stratified randomization. To utilize Lemma~\ref{lemma:stratified}, we define $\bh(\bW_i)= (1, \bX_i^r, \bZ_i(1) - \bZ_i(0))$, which has finite second moment. Then, Lemma~\ref{lemma:stratified} and the Delta method implies $(\bI_n(\tA), \bT_n(\tA)) | \mathcal{F}_n \overset{\bullet}{\sim} N(0, \widetilde{\bSigma}_n) | \mathcal{F}_n$, where
\[
\widetilde{\bSigma}_n=
\begin{bmatrix}
\displaystyle \tbfV_{I,n} & \displaystyle \tbfV_{IT,n}\\[4pt]
\displaystyle \tbfV_{IT,n}^\top & \displaystyle \tbfV_{T,n}
\end{bmatrix},
\]
with blocks (all $\mathcal F_n$-measurable sample moments)
\begin{align*}
    \tbfV_{I,n} &=\frac{1}{n\pi(1-\pi)}\sum_{s \in \mathcal{S}} \sum_{i: S_i = s} (\bX_i^r - \overline{\bX}_s^r)(\bX_i^r - \overline{\bX}_s^r)^\top,\\
\tbfV_{IT,n} &=\frac{1}{n}\sum_{s \in \mathcal{S}}\sum_{i: S_i = s} (\bX_i^r - \overline{\bX}_s^r)\,(\bZ_i(1)-\bZ_i(0)-\overline{\bZ(1)}_s+\overline{\bZ(0)}_s)^\top,  \\
\tbfV_{T,n}&=\frac{\pi(1-\pi)}{n}\sum_{s \in \mathcal{S}}\sum_{i: S_i = s} (\bZ_i(1)-\bZ_i(0)-\overline{\bZ(1)}_s+\overline{\bZ(0)}_s)(\bZ_i(1)-\bZ_i(0)-\overline{\bZ(1)}_s+\overline{\bZ(0)}_s)^\top.
\end{align*}
Following a similar procedure, we get
\begin{equation}\label{eq: ivi}
\{\bT_n(\tA), \bI_n(\tA)^\top n\widehat{Var}(\bI_n(\tA))^{-1}\bI_n(\tA)\} | \mathcal{F}_n\overset{\bullet}{\sim} \left\{\bT_n(\tA), \bI_n(\tA)^\top \bV_{I,n}^{-1}\bI_n(\tA) \right\} | \mathcal{F}_n.
\end{equation}

We next derive the asymptotic distribution of $\bT_n(\tA) | \mathcal{F}_n, \mathcal{A}_n$. Denoting $\tbepsilon_n = \bT_n(\tA) -  \tbfV_{IT,n}^\top   \tbfV_{I,n}^{-1}\bI_n(\tA)$, our result of $(\bI_n(\tA), \bT_n(\tA)) | \mathcal{F}_n \overset{\bullet}{\sim} N(0, \bSigma_n) | \mathcal{F}_n$ implies that $\bvarepsilon_n | \mathcal{F}_n \overset{\bullet}{\sim} N(0, \tbfV_{T,n} - \tbfV_{IT,n}^\top \tbfV_{I,n}^{-1}\tbfV_{IT,n}) | \mathcal{F}_n$ and that $\varepsilon_n$ is asymptotically independent of $\bI_n(\tA)$ given $\mathcal{F}_n$. Combined with Equation~\eqref{eq: ivi}, we have
\begin{align*}
    \bT_n(\tA)| \mathcal{F}_n, \mathcal{A}_n &\overset{\bullet}{\sim} \bT_n(\tA)| \mathcal{F}_n, \bI_n(\tA)^\top \tbfV_{I,n}^{-1}\bI_n(\tA)< t \\
    &\overset{\bullet}{\sim} \tbepsilon_n| \mathcal{F}_n +  \tbfV_{IT,n}^\top   \tbfV_{I,n}^{-1}\bI_n(\tA)| \mathcal{F}_n, \bI_n(\tA)^\top \tbfV_{I,n}^{-1}\bI_n(\tA)< t \\
    &\overset{\bullet}{\sim}\tbepsilon_n| \mathcal{F}_n +  \tbfV_{IT,n}^\top   \tbfV_{I,n}^{-0.5} \bR_{q,t} | \mathcal{F}_n,
\end{align*}
where the last step comes from $\tbfV_{I,n}^{-0.5} \bI_n(\tA)| \mathcal{F}_n \overset{\bullet}{\sim} \mathcal{N}(0, \bI_q)| \mathcal{F}_n$ and the definition of $\bR_{q,t}$ in the Lemma statement.

We then obtain
\begin{align*}
    \bU_n(A) |\mathcal{F}_n  &\overset{\bullet}{\sim} \bT_n(A) | \mathcal{F}_n\quad  + \quad  \bR_n  | \mathcal{F}_n \\
    &\overset{\bullet}{\sim} \bT_n(\tA)| (\mathcal{F}_n, \mathcal{A}_n) \quad + \quad \bR_n  | \mathcal{F}_n \\
    &\overset{\bullet}{\sim} (\tbepsilon_n +  \tbfV_{IT,n}^\top   \tbfV_{I,n}^{-0.5} \bR_{q,t} + \bR_n)  | \mathcal{F}_n,
\end{align*}
where the first step is by definition, the second step is implied by rerandomization, and the last step uses the above results. 

Finally, we derive the asymptotic distribution of $\bU_n(\tA)$.
Conditionally on $\mathcal F_n$, $\tbepsilon_n$ is independent of $\bI_n(\tA)$ and hence of
$\bR_{q,t}$. Since $\bR_n$ is $\mathcal F_n$-measurable, the triple
$(\tbepsilon_n,\ \bR_{q,t},\ \bR_n)$ is conditionally independent in the sense that, given $\mathcal F_n$,
the joint conditional characteristic function factorizes:
\begin{align*}
& E\!\left[\exp\!\Big\{i\big(\bt_1^\top \tbepsilon_n
+ \bt_2^\top \tbfV_{IT,n}^\top   \tbfV_{I,n}^{-0.5} \bR_{q,t}  + \bt_3^\top \bR_n)\big)\Big\}\ \Big|\ \mathcal F_n\right]\\
&=
\underbrace{E\!\left[\exp\{\,i \bt_1^\top \tbepsilon_n\}\ \Big|\ \mathcal F_n\right]}_{=: \psi_{n}(\bt_1)}
\underbrace{E\!\left[\exp\{\,i \bt_2^\top  \tbfV_{IT,n}^\top   \tbfV_{I,n}^{-0.5} \bR_{q,t}\}\ \Big|\ \mathcal F_n\right]}_{=: \eta_{n}(\bt_2)}
 \exp\{\,i \bt_3^\top \bR_n\}.    
\end{align*}

For $\psi_{n}(\bt_1)$, since we have obtained $\tbepsilon_n | \mathcal{F}_n \overset{\bullet}{\sim} N(0, \tbfV_{T,n} - \tbfV_{IT,n}^\top \tbfV_{I,n}^{-1}\tbfV_{IT,n}) | \mathcal{F}_n$, and by law of large numbers, $\tbfV_{T,n} \xrightarrow{p} \pi(1-\pi)E[Var\{(\bZ(1)-\bZ(0)|S\}]$, $\tbfV_{IT,n} \xrightarrow{p} \tbfV_{UI}$, $\tbfV_{I,n} \xrightarrow{p} \tbfV_I$, then $\psi_{n}(\bt_1) \xrightarrow{p} \exp\!\Big(-\tfrac12 \bt_1^\top \bSigma_e\, \bt_1\Big)$, where  $\bSigma_e = \pi(1-\pi)E[Var\{(\bZ(1)-\bZ(0)|S\}] - \tbfV_{UI}^\top \tbfV_I^{-1} \tbfV_{UI}$.
For $\eta_n(\bt_2)$, a similar prove shows that $\eta_n(\bt_2) = E\!\big[\exp\{\,i \bt_2^\top \tbfV_{UI}^\top \tbfV_I^{-1/2} \bR_{q,t}\}\big]$
Finally, by the classical (unconditional) CLT across units,
$\bR_n\ \xrightarrow{d}  N(0,\ \bSigma_r)$ with $\bSigma_r = Var\left\{\pi \bZ(1) + (1-\pi) \bZ(0)\right\}$.

To obtain the unconditional characteristic function of $\bU_n(A)$, we take the unconditional expectation and use the dominated convergence theorem to get
\begin{align*}
    & E\!\left[\exp\!\Big\{i\big(\bt_1^\top \tbepsilon_n
+ \bt_2^\top \tbfV_{IT,n}^\top   \tbfV_{I,n}^{-0.5} \bR_{q,t}  + \bt_3^\top \bR_n)\big)\Big\}\ \right]\\
&\rightarrow \exp\!\Big(-\tfrac12 \bt_1^\top \bSigma_e\, \bt_1\Big) \eta_n(\bt_2) E[\exp\{i\bt_3^\top \bR_n\}] \\
&= \exp\!\Big\{-\tfrac12 \bt_1^\top (\bSigma_e +\bSigma_r) \bt_1\Big\} E\!\big[\exp\{\,i \bt_2^\top \tbfV_{UI}^\top \tbfV_I^{-1/2} \bR_{q,t}\}\big].
\end{align*}
This result shows that $\bU_n(A)$ coverages in distribution to $\tbepsilon_k + \tbfV_{UI}^\top \tbfV_I^{-1/2} \bR_{q,t}$ with $\tbepsilon_k \sim N(0, \bSigma_e +\bSigma_r)$ and $\bR_{q,t}$ being independent  of $\tbepsilon_k$. 
Since direct algebra shows
\begin{align*}
    \bSigma_e +\bSigma_r &= \pi(1-\pi)E[Var\{(\bZ(1)-\bZ(0)|S\}] - \tbfV_{UI}^\top \tbfV_I^{-1} \tbfV_{UI} + Var\left\{\pi \bZ(1) + (1-\pi) \bZ(0)\right\} \\
    &= Var(A^* \bZ(1) + (1-A^*) \bZ(0)) - \pi(1-\pi)Var(E[\bZ(1)-\bZ(0)|S]) - \tbfV_{UI}^\top \tbfV_I^{-1} \tbfV_{UI},
\end{align*}
we obtain the desired asymptotic distribution.
\end{proof}

\subsubsection{Proof of Theorem 3}
\begin{proof}[Proof of Theorem 3.]
The proof for consistency, asymptotic normality, and weak convergence is the same as the proof of Theorem 1 except that Lemmas~\ref{lemma1} and \ref{lemma2} are substituted by Lemmas~\ref{lemma4} and \ref{lemma5}, respectively. 

As a result, we obtain 
\begin{align*}
    \sqrt{n}(\widehat{\Delta} - \underline{\Delta}) \xrightarrow{d} \widetilde{\varepsilon}_1 + \widetilde{\bfV}_{UI}^\top \widetilde{\bfV}_I^{-0.5} \bR_{q,t},
\end{align*}
where $\varepsilon_1 \sim N(0, \widetilde{V}-\widetilde{\bfV}_{UI}^\top \widetilde{\bfV}_I^{-1}\widetilde{\bfV}_{UI})$ and $\bR_{q,t} \sim  \bD|( \bD^\top  \bD < t)$,  where $\bD \sim N(\bzero, \mathbf{I}_q)$ and $\varepsilon_1$ is independent of $\bR_{q,t}$.
Furthermore, Lemma~\ref{lemma5} implies that
\begin{align*}
    \widetilde{V} &= V - \pi(1-\pi) E[E[IF(\bO(1)) - IF(\bO(0))|S]^2] \\
    \widetilde{\bfV}_I &= \frac{1}{\pi(1-\pi)}E[Var(\bX_i^r|S)] \\
    \widetilde{\bfV}_{UI} &= E[Cov\{\bX_i^r, IF(\bO(1)) - IF(\bO(0))|S\}],
\end{align*}
where $V = Var\{A^* IF(\bO(1))+(1-A^*)IF(\bO(0))\}$ is the asymptotic variance under simple randomization. Since $\widetilde{\bfV}_{UI}^\top \widetilde{\bfV}_I^{-0.5}$ is a vector, Lemma A1 of \cite{li2018asymptotic} shows that $\widetilde{\bfV}_{UI}^\top \widetilde{\bfV}_I^{-0.5} \bR_q \overset{d}{=} \sqrt{\widetilde{\bfV}_{UI}^\top \widetilde{\bfV}_I^{-1} \widetilde{\bfV}_{UI}}\ D_1|\bD^\top \bD < t$. Defining
\begin{align*}
    \widetilde{R}^2 &= \frac{\widetilde{\bfV}_{UI}^\top \widetilde{\bfV}_I^{-1} \widetilde{\bfV}_{UI}}{\widetilde{V}}.
\end{align*}
we get the desired asymptotic distribution. 

To consistently estimate $\widetilde{V}$ and $\widetilde{R}^2$, we define 
$\widehat{p}_s= n^{-1}\sum_{i=1}^n I\{S_i=s\}$ and 
$\widehat{IF}_i$ as the first element of $\left\{ \frac{1}{n}\sum_{i=1}^n \dot{\bpsi}(\bO_i; \widehat{\btheta}) \right\}^{-1} \bpsi(\bO_i; \widehat{\btheta})$. We define
\begin{align}
    \widehat{\widetilde{V}} &= \widehat{V} - \pi(1-\pi) \sum_{s \in \mathcal{S}} \widehat{p}_s \widehat{d}_s^2, \label{def: vv-hat}\\
    \widehat{\widetilde{R}^2} &= \frac{1}{\widehat{\widetilde{V}}} \widehat{C}^\top \widehat{Var}(\widetilde{\bI})^{-1} \widehat{C}, \label{def: rr-hat}
\end{align}
where
\begin{align*}
    \widehat{V} & = \frac{1}{n} \sum_{i=1}^n\widehat{IF}_i^2, \\
    \widehat{d}_s &= \frac{1}{n} \sum_{i=1}^n \frac{I\{S_i=s\}} {\widehat{p}_s} \frac{A_i-\pi}{\pi(1-\pi)}  \widehat{IF}_i, \\
    \widehat{C} &=  \sum_{s \in \mathcal{S}} \widehat{p}_s \left[ \frac{1}{n}\sum_{i=1}^n \frac{I\{S_i=s\}}{\widehat{p}_s}  \frac{A_i-\pi}{\pi(1-\pi)}  \widehat{IF}_i \bX_i^r -\widehat{d}_s\overline{\bX}_s^r \right], \\
    \overline{\bX}_s^r &= \frac{1}{n} \sum_{i=1}^n \frac{I\{S_i=s\}} {\widehat{p}_s}\bX_i^r. 
\end{align*}
Following a similar proof to the last part of the proof for Theorem 1 and using Lemma~\ref{lemma4} along with the continuous mapping theorem, we obtain $\widehat{V} \xrightarrow{p} V$, $\widehat{p}_s \xrightarrow{p} P(S=s)$, $\widehat{d}_s \xrightarrow{p} E[IF(\bO(1))-IF(\bO(0)|S]$, $\overline{\bX}_s^r \xrightarrow{p} E[\bX^r|S]$, $\widehat{C} \xrightarrow{p} E[Cov\{\bX^r, IF(\bO(1))-IF(\bO(0)|S\}]$. Combining these results with Lemma~\ref{lemma3}, we obtain $\widehat{\widetilde{V}} \xrightarrow{p} \widetilde{V}$ and $\widehat{\widetilde{R}^2} \xrightarrow{p} \widetilde{R}^2$.
\end{proof}

\subsubsection{Proof of Theorem 4}
\begin{proof}[Proof of Theorem 4.]
By Theorem 2 and 3, the influence function for the DR-WLS estimator is given in Equation~\eqref{IF_formula} and
\begin{align*}
    IF(\bO(1)) - IF(\bO(0)) &= \left(\frac{\dot{\underline{f}}_1}{\pi} - \boldsymbol{c}_1^\top \bZ(1) \right)\frac{R(1)\{Y(1)-h(\bZ(1))\}}{ e(\bZ(1))} - \boldsymbol{c}_2^\top \bZ(1)\{R(1)- e(\bZ(1))\} \\
    & -  \left(\frac{\dot{\underline{f}}_0}{1-\pi} - \boldsymbol{c}_1^\top \bZ(0) \right)\frac{R(0)\{Y(0)-h(\bZ(0))\}}{ e(\bZ(0))} + \boldsymbol{c}_2^\top \bZ(0)\{R(0)- e(\bZ(0))\}.
\end{align*}
When the outcome missing model is correctly specified, we have $\boldsymbol{c}_1 = \bzero$, and $E[\boldsymbol{c}_2^\top \bZ(a)\{R(a)- e(\bZ(a)\}|\bX] = E[\boldsymbol{c}_2^\top \bZ(a)E[R(a)- e(\bZ(a))|\bZ(a)]|\bX] = \bzero$. When the outcome model is correctly specified, we have $\boldsymbol{c}_2 = \bzero$, and under MAR,
\begin{align*}
    E\left[\boldsymbol{c}_1^\top \bZ(a) \frac{R(a)\{Y(a)-h(\bZ(a))\}}{ e(\bZ(a))}\bigg| \bX \right] = E\left[\boldsymbol{c}_1^\top \bZ(a) \frac{E[R(a)|\bZ(a)]\{E[Y(a)|\bZ(a)]-h(\bZ(a))\}}{ e(\bZ(a))}\bigg|\bX\right] = 0.
\end{align*}
Therefore, when at least one of the two working models is correctly specified, we have
\begin{equation}\label{eq: if1-if0}
    E[IF(\bO(1)) - IF(\bO(0))|\bX] = E\left[\frac{\dot{\underline{f}}_1}{\pi} \frac{R(1)\{Y(1)-h(\bZ(1))\}}{ e(\bZ(1))} -  \frac{\dot{\underline{f}}_0}{1-\pi} \frac{R(0)\{Y(0)-h(\bZ(0))\}}{ e(\bZ(0))}\bigg|\bX\right].
\end{equation}
We next show $E[\{IF(\bO(1)) - IF(\bO(0))\}\widetilde{\bX}] =\bzero$ under each of the condition (1), (2), or (3), where $\widetilde{\bX} = (\bX^r, S)$. Under condition (1), we have $\pi = 0.5$ and $\dot{\underline{f}}_1 = - \dot{\underline{f}}_0 =1$, and so the definition of $\bpsi(\bO_i;\btheta)$ implies that
\begin{align*}
 E[\{IF(\bO(1)) - IF(\bO(0))\}\widetilde{\bX}] &= 4 E\left[\left\{\pi \frac{R(1)\{Y(1)-h(\bZ(1))\}}{ e(\bZ(1))} + (1-\pi) \frac{R(0)\{Y(0)-h(\bZ(0))\}}{ e(\bZ(0))}\right\}\widetilde{\bX}\right]\\
     &= \bzero.
\end{align*}
Under condition (2), Equation~\eqref{eq: if1-if0} remains correct. Furthermore, the definition of $\bpsi(\bO_i;\btheta)$ now implies that $E\left[\frac{R(a)\{Y(a)-h(\bZ(a))\}}{ e(\bZ(a))}\widetilde{\bX}\right] = \bzero$. Therefore, we can get $E[\{IF(\bO(1)) - IF(\bO(0))\}\widetilde{\bX}] = \bzero$. Under condition (3) and MAR, we have $E[IF(\bO(1)) - IF(\bO(0))|\bX] = \bzero$, which also implies $E[\{IF(\bO(1)) - IF(\bO(0))\}\widetilde{\bX}] =\bzero$.

Finally, since $E[\{IF(\bO(1)) - IF(\bO(0))\}\widetilde{\bX}] =\bzero$ and $S$ is a categorical variable, we get $E[\{IF(\bO(1)) - IF(\bO(0))\}\bX^r] = \bzero$ and $E[IF(\bO(1)) - IF(\bO(0))|S] = \bzero$, yielding 
\begin{align*}
    & E[Cov\{IF(\bO(1)) - IF(\bO(0)), \bX^r|S\}]\\
    &= E[\{IF(\bO(1)) - IF(\bO(0))\}\bX^r] - E[E[IF(\bO(1)) - IF(\bO(0))|S]E[\bX^r|S]] \\
    &= \bzero.
\end{align*}
As a result, $\widetilde{R}^2 = 0$ and $\widetilde{V} = V$.

For the mixed-ANCOVA estimator, the proof is the same as the proof of Proposition~\ref{prop: mixed-ancova} with $\bX^r$ replaced by $(\bX^r, S)$, and we omitted the proof.
\end{proof}

\subsection{Efficient estimation with machine learning}
\subsubsection{Proof of Theorem 5}

\begin{lemma}[Convergence with random indices]\label{lemma: random indices}
Let $X_n$ and $Y_n$ be two independent random sequences defined in the same probability space. We assume $Y_n$ take values in positive integers, $E[X_n^2] = o(n^{-r})$ and $Y_n/n \xrightarrow{p} c$ for positive constants $r$ and $c$. Then $E[X_{Y_n}^2|Y_n] = o_p(n^{-r})$.
\end{lemma}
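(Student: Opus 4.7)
The plan is to reduce the random-index problem to a deterministic one via the independence of $X_n$ and $Y_n$, then exploit the fact that $Y_n$ is forced to infinity by the hypothesis $Y_n/n \xrightarrow{p} c > 0$.

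First I would observe that, by the independence of the sequences, for each positive integer $k$ we have $E[X_{Y_n}^2 \mid Y_n = k] = E[X_k^2]$, so if I define the deterministic function $g(k) = E[X_k^2]$, then $E[X_{Y_n}^2 \mid Y_n] = g(Y_n)$ almost surely. The assumption $E[X_n^2] = o(n^{-r})$ is exactly the statement that $n^r g(n) \to 0$ as $n \to \infty$.

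Next I would rewrite the target quantity as
\begin{equation*}
n^r \, g(Y_n) \;=\; \left(\frac{n}{Y_n}\right)^{r} \cdot Y_n^{r}\, g(Y_n).
\end{equation*}
Since $Y_n/n \xrightarrow{p} c > 0$, the continuous mapping theorem gives $(n/Y_n)^r \xrightarrow{p} c^{-r}$, which is $O_p(1)$. It therefore suffices to show $Y_n^r g(Y_n) \xrightarrow{p} 0$. For any $\varepsilon > 0$, the convergence $k^r g(k) \to 0$ provides an integer $K$ with $k^r g(k) < \varepsilon$ for all $k \geq K$, whence
\begin{equation*}
P\bigl(Y_n^r g(Y_n) > \varepsilon\bigr) \;\leq\; P(Y_n < K).
\end{equation*}
The assumption $Y_n/n \xrightarrow{p} c > 0$ implies $P(Y_n < K) \le P(|Y_n/n - c| > c/2)$ for all sufficiently large $n$, and the latter tends to zero, so $Y_n^r g(Y_n) = o_p(1)$.

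Combining the two pieces via Slutsky gives $n^r E[X_{Y_n}^2 \mid Y_n] = O_p(1) \cdot o_p(1) = o_p(1)$, which is exactly the conclusion $E[X_{Y_n}^2 \mid Y_n] = o_p(n^{-r})$. There is no genuine obstacle here; the only subtle point is the first step, where independence is needed to justify replacing the conditional expectation by the deterministic function $g$ evaluated at $Y_n$, after which the argument reduces to a routine tail estimate using that $Y_n$ diverges in probability.
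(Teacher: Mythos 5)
Your proof is correct, and it follows the same basic strategy as the paper's: use the independence of the two sequences to replace $E[X_{Y_n}^2\mid Y_n]$ by the deterministic function $g(Y_n)$ with $g(k)=E[X_k^2]$, and then exploit the tail bound $P(Y_n<K)\le P(|Y_n/n-c|>c/2)\to 0$ coming from $Y_n/n\xrightarrow{p}c>0$. The one place where you genuinely diverge is in how the rate is transferred from the index $n$ to the random index $Y_n$: you factor $n^r g(Y_n)=(n/Y_n)^r\cdot Y_n^r g(Y_n)$, control the first factor as $O_p(1)$ by the continuous mapping theorem, and the second as $o_p(1)$ from $k^r g(k)\to 0$. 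The paper instead writes $P(E[n^r X_{Y_n}^2\mid Y_n]>\varepsilon)=\sum_i I\{n^r E[X_i^2]>\varepsilon\}P(Y_n=i)$ and truncates the sum at $i<N_\varepsilon$, which is slightly too quick: the hypothesis controls $i^r E[X_i^2]$, not $n^r E[X_i^2]$, so for $N_\varepsilon\le i\ll n$ the indicator need not vanish; repairing this requires restricting to $i\ge nc/2$ and absorbing a factor $(n/i)^r\le(2/c)^r$ --- which is exactly the $(n/Y_n)^r$ term your decomposition isolates explicitly. So your version is, if anything, the more careful one, at the cost of one extra application of Slutsky.
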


\begin{proof}
Since  $E[X_n^2] = o(n^{-r})$, for any $\varepsilon >0$, there exists an integer $N_{\varepsilon}$ such that $E[n^r X_n^2] < \varepsilon$ for any $n \ge N_{\varepsilon}$. Then, when $n > 2N_{\varepsilon}/c$, we have
\begin{align*}
    P(E[n^rX_{Y_n}^2|Y_n] > \varepsilon) &= \sum_{i=1}^{\infty} P(E[n^rX_{Y_n}^2|Y_n] > \varepsilon, Y_n = i) \\
    &= \sum_{i=1}^{\infty} P(E[n^rX_i^2|Y_n=i] > \varepsilon, Y_n = i) \\
    &= \sum_{i=1}^{\infty} I\{E[n^rX_i^2] > \varepsilon\} P(Y_n = i) \\
    &= \sum_{i < N_{\varepsilon}} I\{E[n^rX_i^2] > \varepsilon\} P(Y_n = i) \\
    &\le P(Y_n < N_{\varepsilon}) \\
    &\le P(Y_n < nc/2) \\
    &\le P(|Y_n/n - c| > c/2) \\
    &\rightarrow 0,
\end{align*}
where the last line results from $Y_n/n \xrightarrow{p} c$. This implies $E[X_{Y_n}^2|Y_n] = o_p(n^{-r})$.
\end{proof}

\begin{lemma}\label{lemma: O_a-k}
Given Assumption 4 (1) and defining $\mathcal{O}_{a,-k}^* = \{(R_iY_i,Y_i,\bX_i): i = 1,\dots, n;i \notin \mathcal{I}_k; A_i^* =a\}$ for $A_i^*$ generated by simple randomization, we have
\begin{align*}
    E[\{\widehat{\kappa}_a(\bX_i;\mathcal{O}_{a,-k}^*) - E[R(a)|\bX_i]\}^2|\mathcal{O}^*_{a, -k}] &= o_p(n^{-1/2}),\\
    E[\{\widehat{\eta}_a(\bX_i;\mathcal{O}_{a,-k}^*) - E[Y(a)|\bX_i]\}^2|\mathcal{O}^*_{a, -k}] &= o_p(n^{-1/2}).
\end{align*}
\end{lemma}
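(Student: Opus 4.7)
The plan is to reduce the problem to the setting of fixed-size i.i.d.\ training data, for which Assumption 4 (1) directly supplies the required convergence rate, and then to use Lemma~\ref{lemma: random indices} to absorb the randomness of the size of $\mathcal{O}_{a,-k}^*$.

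First, I would set up the probabilistic structure. Under simple randomization, $\{A_i^*\}_{i=1}^n$ are i.i.d.\ Bernoulli$(\pi)$ and independent of $(R_i(a),Y_i(a),\bX_i)$. Hence $\mathcal{O}_{a,-k}^*$ consists of $N_{a,-k}:=\sum_{i\notin\mathcal{I}_k}I\{A_i^*=a\}$ observations that, conditional on $\{A_i^*\}_{i\notin\mathcal{I}_k}$, are i.i.d.\ draws from the marginal law of $(R(a)Y(a),Y(a),\bX)$. One may therefore couple the data to an i.i.d.\ sequence $W_1,W_2,\ldots$ from that law, independent of $\{A_i^*\}$, so that $\mathcal{O}_{a,-k}^*\overset{d}{=}\{W_1,\ldots,W_{N_{a,-k}}\}$ with $\{W_j\}_{j\ge1}$ independent of $N_{a,-k}$. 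By the weak law of large numbers, $N_{a,-k}/n\xrightarrow{p}c_a$ for a positive constant $c_a$ (under a standard $K$-fold split, $c_a=(1-1/K)\pi^a(1-\pi)^{1-a}$).

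Second, I would read Assumption 4 (1) in the i.i.d.\ setting guaranteed by simple randomization: for the estimator trained on $m$ i.i.d.\ samples $\mathcal{D}_m$, the quantity
\begin{equation*}
L_m^{(\kappa)} := E\bigl[\{\widehat{\kappa}_a(\bX;\mathcal{D}_m)-E[R(a)|\bX]\}^2\,\big|\,\mathcal{D}_m\bigr]
\end{equation*}
satisfies $E[L_m^{(\kappa)}]=o(m^{-1/2})$, and analogously $L_m^{(\eta)}$ for $\widehat{\eta}_a$. Setting $X_m=\sqrt{L_m^{(\kappa)}}$ yields $E[X_m^2]=o(m^{-1/2})$, and by the coupling above, the sequence $\{X_m\}_{m\ge1}$ is independent of $N_{a,-k}$. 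Applying Lemma~\ref{lemma: random indices} with $r=1/2$ and $Y_n=N_{a,-k}$ gives
\begin{equation*}
E\bigl[L_{N_{a,-k}}^{(\kappa)}\,\big|\,N_{a,-k}\bigr]=o_p(n^{-1/2}).
\end{equation*}

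Third, I would upgrade the conditional-expectation statement to the lemma's conclusion, noting that $L_{N_{a,-k}}^{(\kappa)}$ is exactly the conditional second moment in the displayed conclusion (since $\mathcal{O}_{a,-k}^*$ determines both the training data and its cardinality). For any $\varepsilon>0$, Markov's inequality applied conditionally on $N_{a,-k}$ gives
\begin{equation*}
P\bigl(L_{N_{a,-k}}^{(\kappa)}>\varepsilon n^{-1/2}\bigr)\le E\!\left[\min\!\left(1,\ \frac{E[L_{N_{a,-k}}^{(\kappa)}\,|\,N_{a,-k}]}{\varepsilon n^{-1/2}}\right)\right],
\end{equation*}
and the integrand tends to $0$ in probability and is bounded by $1$, so the dominated convergence theorem delivers $L_{N_{a,-k}}^{(\kappa)}=o_p(n^{-1/2})$. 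The same argument, applied verbatim to $L_m^{(\eta)}$ with $E[Y(a)|\bX]$ in place of $E[R(a)|\bX]$, yields the second claim.

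The main obstacle is purely bookkeeping: one must justify carefully that under simple randomization the training data in $\mathcal{O}_{a,-k}^*$ is measurably equivalent to truncating an i.i.d.\ sequence at the \emph{independent} random index $N_{a,-k}$, so that Assumption 4 (1)'s fixed-$m$ rate can legitimately be composed with a random $m$. Once this coupling is in hand, Lemma~\ref{lemma: random indices} does the substantive work of transferring the rate, and the closing Markov/dominated-convergence step is routine.
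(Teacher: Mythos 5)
Your proposal is correct and follows essentially the same route as the paper: both arguments exploit that under simple randomization the training set is distributed as an i.i.d.\ sequence truncated at the independent random index $N_{a,-k}$, invoke Lemma~\ref{lemma: random indices} with $r=1/2$ to transfer the fixed-sample rate of Assumption 4 (1) to the random sample size, and close with a conditional Markov inequality plus dominated convergence. The only difference is cosmetic ordering --- the paper performs the Markov/DCT reduction first and the random-index step last, while you reverse them --- and your explicit final step is, if anything, slightly cleaner than the paper's.
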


\begin{proof}
Denoting $N_{a,-k} = \sum_{i=1}^n I\{A_i^*=a, i \notin \mathcal{I}_k\}$ and $Z= E[\{\widehat{\kappa}_a(\bX_i;\mathcal{O}_{a,-k}^*) - E[R(a)|\bX_i]\}^2|\mathcal{O}^*_{a, -k}]$, we have $P(n^{1/2} Z > \varepsilon)=E[P(n^{1/2} Z > \varepsilon|N_{a,-k})]$. By the dominated convergence theorem, it suffices to show $P(n^{1/2} Z > \varepsilon|N_{a,-k}) = o_p(1)$. By the Markov's inequality, we have $P(n^{1/2} Z > \varepsilon|N_{a,-k}) \le \varepsilon^{-1} E[n^{1/2} Z|N_{a,-k}]$, and we only need to show $E[Z|N_{a,-k}] = o_p(n^{-1/2})$. Since $N_{a,-k}$ is a function of $\mathcal{O}_{a,-k}^*$, it is equivalent to show $E[\{\widehat{\kappa}_a(\bX_i;\mathcal{O}_{a,-k}^*) - E[R(a)|\bX_i]\}^2|N_{a, -k}]= o_p(n^{-1/2})$. We next apply Lemma~\ref{lemma: random indices} to prove this result.

Define $X_n = \widehat{\kappa}_a(\bX_i; \{\bO_i(a)\}_{i=1}^n) - E[R(a)|\bX_i]$, where $\bO_i(a)= \{R_i(a)Y_i(a), R_i(a), \bX_i\}$,  and $Y_n = N_{a,-k}$. Since $\mathcal{O}_{a,-k}^* = \{\bO_i(a): i = 1,\dots, n;i \notin \mathcal{I}_k; A_i^* =a\}$ and the sample splitting and simple treatment assignment is independent of $\{\bO_i(a)\}_{i=1}^n$, we have $\mathcal{O}_{a,-k}^*|N_{a,-k} \overset{d}{=} \{\bO_i(a): i = 1,\dots, N_{a,-k}\}|N_{a,-k}$. This implies that $E[X_{Y_n}^2|Y_n] = E[\{\widehat{\kappa}_a(\bX_i;\mathcal{O}_{a,-k}^*) - E[R(a)|\bX_i]\}^2|N_{a, -k}]$, and our goal is to show $E[X_{Y_n}^2|Y_n] = o_p(1)$. To see this, we observe that (1) $Y_n/n \xrightarrow{p} \pi^a(1-\pi)^{1-a}(K-1)/K$, (2) $Y_n$ is independent of $\{\bO_i(a)\}_{i=1}^n$ and hence $X_n$, (3) $E[X_n^2] = o_p(n^{-1/2})$ by Assumption 4 (1). Therefore, Lemma~\ref{lemma: random indices} implies $E[X_{Y_n}^2|Y_n] = o_p(1)$. 

For the second result about the convergence of $\widehat{\eta}_a$, we follow the same procedure above and can get the desired result.
\end{proof}

\begin{proof}[Proof of Theorem 5.]
   Define $\pi_a = \pi^a(1-\pi)^{1-a}$, $\mu_a = E[Y(a)]$ and
    \begin{align*}
        D_i(a,\kappa_a,\eta_a) = \frac{I\{A_i=a\}}{\pi_a}\frac{R_i(a)}{\kappa_a(\bX_i)}\{Y_i(a)  - \eta_a(\bX_i)\} + \eta_a(\bX_i).
    \end{align*}
    We further define $\underline{\kappa}_a(\bX) = E[R(a)|\bX]$ and $\underline{\eta}_a(\bX) = E[Y(a)|\bX]$. Then the efficient influence function for $\Delta^*$ becomes
    \begin{align*}
      EIF(\bO_i) =  \sum_{a=0}^1 f'_a \{D_i(a,\underline{\kappa}_a,\underline{\eta}_a)-\mu_a\}.
    \end{align*}
    For the asymptotic linearity, then it suffices to prove $\sqrt{n}(\widehat{\mu}_a^{\textup{Eff}}-\mu_a) = n^{-1/2} \sum_{i=1}^n \{D_i(a, \underline{\kappa}_a,\underline{\eta}_a)-\mu_a\}$  for each $a = 0,1$. 
    For brevity, we denote $\widehat{\kappa}_{a,-k}(\bX) = \widehat{\kappa}_{a}(\bX;\mathcal{O}_{a,-k})$ and $\widehat{\eta}_{a,-k}(\bX) = \widehat{\eta}_{a}(\bX;\mathcal{O}_{a, -k})$. Then we get $\widehat{\mu}_a^{\textup{Eff}} = K^{-1} \sum_{k=1}^K |\mathcal{I}_k|^{-1} \sum_{i \in \mathcal{I}_k} D_i(a, \widehat{\kappa}_{a,-k}, \widehat{\eta}_{a,-k})$ and
    \begin{align*}
        \sqrt{n}(\widehat{\mu}_a^{\textup{Eff}}-\mu_a) &= K^{-1/2} \sum_{k=1}^K |\mathcal{I}_k|^{-1/2} \sum_{i \in \mathcal{I}_k} \{D_i(a, \widehat{\kappa}_{a, -k}, \widehat{\eta}_{a, -k}) - \mu_a \} \\
        &= K^{-1/2} \sum_{k=1}^K |\mathcal{I}_k|^{-1/2} \sum_{i \in \mathcal{I}_k} \{D_i(a, \underline{\kappa}_{a}, \underline{\eta}_{a}) - \mu_a \} \\
        &\quad +  K^{-1/2} \sum_{k=1}^K  \underbrace{|\mathcal{I}_k|^{-1/2}\sum_{i \in \mathcal{I}_k} \{D_i(a, \widehat{\kappa}_{a, -k}, \widehat{\eta}_{a, -k}) - D_i(a, \underline{\kappa}_{a}, \underline{\eta}_{a})\}}_{{\textup{denoted as}\ U_k}}.
    \end{align*}
Since $K$ is fixed, it suffices to prove $U_k = o_p(1)$ for all $k$. To this end, we decompose
\begin{align*}
    U_k = U_{k1} + U_{k2} + U_{k3},
\end{align*}
where
\begin{align*}
    U_{k1} &= |\mathcal{I}_k|^{-1/2} \sum_{i \in \mathcal{I}_k} \frac{I\{A_i=a\}}{\pi_a} \frac{\underline{\kappa}_a(\bX_i) - \widehat{\kappa}_{a,-k}(\bX_i)}{\underline{\kappa}_a(\bX_i)\widehat{\kappa}_{a,-k}(\bX_i)}\{Y_i(a) - \underline{\eta}_a(\bX_i)\},\\
    U_{k2} &= |\mathcal{I}_k|^{-1/2} \sum_{i \in \mathcal{I}_k} \left(\frac{I\{A_i=a\}}{\pi_a} \frac{R_i(a)}{\underline{\kappa}_a(\bX_i)}-1\right)  \{\underline{\eta}_a(\bX_i)-\widehat{\eta}_{a,-k}(\bX_i)\},\\
    U_{k3} &= |\mathcal{I}_k|^{-1/2} \sum_{i \in \mathcal{I}_k} \frac{I\{A_i=a\}}{\pi_a} \frac{\underline{\kappa}_a(\bX_i) - \widehat{\kappa} _{a,-k}(\bX_i)}{\underline{\kappa}_a(\bX_i)\widehat{\kappa}_a(\bX_i)}\{\underline{\eta}_{a}(\bX_i)-\widehat{\eta}_{a,-k}(\bX_i)\},
\end{align*}
and aim to show $U_{kj} = o_p(1)$ for $j=1,2,3$.  For this purpose, we denote $\mathcal{F}_n$ is the sigma field generated by $\bW_1,\dots,\bW_n$. If we can show that, for any $\varepsilon>0$, $P(|U_{kj}|>\varepsilon|\mathcal{F}_n) \xrightarrow{p} 0$, then the dominated convergence theorem for convergence in probability \citep[Theorem 1.5.3]{durrett2019probability} implies $E[|U_{kj}|]\rightarrow 0$ and hence $U_{kj} = o_p(1)$.

Under rerandomization, we define $Q^*=\bI^*{}^\top \widehat{Var}(\bI) \bI^*$, $\widehat{\kappa} _{a,-k}^*(\bX_i) = \widehat{\kappa} _a(\bX_i; \mathcal{O}^*_{a,-k})$, and $\widehat{\eta} _{a,-k}^*(\bX_i) = \widehat{\eta}_a(\bX_i; \mathcal{O}^*_{a, -k})$, where $\mathcal{O}_{a,-k}^* = \{(R_iY_i,Y_i,\bX_i): i = 1,\dots, n;i \notin \mathcal{I}_k; A_i^* =a\}$  is the same as $\mathcal{O}_{a,-k}$ except that $A_i$ is replaced by $A_i^*$, where $A_i^*$ is generated by simple randomization. 
For $j= 1,2,3$, we further define 
\begin{align*}
    U_{k1}^* &= |\mathcal{I}_k|^{-1/2} \sum_{i \in \mathcal{I}_k} \frac{I\{A_i^*=a\}}{\pi_a} \frac{\underline{\kappa}_a(\bX_i) - \widehat{\kappa}_{a,-k}^*(\bX_i)}{\underline{\kappa}_a(\bX_i)\widehat{\kappa}_{a,-k}^*(\bX_i)}\{Y_i(a) - \underline{\eta}_a(\bX_i)\},\\
    U_{k2}^* &= |\mathcal{I}_k|^{-1/2} \sum_{i \in \mathcal{I}_k} \left(\frac{I\{A_i^*=a\}}{\pi_a} \frac{R_i(a)}{\underline{\kappa}_a(\bX_i)}-1\right)  \{\underline{\eta}_a(\bX_i)-\widehat{\eta}_{a,-k}^*(\bX_i)\},\\
    U_{k3}^* &= |\mathcal{I}_k|^{-1/2} \sum_{i \in \mathcal{I}_k} \frac{I\{A_i^*=a\}}{\pi_a} \frac{\underline{\kappa}_a(\bX_i) - \widehat{\kappa} _{a,-k}^*(\bX_i)}{\underline{\kappa}_a(\bX_i)\widehat{\kappa}_{a,-k}^*(\bX_i)}\{\underline{\eta}_{a}(\bX_i)-\widehat{\eta}_{a,-k}^*(\bX_i)\}, 
\end{align*}
i.e., $U^*_{kj}$ is the same as $U_{kj}$ except that $I\{A_i=a\}$ is replaced by $I\{A^*_i=a\}$ and $\{\widehat{\kappa}_{a,-k}(\bX_i), \widehat{\eta}_{a,-k}(\bX_i)\}$ are replaced by $\{\widehat{\kappa}_{a,-k}^*(\bX_i), \widehat{\eta}_{a,-k}^*(\bX_i)\}$. By the rerandomization scheme, we have $U_{kj}|\mathcal{F}_n\overset{d}{=} U^*_{kj}|( \mathcal{F}_n, Q^* < t)$ for $j=1,2,3$. 
Then, we have
\begin{align*}
   P(|U_{kj}|> \varepsilon|\mathcal{F}_n) &= P(|U_{kj}^*|> \varepsilon|\mathcal{F}_n, Q^*<t)\\
   &= P(Q^*<t|\mathcal{F}_n)^{-1} P(|U_{kj}^*|> \varepsilon, Q^*<t|\mathcal{F}_n) \\
   &\le  P(Q^*<t|\mathcal{F}_n)^{-1} P(|U_{kj}^*|> \varepsilon|\mathcal{F}_n).
\end{align*}
Since Lemma~\ref{lem:condCLT} leads to $\lim_{n \rightarrow \infty}P(Q^*<t) = P(\mathcal{X}^2_q < t|\mathcal{F}_n) > 0$, the above derivation yields that $P(|U_{kj}^*|> \varepsilon|\mathcal{F}_n) = o_p(1)$ implies $P(|U_{kj}|> \varepsilon|\mathcal{F}_n) =o_p(1)$ and hence $U_{kj} = o_p(1)$. By Markov's inequality, it suffices to show $U_{kj}^* = o_p(1)$. For this purpose, we observe $P(|U_{kj}^*|> \varepsilon) = E[P(|U_{kj}^*|> \varepsilon|\mathcal{O}^*_{a,-k})]$. By the dominated convergence theorem, it suffices to show $P(|U_{k1}^*|> \varepsilon|\mathcal{O}^*_{a,-k}) = o_p(1)$. Since the Markov's inequality shows $P(|U_{kj}^*|> \varepsilon|\mathcal{O}^*_{a,-k}) \le \varepsilon^{-2} E[U_{kj}^*{}^2|\mathcal{O}^*_{a,-k}]$, it suffices to show $E[U_{kj}^*{}^2|\mathcal{O}^*_{a,-k}] = o_p(1)$ for $j=1,2,3$.

For showing $E[U_{k1}^*{}^2|\mathcal{O}^*_{a,-k}] = o_p(1)$, since $\widehat{\kappa}_{a,-k}^*(\bX_i)$ is a deterministic function of $\bX_i$ given $\mathcal{O}_{a,-k}^*$, and $\{A_i^*,Y_i(a), R_i(a), \bX_i: i \in \mathcal{I}_k\}$ is independent of $\mathcal{O}_{a,-k}^*$, we have
\begin{align*}
    E[U_{k1}^*|\mathcal{O}_{a,-k}^*] &= |\mathcal{I}_k|^{-1/2} \sum_{i \in \mathcal{I}_k} E\left[\frac{I\{A_i^*=a\}}{\pi_a} \frac{\underline{\kappa}_a(\bX_i) - \widehat{\kappa}_{a,-k}^*(\bX_i)}{\underline{\kappa}_a(\bX_i)\widehat{\kappa}_{a,-k}^*(\bX_i)}\{Y_i(a) - \underline{\eta}_a(\bX_i)\}\bigg|\mathcal{O}_{a,-k}^*\right] \\
    &= |\mathcal{I}_k|^{-1/2} \sum_{i \in \mathcal{I}_k} E\left[\frac{\underline{\kappa}_a(\bX_i) - \widehat{\kappa}_{a,-k}^*(\bX_i)}{\underline{\kappa}_a(\bX_i)\widehat{\kappa}_{a,-k}^*(\bX_i)}\{Y_i(a) - \underline{\eta}_a(\bX_i)\}\bigg|\mathcal{O}_{a,-k}^*\right] \\
    &= |\mathcal{I}_k|^{-1/2} \sum_{i \in \mathcal{I}_k} E\left[\frac{\underline{\kappa}_a(\bX_i) - \widehat{\kappa}_{a,-k}^*(\bX_i)}{\underline{\kappa}_a(\bX_i)\widehat{\kappa}_{a,-k}^*(\bX_i)}\{E[Y_i(a)|\bX_i,\mathcal{\mathcal{O}}_{a,-k}^*] - \underline{\eta}_a(\bX_i)\}\bigg|\mathcal{O}_{a,-k}^*\right] \\
    &= |\mathcal{I}_k|^{-1/2} \sum_{i \in \mathcal{I}_k} E\left[\frac{\underline{\kappa}_a(\bX_i) - \widehat{\kappa}_{a,-k}^*(\bX_i)}{\underline{\kappa}_a(\bX_i)\widehat{\kappa}_{a,-k}^*(\bX_i)}\{E[Y_i(a)|\bX_i] - \underline{\eta}_a(\bX_i)\}\bigg|\mathcal{O}_{-k}^*\right] \\    
    &= 0.
\end{align*}
Since $\{(A_i^*, Y_i(a), \bX_i), i \in \mathcal{I}_k\}$ are independent and identically distributed and are independent of $\mathcal{O}^*_{a,-k}$,  then $\{(A_i^*, Y_i(a), \bX_i), i \in \mathcal{I}_k\}$ are independent and identically distributed given  $\mathcal{O}^*_{a,-k}$.
Therefore, 
\begin{align*}
    E[U_{k1}^*{}^2|\mathcal{O}^*_{a, -k}] &= E[\{U_{k1}^* - E[U_{k1}^*|\mathcal{O}_{a,-k}^*]\}^2|\mathcal{O}^*_{a,-k}] \\
    &= Var(U_{k1}^*|\mathcal{O}^*_{a,-k}) \\
    &= Var\left\{\frac{I\{A_i^*=a\}}{\pi_a} \frac{\underline{\kappa}_a(\bX_i) - \widehat{\kappa}_{a,-k}^*(\bX_i)}{\underline{\kappa}_a(\bX_i)\widehat{\kappa}_{a,-k}^*(\bX_i)}\{Y_i(a) - \underline{\eta}_a(\bX_i)\}\bigg|\mathcal{O}^*_{a,-k}\right\} \\
    &= E\left[\left[\frac{I\{A_i^*=a\}}{\pi_a} \frac{\underline{\kappa}_a(\bX_i) - \widehat{\kappa}_{a,-k}^*(\bX_i)}{\underline{\kappa}_a(\bX_i)\widehat{\kappa}_{a,-k}^*(\bX_i)}\{Y_i(a) - \underline{\eta}_a(\bX_i)\}\right]^2\bigg|\mathcal{O}^*_{a,-k}\right] \\
    &\le C\times E[\{\underline{\kappa}_a(\bX_i) - \widehat{\kappa}_{a,-k}^*(\bX_i)\}^2|\mathcal{O}^*_{a,-k}],
\end{align*}
where $C$ is a constant and the last inequality comes from  Assumptions 2 and 4(2) that $\{\underline{\kappa}_a(\bX_i)\}^{-1}$, $\{\widehat{\kappa}_{a,-k}^*(\bX_i)\}^{-1}$, and $E[Y_i^2(a)|\bX_i]$ are all uniformly bounded. 
By Lemma~\ref{lemma: O_a-k}, we have $E[\{\underline{\kappa}_a(\bX_i) - \widehat{\kappa}_{a,-k}^*(\bX_i)\}^2|\mathcal{O}^*_{a,-k}] = o_p(1)$, which implies $U_{k1} = o_p(1)$.

For $U_{k2}^*$, following a similar proof to $U_{k1}^*$, we have $E[U_{k2}^*|\mathcal{O}_{a,-k}^*] = 0$ and
\begin{align*}
    E[U_{k2}^*{}^2|\mathcal{O}^*_{a,-k}] &= E[\{U_{k2}^* - E[U_{k2}^*|\mathcal{O}_{a,-k}^*]\}^2|\mathcal{O}^*_{a,-k}] \\
    &= Var(U_{k2}^*|\mathcal{O}^*_{a,-k}) \\
    &= Var\left[\left(\frac{I\{A_i^*=a\}}{\pi_a} \frac{R_i(a)}{\underline{\kappa}_a(\bX_i)}-1\right)  \{\underline{\eta}_a(\bX_i)-\widehat{\eta}_{a,-k}^*(\bX_i)\}\bigg|\mathcal{O}^*_{a,-k}\right] \\
    &= E\left[\left(\frac{I\{A_i^*=a\}}{\pi_a} \frac{R_i(a)}{\underline{\kappa}_a(\bX_i)}-1\right)^2  \{\underline{\eta}_a(\bX_i)-\widehat{\eta}_{a,-k}^*(\bX_i)\}^2\bigg|\mathcal{O}^*_{a,-k}\right] \\
    &\le C\times E[\{\underline{\eta}_a(\bX_i)-\widehat{\eta}_{a,-k}^*(\bX_i)\}^2|\mathcal{O}^*_{a,-k}],
\end{align*}
where $C$ is a constant and the last inequality comes from Assumption 2 that $\{\underline{\kappa}_a(\bX_i)\}^{-1}$ is uniformly bounded. 
By Lemma~\ref{lemma: O_a-k}, we have $E[\{\underline{\eta}_a(\bX_i) - \widehat{\eta}_{a,-k}^*(\bX_i)\}^2|\mathcal{O}^*_{a,-k}] = o_p(1)$, which implies $U_{k2} = o_p(1)$.

For showing $E[U_{k3}^*{}^2|\mathcal{O}^*_{a,-k}] = o_p(1)$, the proof is slightly different. Since $\{(A_i^*,\bX_i): i\in \mathcal{I}_k\}$ are independent and identically distributed given $\mathcal{O}_{a,-k}^*$ and $\widehat{\kappa}_{a,-k}, \widehat{\eta}_{a,-k}$ are fixed functions given $\mathcal{O}_{a,-k}^*$, we have
\begin{align*}
    E[U_{k3}^*|\mathcal{O}^*_{a,-k}] &= \sqrt{|\mathcal{I}_k|} E\left[\frac{\underline{\kappa}_a(\bX_i) - \widehat{\kappa} _{a,-k}^*(\bX_i)}{\underline{\kappa}_a(\bX_i)\widehat{\kappa}_{a,-k}^*(\bX_i)}\{\underline{\eta}_{a}(\bX_i)-\widehat{\eta}_{a,-k}^*(\bX_i)\}\bigg|\mathcal{O}^*_{a,-k}\right] \\
    &\le C \sqrt{|\mathcal{I}_k|} E\left[\{\underline{\kappa}_a(\bX_i) - \widehat{\kappa} _{a,-k}^*(\bX_i)\}\{\underline{\eta}_{a}(\bX_i)-\widehat{\eta}_{a,-k}^*(\bX_i)\}\bigg|\mathcal{O}^*_{a,-k}\right] \\
    &\le  C \sqrt{|\mathcal{I}_k|} E\left[\{\underline{\kappa}_a(\bX_i) - \widehat{\kappa} _{a,-k}^*(\bX_i)\}^2\bigg|\mathcal{O}^*_{a,-k}\right]^{1/2}E\left[\{\underline{\eta}_{a}(\bX_i)-\widehat{\eta}_{a,-k}^*(\bX_i)\}^2\bigg|\mathcal{O}^*_{a,-k}\right]^{1/2}\\
    &=  C \sqrt{|\mathcal{I}_k|} o_p(|\mathcal{I}_k|^{-1/4})  o_p(|\mathcal{I}_k|^{-1/4})\\
    &= o_p(1),
\end{align*}
where the first inequality results from the regularity condition that $\{\underline{\kappa}_a(\bX_i)\}^{-1}, \{\widehat{\kappa} _{a,-k}^*(\bX_i)\}^{-1}$ are uniformly bounded, the second inequality is implied by the Cauchy-Schwarz inequality, and the next equation comes from Lemma~\ref{lemma: O_a-k}. We then have
\begin{align*}
    E[U_{k3}^*{}^2|\mathcal{O}^*_{a,-k}] &= Var(U_{k3}^*|\mathcal{O}^*_{a,-k}) + E[U_{k3}^*|\mathcal{O}^*_{a,-k}]^2\\
    &= Var\left[ \frac{I\{A_i^*=a\}}{\pi_a} \frac{\underline{\kappa}_a(\bX_i) - \widehat{\kappa} _{a,-k}^*(\bX_i)}{\underline{\kappa}_a(\bX_i)\widehat{\kappa}_{a,-k}^*(\bX_i)}\{\underline{\eta}_{a}(\bX_i)-\widehat{\eta}_{a,-k}^*(\bX_i)\}\bigg| \mathcal{O}_{a,-k}^*\right] +o_p(1)\\
    &\le E\left[\left[\frac{I\{A_i^*=a\}}{\pi_a} \frac{\underline{\kappa}_a(\bX_i) - \widehat{\kappa} _{a,-k}^*(\bX_i)}{\underline{\kappa}_a(\bX_i)\widehat{\kappa}_{a,-k}^*(\bX_i)}\{\underline{\eta}_{a}(\bX_i)-\widehat{\eta}_{a,-k}^*(\bX_i)\}\right]^2\bigg| \mathcal{O}_{a,-k}^*\right] + o_p(1) \\
    &\le C\times E[\{\underline{\eta}_{a}(\bX_i)-\widehat{\eta}_{a,-k}^*(\bX_i)\}^2|\mathcal{O}_{a,-k}^*] + o_p(1)\\
    &= C\times o_p(1) + o_p(1)\\
    &= o_p(1).
\end{align*}
Therefore, we have $U_{k3} = o_p(1)$. This completes the proof for showing the remainder term $U_k = o_p(1)$ and hence asymptotic linearity under rerandomization. 

Then the asymptotic distribution is a direct result of Theorem 1, which also implies consistency. When $\bX$ includes $\bX^r$ for covariate adjustment, we can compute
\begin{align*}
    EIF(\bO(1)) - EIF(\bO(0)) = \sum_{a=0}^1 (-1)^{1-a} f_a'\left[ \frac{1}{\pi_a} \frac{R(a)}{E[R(a)|\bX]}\{Y(a) - E[Y(a)|\bX]\} \right],
\end{align*}
which implies
\begin{align*}
    E\left[EIF(\bO(1)) - EIF(\bO(0))|\bX\right] &= \sum_{a=0}^1 (-1)^{1-a} f_a' \frac{1}{\pi_a} E\left[\frac{R(a)}{E[R(a)|\bX]}\{Y(a) - E[Y(a)|\bX]\}\bigg|\bX\right] \\
    &= 0
\end{align*}
by the missing at random assumption. Hence $Cov(EIF(\bO(1)) - EIF(\bO(0)), \bX) = 0$. If $\bX$ contains $\bX^r$ as a subset, then $Cov(EIF(\bO(1)) - EIF(\bO(0)), \bX^r) = 0$, implying $R^2=0$.

Finally, for consistent variance estimation, we define
\begin{align*}
    \widehat{EIF}_i(a) = \frac{I\{A_i=a\}}{\pi^a(1-\pi)^{1-a}}\frac{R_i}{\widehat{\kappa}_a(\bX_i)}\{Y_i - \widehat{\eta}_a(\bX_i)\} + \widehat{\eta}_a(\bX_i)-\widehat{\mu}_a^{\textup{dml}},
\end{align*}
and $\widehat{EIF}_{i}= f_1' \widehat{EIF}_{i}(1)+ f_0'\widehat{EIF}_{i}(0)$. Under rerandomization, we construct
\begin{align}
    \widehat{V} &= \frac{1}{K}\sum_{k=1}^K \frac{1}{|\mathcal{I}_k|} \sum_{i \in \mathcal{I}_k} \widehat{EIF}_{i}^2, \label{Vhat: ml}\\
    \widehat{R^2} &= \frac{1}{\widehat{V}} \widehat{C}^\top \{n \widehat{Var}(\bI)\}^{-1} \widehat{C},\label{R2hat: ml}
\end{align}
where $\widehat{C} = \frac{1}{K}\sum_{k=1}^K \frac{1}{|\mathcal{I}_k|} \sum_{i \in \mathcal{I}_k} \frac{A_i-\pi}{\pi(1-\pi)} \widehat{EIF}_{i} (\bX_i^r - \overline{\bX}^r)$.

To show consistency of $\widehat{V}$ to $V = Var\{A^* EIF(\bO(1))+(1-A^*)EIF(\bO(0))\}$, we define $Q_k = \frac{1}{|\mathcal{I}_k|} \sum_{i \in \mathcal{I}_k} \widehat{EIF}_{i}^2 - EIF(\bO_i)^2$. Following a similar proof to showing $U_k = o_p(1)$, we can get $Q_k = o_p(1)$. Since Lemma~\ref{lemma1} implies that $\frac{1}{|\mathcal{I}_k|} \sum_{i \in \mathcal{I}_k} EIF(\bO_i)^2 \xrightarrow{p} V$, we have $\widehat{V} \xrightarrow{p} V$. The proof of $ \widehat{R^2} \xrightarrow{p} R^2$ is similar and hence omitted here.




\end{proof}

\subsubsection{Proof of Theorem 6}
\begin{proof}
    We inherit all notation from the proof of Theorem 5. In addition, we define $T_i$ as the random variable encoding the fold that subject $i$ belongs to and define     
    \begin{align*}
        D_i(a,s,k,\kappa_a,\eta_a) = I\{S_i=s, T_i=k\} \left[\frac{I\{A_i=a\}}{\pi_a}\frac{R_i(a)}{\kappa_a(\bX_i)}\{Y_i(a)  - \eta_a(\bX_i)\} + \eta_a(\bX_i)\right],
    \end{align*}
    which satisfy $D_i(a,s,k,\kappa_a,\eta_a) = I\{S_i=s, T_i=k\} D(a, \kappa_a,\eta_a)$. 
    For asymptotic linearity, we then need to show $\sqrt{n}(\widehat{\mu}_a^{\textup{Eff}} - \mu_a) = n^{-1/2} \sum_{k=1}^K \sum_{s \in \mathcal{S}}\sum_{i=1}^n \{D(a, s,k,\underline{\kappa}_a,\underline{\eta}_a) - \mu_a\} + o_p(1)$. 

    Defining $\widehat{\kappa}_{as,-k}(\bX_i) = \widehat{\kappa}_a(\bX_i; \mathcal{O}_{as,-k})$ and $\widehat{\eta}_{as,-k}(\bX_i) = \widehat{\eta}_a(\bX_i; \mathcal{O}_{as,-k})$ with $ \mathcal{O}_{as,-k} = \{(R_iY_i, R_i, \bX_i): A_i = a, S_i=s, T_i\ne k\}$, we have $\widehat{\mu}_a^{\textup{Eff}} = n^{-1}\sum_{k=1}^K \sum_{s \in \mathcal{S}}\sum_{i=1}^n D_i(a, s,k,\widehat{\kappa}_{as,-k},\widehat{\eta}_{as,-k})$ and
    \begin{align*}
        \sqrt{n}(\widehat{\mu}_a^{\textup{Eff}} - \mu_a) &= n^{-1/2} \sum_{k=1}^K \sum_{s \in \mathcal{S}}\sum_{i=1}^n \{D_i(a, s,k,\widehat{\kappa}_{as,-k},\widehat{\eta}_{as,-k}) - \mu_a\} \\
        &= n^{-1/2} \sum_{k=1}^K \sum_{s \in \mathcal{S}}\sum_{i=1}^n \{D(a, s,k,\underline{\kappa}_a,\underline{\eta}_a) - \mu_a\} \\
        &\quad +\sum_{k=1}^K \sum_{s \in \mathcal{S}} \underbrace{n^{-1/2} \sum_{i=1}^n \{D(a, s,k,\widehat{\kappa}_{as,-k},\widehat{\eta}_{as,-k})-D(a, s,k,\underline{\kappa}_a,\underline{\eta}_a)\}}_{\textup{denoted as}\ U_{sk}}.
    \end{align*}
    Since $K$ and $\mathcal{S}$ are fixed, it suffices to prove $U_{sk}=o_p(1)$ for asymptotic linearity. For this goal, we decompose $U_{sk}$ as $U_{sk} = U_{sk1} + U_{sk2} + U_{sk3}$, where
    \begin{align*}
    U_{sk1} &= n^{-1/2} \sum_{i=1}^n \frac{I\{A_i=a,S_i=s, T_i=k\}}{\pi_a} \frac{\underline{\kappa}_{a}(\bX_i) - \widehat{\kappa}_{as,-k}(\bX_i)}{\underline{\kappa}_{a}(\bX_i)\widehat{\kappa}_{as,-k}(\bX_i)}\{Y_i(a) - \underline{\eta}_{a}(\bX_i)\},\\
    U_{sk2} &= n^{-1/2} \sum_{i=1}^n I\{S_i=s, T_i=k\}\left(\frac{I\{A_i=a\}}{\pi_a} \frac{R_i(a)}{\underline{\kappa}_{a}(\bX_i)}-1\right)  \{\underline{\eta}_{a}(\bX_i)-\widehat{\eta}_{as,-k}(\bX_i)\},\\
    U_{sk3} &= n^{-1/2} \sum_{i=1}^n \frac{I\{A_i=a,S_i=s, T_i=k\}}{\pi_a} \frac{\underline{\kappa}_a(\bX_i) - \widehat{\kappa} _{as,-k}(\bX_i)}{\underline{\kappa}_a(\bX_i)\widehat{\kappa} _{as,-k}(\bX_i)}\{\underline{\eta}_{a}(\bX_i)-\widehat{\eta}_{as,-k}(\bX_i)\}.
\end{align*}
This decomposition motivates us to show $U_{skj} = o_p(1)$ for $j=1,2,3$. If we can show that, for any $\varepsilon>0$, $P(|U_{skj}|>\varepsilon|\mathcal{F}_n) \xrightarrow{p} 0$, then the dominated convergence theorem for convergence in probability \citep[Theorem 1.5.3]{durrett2019probability} implies $E[|U_{skj}|]\rightarrow 0$ and hence $U_{skj} = o_p(1)$.

 We define $\widetilde{Q}=\widetilde{\bI}^\top \widehat{Var}(\widetilde{\bI}) \widetilde{\bI}$, $\widetilde{\kappa}_{as,-k}(\bX_i) = \widehat{\kappa}_{a}(\bX_i; \widetilde{\mathcal{O}}_{as,-k})$, and $\widetilde{\eta}_{as,-k}(\bX_i) = \widehat{\eta}_a(\bX_i; \widetilde{\mathcal{O}}_{as,-k})$, where $\widetilde{\mathcal{O}}_{as, -k}=\{(R_i(a)Y_i(a), R_i(a), \bX_i): \tA_i = a, S_i=s, T_i\ne k\}$ is the same as $\mathcal{O}_{as, -k}$ except that $A_i$ is replaced by $\tA_i$, where $\tA_i$ is generated by stratified randomization. We further define
     \begin{align*}
    \widetilde{U}_{sk1} &= n^{-1/2} \sum_{i=1}^n \frac{I\{\tA_i=a,S_i=s, T_i=k\}}{\pi_a} \frac{\underline{\kappa}_{a}(\bX_i) - \widetilde{\kappa}_{as,-k}(\bX_i)}{\underline{\kappa}_{a}(\bX_i)\widetilde{\kappa}_{as,-k}(\bX_i)}\{Y_i(a) - \underline{\eta}_{a}(\bX_i)\},\\
    \widetilde{U}_{sk2} &= n^{-1/2} \sum_{i=1}^n I\{S_i=s, T_i=k\}\left(\frac{I\{\tA_i=a\}}{\pi_a} \frac{R_i(a)}{\underline{\kappa}_{a}(\bX_i)}-1\right)  \{\underline{\eta}_{a}(\bX_i)-\widetilde{\eta}_{as,-k}(\bX_i)\},\\
    \widetilde{U}_{sk3} &= n^{-1/2} \sum_{i=1}^n \frac{I\{\tA_i=a,S_i=s, T_i=k\}}{\pi_a} \frac{\underline{\kappa}_a(\bX_i) - \widetilde{\kappa} _{as,-k}(\bX_i)}{\underline{\kappa}_a(\bX_i)\widetilde{\kappa} _{as,-k}(\bX_i)}\{\underline{\eta}_{a}(\bX_i)-\widetilde{\eta}_{as,-k}(\bX_i)\}.
\end{align*}
By the stratified rerandomization scheme, we have $U_{skj}|\mathcal{F}_n \overset{d}{=} \widetilde{U}_{skj}|( \mathcal{F}_n, \widetilde{Q} < t)$ for $j=1,2,3$. 
Then, we have
\begin{align*}
   P(|U_{skj}|> \varepsilon|\mathcal{F}_n) &= P(|\widetilde{U}_{skj}|> \varepsilon|\mathcal{F}_n, \widetilde{Q}<t)\\
   &= P(\widetilde{Q}<t|\mathcal{F}_n)^{-1} P(|\widetilde{U}_{skj}|> \varepsilon, \widetilde{Q}<t|\mathcal{F}_n) \\
   &\le  P(\widetilde{Q}<t|\mathcal{F}_n)^{-1} P(|\widetilde{U}_{skj}|> \varepsilon|\mathcal{F}_n).
\end{align*}
Since Lemma~\ref{lemma3} leads to $P(\widetilde{Q}<t|\mathcal{F}_n) \xrightarrow{p} P(\mathcal{X}^2_q < t) > 0$, the above derivation yields that $P(|\widetilde{U}_{skj}|> \varepsilon|\mathcal{F}_n) = o_p(1)$ implies $ P(|U_{skj}|> \varepsilon|\mathcal{F}_n) =o_p(1)$ and hence $U_{skj} = o_p(1)$. By Markov's inequality, it suffices to show $\widetilde{U}_{skj}= o_p(1)$. 
This result implies that the question reduces to controlling the remainder terms under stratified randomization (without rerandomization).

For showing $\widetilde{U}_{skj}= o_p(1)$, we next perform a series of transformations of $\widetilde{U}_{skj}$ such that the resulting quantities are easier to handle. These transforms are the same as those made in the proof of Theorem 4.1 of \cite{rafi2023efficient} and also resemble those made in \cite{wang2023model}, representing a standard yet complex method to deal with correlations introduced by stratified randomization. To start with, we define
\begin{align*}
    N_s &= \sum_{i=1}^n I\{S_i=s\}, \\
    N_{sk} &= \sum_{i=1}^n I\{S_i=s, T_i =k\}, \\
    N_{as} &= \sum_{i=1}^n I\{\tA_i =a, S_i=s\},\\
    N_{ask} &= \sum_{i=1}^n I\{\tA_i=a,S_i=s, T_i=k\}, \\
    N_{as,-k} &= \sum_{i=1}^n I\{\tA_i=a,S_i=s, T_i\ne k\}.
\end{align*}
Then, we independently generate $\{\bW_i(s)\}_{i=1}^n$ following definition~\ref{def: conditional-W} and define
\begin{align*}
    \widetilde{U}_{sk1}^\dagger &= n^{-1/2} \sum_{i=1}^n \frac{I\{\tA_i=a,S_i=s, T_i=k\}}{\pi_a} \frac{\underline{\kappa}_{as}(\bX_i(s)) - \widetilde{\kappa}_{as,-k}^\dagger(\bX_i(s))}{\underline{\kappa}_{as}(\bX_i(s))\widetilde{\kappa}_{as,-k}^\dagger(\bX_i(s))}\{Y_i(a,s) - \underline{\eta}_{as}(\bX_i(s))\},\\
    \widetilde{U}_{sk2}^\dagger &= n^{-1/2} \sum_{i=1}^n I\{S_i=s, T_i=k\}\left(\frac{I\{\tA_i=a\}}{\pi_a} \frac{R_i(a,s)}{\underline{\kappa}_{as}(\bX_i(s))}-1\right)  \{\underline{\eta}_{as}(\bX_i(s))-\widetilde{\eta}_{as,-k}(\bX_i(s))\},\\
    \widetilde{U}_{sk3}^\dagger &= n^{-1/2} \sum_{i=1}^n \frac{I\{\tA_i=a,S_i=s, T_i=k\}}{\pi_a} \frac{\underline{\kappa}_{as}(\bX_i(s)) - \widetilde{\kappa} _{as,-k}(\bX_i(s))}{\underline{\kappa}_{as}(\bX_i(s))\widetilde{\kappa} _{as,-k}(\bX_i(s))}\{\underline{\eta}_{as}(\bX_i(s))-\widetilde{\eta}_{as,-k}(\bX_i(s))\},
\end{align*}
where $\underline{\kappa}_{as}(\bX_i(s))= E[Y_i(a,s)|\bX_i(s)]$,  $\underline{\eta}_{as}(\bX_i(s))= E[R_i(a,s)|\bX_i(s)]$, $\widetilde{\kappa}_{as,-k}^\dagger(\bX_i(s)) = \widehat{\kappa}_{a}(\bX_i(s); \mathcal{O}^\dagger_{as,-k})$ and $\widetilde{\eta}_{as,-k}^\dagger(\bX_i(s)) = \widehat{\eta}_{a}(\bX_i(s); \mathcal{O}^\dagger_{as,-k})$ with $\mathcal{O}^\dagger_{as,-k} = \{R_i(a,s)Y_i(a,s), R_i(a,s), \bX_i(s): \tA_i = a, S_i=s, T_i\ne k\}$. Since $\bW_i(s)\overset{d}{=} \bW_i|(S_i=s)$ and $\{\bW_i(s)\}_{i=1}^n$  is independent of $\{\tA_i, S_i, T_i\}_{i=1}^n$, we have $\{W_i:S_i=s\}|\{\tA_i, S_i, T_i\}_{i=1}^n \overset{d}{=} \{W_i(s):S_i=s\}|\{\tA_i, S_i, T_i\}_{i=1}^n$. Since $ \widetilde{U}_{skj}$ is deterministic function of $\{W_i:S_i=s\}$ and $\{\tA_i, S_i, T_i\}_{i=1}^n$, we have $ \widetilde{U}_{skj}|\{\tA_i, S_i, T_i\}_{i=1}^n \overset{d}{=} \widetilde{U}_{skj}^\dagger|\{\tA_i, S_i, T_i\}_{i=1}^n$, which implies $\widetilde{U}_{skj} \overset{d}{=} \widetilde{U}_{skj}^\dagger$. Our goal is hence to prove $\widetilde{U}_{skj}^\dagger = o_p(1)$.

Next, we observe that, if we permute the indices of $i=1,\dots,n$ based on $\{\tA_i, S_i, T_i\}_{i=1}^n$,  the distribution of $\widetilde{U}_{skj}^\dagger$ remains unchanged because  $\{\bW_i(s)\}_{i=1}^n$  is independent of $\{\tA_i, S_i, T_i\}_{i=1}^n$.
Therefore, we apply a permutation such that $\{i:S_i=s\}$ are mapped to $\{1,\dots, N_s\}$, $\{i:S_i=s, T_i =k\}$ are mapped to $\{1,\dots, N_{sk}\}$, and  $\{i:\tA_i=a,S_i=s, T_i=k\}$ are mapped to $\{1,\dots, N_{ask}\}$. With such permutations, we have
\begin{align*}
    \widetilde{U}_{sk1}^\dagger &\overset{d}{=} n^{-1/2} \sum_{i=1}^{N_{ask}} \frac{1}{\pi_a} \frac{\underline{\kappa}_{as}(\bX_i(s)) - \widetilde{\kappa}_{as,-k}^\dagger(\bX_i(s))}{\underline{\kappa}_{as}(\bX_i(s))\widetilde{\kappa}_{as,-k}^\dagger(\bX_i(s))}\{Y_i(a,s) - \underline{\eta}_{as}(\bX_i(s))\},\\
    \widetilde{U}_{sk2}^\dagger &\overset{d}{=} n^{-1/2}  \left(\sum_{i=1}^{N_{ask}}\frac{1}{\pi_a} \frac{R_i(a,s)}{\underline{\kappa}_{as}(\bX_i(s))}\{\underline{\eta}_{as}(\bX_i(s))-\widetilde{\eta}_{as,-k}^\dagger(\bX_i(s))\}-\sum_{i=1}^{N_{sk}} \{\underline{\eta}_{as}(\bX_i(s))-\widetilde{\eta}_{as,-k}^\dagger(\bX_i(s))\}\right)  ,\\
    \widetilde{U}_{sk3}^\dagger &\overset{d}{=} n^{-1/2} \sum_{i=1}^{N_{ask}} \frac{1}{\pi_a} \frac{\underline{\kappa}_{as}(\bX_i(s)) - \widetilde{\kappa}^\dagger_{as,-k}(\bX_i(s))}{\underline{\kappa}_{as}(\bX_i(s))\widetilde{\kappa}^\dagger_{as,-k}(\bX_i(s))}\{\underline{\eta}_{as}(\bX_i(s))-\widetilde{\eta}_{as,-k}^\dagger(\bX_i(s))\},
\end{align*}
where 
$\mathcal{O}^\dagger_{as,-k} = \{R_i(a,s)Y_i(a,s), R_i(a,s), \bX_i(s): i = N_{sk}+1,\dots, N_{s}; \tA_i = a\}$. 
To further simplify the indices among observations in $\mathcal{O}^\dagger_{as,-k}$, we independently generate $\{\bW_i^{\ddagger}(s)\}_{i=1}^n$ following definition~\ref{def: conditional-W} such that $\{\bW_i(s)^{\ddagger}\}_{i=1}^n$ is independent of $\{\bW_i(s)\}_{i=1}^n$ and $\{\tA_i,S_i,T_i\}_{i=1}^n$. We define $\mathcal{O}^\ddagger_{as,-k} = \{R_i^{\ddagger}(a,s)Y_i^{\ddagger}(a,s), R_i^{\ddagger}(a,s), \bX_i^{\ddagger}(s): i = 1,\dots, N_{as,-k}\}$ and $\bN = (N_{ask}, N_{as,-k}, N_{s},N_{sk})$. By definition, we have $\mathcal{O}^\ddagger_{as,-k}|\bN \overset{d}{=} \mathcal{O}^\dagger_{as,-k}|\bN$, and $\mathcal{O}^\ddagger_{as,-k}$ is independent of $\{\bW_i(s)\}_{i=1}^n$ given $\bN$. We next define $\widetilde{\kappa}_{as,-k}^\ddagger(\bX_i(s)) = \widehat{\kappa}_{a}(\bX_i(s); \mathcal{O}^\ddagger_{as,-k})$, $\widetilde{\eta}_{as,-k}^\ddagger(\bX_i) = \widehat{\eta}_{a}(\bX_i(s); \mathcal{O}^\ddagger_{as,-k})$, and
\begin{align*}
    \widetilde{U}_{sk1}^\ddagger &= n^{-1/2} \sum_{i=1}^{N_{ask}} \frac{1}{\pi_a} \frac{\underline{\kappa}_{as}(\bX_i(s)) - \widetilde{\kappa}_{as,-k}^\ddagger(\bX_i(s))}{\underline{\kappa}_{as}(\bX_i(s))\widetilde{\kappa}_{as,-k}^\ddagger(\bX_i(s))}\{Y_i(a,s) - \underline{\eta}_{as}(\bX_i(s))\},\\
    \widetilde{U}_{sk2}^\ddagger &= n^{-1/2}  \left(\sum_{i=1}^{N_{ask}}\frac{1}{\pi_a} \frac{R_i(a,s)}{\underline{\kappa}_{as}(\bX_i(s))}\{\underline{\eta}_{as}(\bX_i(s))-\widetilde{\eta}_{as,-k}^\ddagger(\bX_i(s))\}-\sum_{i=1}^{N_{sk}} \{\underline{\eta}_{as}(\bX_i(s))-\widetilde{\eta}_{as,-k}^\ddagger(\bX_i(s))\}\right)  ,\\
    \widetilde{U}_{sk3}^\ddagger &= n^{-1/2} \sum_{i=1}^{N_{ask}} \frac{1}{\pi_a} \frac{\underline{\kappa}_{as}(\bX_i(s)) - \widetilde{\kappa}^\ddagger_{as,-k}(\bX_i(s))}{\underline{\kappa}_{as}(\bX_i(s))\widetilde{\kappa}^\ddagger_{as,-k}(\bX_i(s))}\{\underline{\eta}_{as}(\bX_i(s))-\widetilde{\eta}_{as,-k}^\ddagger(\bX_i(s))\}
\end{align*}
Here, $\widetilde{U}_{skj}^\ddagger$ differs from $\widetilde{U}_{skj}^\dagger$ only in that $(\widetilde{\kappa}_{as,-k}^\ddagger, \widetilde{\eta}_{as,-k}^\ddagger)$ substitutes $(\widetilde{\kappa}_{as,-k}^\dagger, \widetilde{\eta}_{as,-k}^\dagger)$. Since $\mathcal{O}_{sk}^\dagger = \{R_i(a,s)Y_i(a,s), R_i(a,s): i=1,\dots, N_{sk}\}$ is independent of $\mathcal{O}_{as,-k}^\dagger$ given $\bN$, we have $P(\mathcal{O}_{sk}^\dagger, \mathcal{O}_{as,-k}^\dagger) = P(\mathcal{O}_{sk}^\dagger|\bN)P(\mathcal{O}_{as,-k}^\dagger|\bN)P(\bN) = P(\mathcal{O}_{sk}^\dagger|\bN)P(\mathcal{O}_{as,-k}^\ddagger|\bN)P(\bN) = P(\mathcal{O}_{sk}^\dagger, \mathcal{O}_{as,-k}^\ddagger)$. Since $\widetilde{U}_{skj}^\ddagger$ is a deterministic function of $(\mathcal{O}_{sk}^\dagger, \mathcal{O}_{as,-k}^\ddagger)$ and $\widetilde{U}_{skj}^\dagger$ is the same deterministic function of $(\mathcal{O}_{sk}^\dagger, \mathcal{O}_{as,-k}^\dagger)$, we have $\widetilde{U}_{skj}^\ddagger \overset{d}{=} \widetilde{U}_{skj}^\dagger$, which means we only need to show $\widetilde{U}_{skj}^\ddagger = o_p(1)$.
Since $P(\widetilde{U}_{skj}^\ddagger > \varepsilon) = E[P(\widetilde{U}_{skj}^\ddagger > \varepsilon|\mathcal{O}_{as,-k}^\ddagger, \bN)]$, the dominated convergence theorem shows that it suffices to prove $P(\widetilde{U}_{skj}^\ddagger > \varepsilon|\mathcal{O}_{as,-k}^\ddagger, \bN) = o_p(1)$. Since Markov's inequality implies $P(\widetilde{U}_{skj}^\ddagger > \varepsilon|\mathcal{O}_{as,-k}^\ddagger, \bN) \le \varepsilon^{-2}E[\widetilde{U}_{skj}^\ddagger{}^2|\mathcal{O}_{as,-k}^\ddagger, \bN]$, we only need to show $E[\widetilde{U}_{skj}^\ddagger{}^2|\mathcal{O}_{as,-k}^\ddagger,\bN] = o_p(1)$. To this end, we deal with each $\widetilde{U}_{skj}^\ddagger$ separately, where we frequently apply a useful fact that $\bW_i(s), i=1,\dots, N_{sk}$ are independent and identically distributed given $(\mathcal{O}_{as,-k}^\ddagger,\bN)$, which is because $\{\bW_i(s)\}_{i=1}^n, \{\bW_i(s)^\ddagger\}_{i=1}^n, \bN$ are mutually independent. 

For $\widetilde{U}_{sk1}^\ddagger$, since $\bW_i(s), i=1,\dots, N_{sk}$ are independent and identically distributed given $(\mathcal{O}_{as,-k}^\ddagger,\bN)$, we have
\begin{align*}
   & E[\widetilde{U}_{sk1}^\ddagger|\mathcal{O}_{as,-k}^\ddagger,\bN]\\
   &= \frac{N_{ask}}{\pi_a\sqrt{n}} E\left[\frac{\underline{\kappa}_{as}(\bX_i(s)) - \widetilde{\kappa}_{as,-k}^\ddagger(\bX_i(s))}{\underline{\kappa}_{as}(\bX_i(s))\widetilde{\kappa}_{as,-k}^\ddagger(\bX_i(s))}\{Y_i(a,s) - \underline{\eta}_{as}(\bX_i(s))\}|\mathcal{O}_{as,-k}^\ddagger,\bN\right] \\
    &=\frac{N_{ask}}{\pi_a\sqrt{n}} E\left[\frac{\underline{\kappa}_{as}(\bX_i(s)) - \widetilde{\kappa}_{as,-k}^\ddagger(\bX_i(s))}{\underline{\kappa}_{as}(\bX_i(s))\widetilde{\kappa}_{as,-k}^\ddagger(\bX_i(s))}\{E[Y_i(a,s)|\bX_i(s),\mathcal{O}_{as,-k}^\ddagger,\bN]  - \underline{\eta}_{as}(\bX_i(s))\}|\mathcal{O}_{as,-k}^\ddagger,\bN\right].
\end{align*}
Since $(Y_i(a,s), \bX_i(s))$ is independent of $(\mathcal{O}_{as,-k}^\ddagger,\bN)$, we have $E[Y_i(a,s)|\bX_i(s),\mathcal{O}_{as,-k}^\ddagger,\bN]=E[Y_i(a,s)|\bX_i(s)] = \underline{\eta}_{as}(\bX_i(s))$, implying $E[\widetilde{U}_{sk1}^\ddagger|\mathcal{O}_{as,-k}^\ddagger,\bN] = 0$. Then, 
\begin{align*}
    E[\widetilde{U}_{sk1}^\ddagger{}^2|\mathcal{O}_{as,-k}^\ddagger,\bN] &= Var(\widetilde{U}_{sk1}^\ddagger|\mathcal{O}_{as,-k}^\ddagger,\bN) \\
    &= \frac{N_{ask}}{n\pi_a^2} Var\left[\frac{\underline{\kappa}_{as}(\bX_i(s)) - \widetilde{\kappa}_{as,-k}^\ddagger(\bX_i(s))}{\underline{\kappa}_{as}(\bX_i(s))\widetilde{\kappa}_{as,-k}^\ddagger(\bX_i(s))}\{Y_i(a,s) - \underline{\eta}_{as}(\bX_i(s))\}\bigg|\mathcal{O}_{as,-k}^\ddagger,\bN\right] \\
    &\le \frac{N_{ask}}{n\pi_a^2} E\left[\left[\frac{\underline{\kappa}_{as}(\bX_i(s)) - \widetilde{\kappa}_{as,-k}^\ddagger(\bX_i(s))}{\underline{\kappa}_{as}(\bX_i(s))\widetilde{\kappa}_{as,-k}^\ddagger(\bX_i(s))}\{Y_i(a,s) - \underline{\eta}_{as}(\bX_i(s))\}\right]^2\bigg|\mathcal{O}_{as,-k}^\ddagger,\bN\right] \\
    &\le  C\frac{1}{\pi_a^2} E\left[\{\underline{\kappa}_{as}(\bX_i(s)) - \widetilde{\kappa}_{as,-k}^\ddagger(\bX_i(s))\}^2\bigg|\mathcal{O}_{as,-k}^\ddagger,\bN\right] \numberthis\label{Eq: kappa-convergence-s}\\
    &= o_p(1),
\end{align*}
where the second last line results from the fact that $N_{ask}<n$ and Assumptions 2 and 4 (2) that $E[Y^2(a)|\bX], \{\underline{\kappa}_{as}(\bX)\}^{-1}, \{\widehat{\kappa}_{as,-k}(\bX)\}^{-1}$ are uniformly bounded, and the last line result from Lemma~\ref{lemma: O_as-k}.

For $\widetilde{U}_{sk2}^\ddagger$, since $\bW_i(s), i=1,\dots, N_{sk}$ are independent and identically distributed given $(\mathcal{O}_{as,-k}^\ddagger,\bN)$, we have
\begin{align*}
    E[\widetilde{U}_{sk2}^\ddagger|\mathcal{O}_{as,-k}^\ddagger,\bN] &=   \frac{N_{ask}}{\sqrt{n}\pi_a} E\left[\frac{R_i(a,s)}{\underline{\kappa}_{as}(\bX_i(s))}\{\underline{\eta}_{as}(\bX_i(s))-\widetilde{\eta}_{as,-k}^\ddagger(\bX_i(s))\}\bigg|\mathcal{O}_{as,-k}^\ddagger,\bN\right] \\
    &\quad - \frac{N_{sk}}{\sqrt{n}}  E[\underline{\eta}_{as}(\bX_i(s))-\widetilde{\eta}_{as,-k}^\ddagger(\bX_i(s))|\mathcal{O}_{as,-k}^\ddagger,\bN] \\
    &= \left(\frac{N_{ask}}{\sqrt{n}\pi_a} - \frac{N_{sk}}{\sqrt{n}}\right) E[\underline{\eta}_{as}(\bX_i(s))-\widetilde{\eta}_{as,-k}^\ddagger(\bX_i(s))|\mathcal{O}_{as,-k}^\ddagger,\bN],
\end{align*}
where the last line result from the fact that $E[R_i(a,s)|\bX_i(s), \mathcal{O}_{as,-k}^\ddagger,\bN] = E[R_i(a,s)|\bX_i(s)] = \underline{\kappa}_{as}(\bX_i(s))$. By stratified randomization and sample splitting scheme, $N_{ask}-\pi_a N_{sk}$ is zero (or at least bounded by a constant independent of $n$). In addition, Lemma~\ref{lemma: O_as-k} implies that $|E[\underline{\eta}_{as}(\bX_i(s))-\widetilde{\eta}_{as,-k}^\ddagger(\bX_i(s))|\mathcal{O}_{as,-k}^\ddagger,\bN]| = o_p(1)$, we have $E[\widetilde{U}_{sk2}^\ddagger|\mathcal{O}_{as,-k}^\ddagger,\bN] = O_p(1)o_p(1) =o_p(1)$. Moving forward, we have
\begin{align*}
    & E[\widetilde{U}_{sk2}^\ddagger{}^2|\mathcal{O}_{as,-k}^\ddagger,\bN]\\ &= Var(\widetilde{U}_{sk2}^\ddagger|\mathcal{O}_{as,-k}^\ddagger,\bN) + E[\widetilde{U}_{sk2}^\ddagger|\mathcal{O}_{as,-k}^\ddagger,\bN]^2 \\
    &= \frac{N_{ask}}{n}Var\left[\left\{\frac{1}{\pi_a} \frac{R_i(a,s)}{\underline{\kappa}_{as}(\bX_i(s))}-1\right\}\{\underline{\eta}_{as}(\bX_i(s))-\widetilde{\eta}_{as,-k}^\ddagger(\bX_i(s))\}\bigg|\mathcal{O}_{as,-k}^\ddagger,\bN\right] \\
    &\quad + \frac{N_{sk}-N_{ask}}{n}Var\left[\underline{\eta}_{as}(\bX_i(s))-\widetilde{\eta}_{as,-k}^\ddagger(\bX_i(s))\bigg|\mathcal{O}_{as,-k}^\ddagger,\bN\right] + o_p(1) \\
    &\le E\left[\left(\left\{\frac{1}{\pi_a} \frac{R_i(a,s)}{\underline{\kappa}_{as}(\bX_i(s))}-1\right\}^2+1\right)\{\underline{\eta}_{as}(\bX_i(s))-\widetilde{\eta}_{as,-k}^\ddagger(\bX_i(s))\}^2\bigg|\mathcal{O}_{as,-k}^\ddagger,\bN\right] + o_p(1)\\
    &\le C\times E\left[\{\underline{\eta}_{as}(\bX_i(s))-\widetilde{\eta}_{as,-k}^\ddagger(\bX_i(s))\}^2\bigg|\mathcal{O}_{as,-k}^\ddagger,\bN\right] + o_p(1),\numberthis\label{Eq: eta-convergence-s}.
\end{align*}
In the above derivations, the first inequality uses the fact that $N_{ask} < n$, $N_{sk}-N_{ask} < n$, and variance is smaller than the second moment. The last inequality results from the regularity condition that $\ \{\underline{\kappa}_{as}(\bX)\}^{-1}$ is uniformly bounded. Applying Lemma~\ref{lemma: O_as-k}, we obtain $ E[\widetilde{U}_{sk2}^\ddagger{}^2|\mathcal{O}_{as,-k}^\ddagger,\bN] = o_p(1)$.

For $\widetilde{U}_{sk3}^\ddagger$,  since $\bW_i(s), i=1,\dots, N_{sk}$ are independent and identically distributed given $(\mathcal{O}_{as,-k}^\ddagger,\bN)$, we have
\begin{align*}
    & E[\widetilde{U}_{sk3}^\ddagger|\mathcal{O}_{as,-k}^\ddagger,\bN] \\
    &= \frac{N_{ask}}{\sqrt{n}\pi_a} E\left[\frac{\underline{\kappa}_{as}(\bX_i(s)) - \widetilde{\kappa}^\ddagger_{as,-k}(\bX_i(s))}{\underline{\kappa}_{as}(\bX_i(s))\widetilde{\kappa}^\ddagger_{as,-k}(\bX_i(s))}\{\underline{\eta}_{as}(\bX_i(s))-\widetilde{\eta}_{as,-k}^\ddagger(\bX_i(s))\}\bigg|\mathcal{O}_{as,-k}^\ddagger,\bN\right]\\
    &\le C \sqrt{n} E\left[\{\underline{\kappa}_{as}(\bX_i(s)) - \widetilde{\kappa}^\ddagger_{as,-k}(\bX_i(s))\}\{\underline{\eta}_{as}(\bX_i(s))-\widetilde{\eta}_{as,-k}^\ddagger(\bX_i(s))\}\bigg|\mathcal{O}_{as,-k}^\ddagger,\bN\right] \\
    &\le C \sqrt{n} E\left[\{\underline{\kappa}_{as}(\bX_i(s)) - \widetilde{\kappa}^\ddagger_{as,-k}(\bX_i(s))\}^2\bigg|\mathcal{O}_{as,-k}^\ddagger,\bN\right]^{1/2}E\left[\{\underline{\eta}_{as}(\bX_i(s))-\widetilde{\eta}_{as,-k}^\ddagger(\bX_i(s))\}^2\bigg|\mathcal{O}_{as,-k}^\ddagger,\bN\right]^{1/2} \\
    &= C \sqrt{n} o_p(n^{-1/4})o_p(n^{-1/4}) \\
    &= o_p(1),
\end{align*}
where the third line comes from $N_{ask} < n$ and the regularity condition that $\underline{\kappa}_a, \widehat{\kappa}_a$ are uniformly bounded, the fourth line uses the Cauchy-Schwarz inequality, and the fifth line is implied by Lemma~\ref{lemma: O_as-k}. Then, we have
\begin{align*}
    & E[\widetilde{U}_{sk3}^\ddagger{}^2|\mathcal{O}_{as,-k}^\ddagger,\bN]\\ &= Var(\widetilde{U}_{sk3}^\ddagger|\mathcal{O}_{as,-k}^\ddagger,\bN) + E[\widetilde{U}_{sk3}^\ddagger|\mathcal{O}_{as,-k}^\ddagger,\bN]^2 \\
    &= \frac{N_{ask}}{n}Var\left[\frac{1}{\pi_a} \frac{\underline{\kappa}_{as}(\bX_i(s)) - \widetilde{\kappa}^\ddagger_{as,-k}(\bX_i(s))}{\underline{\kappa}_{as}(\bX_i(s))\widetilde{\kappa}^\ddagger_{as,-k}(\bX_i(s))}\{\underline{\eta}_{as}(\bX_i(s))-\widetilde{\eta}_{as,-k}^\ddagger(\bX_i(s))\}\bigg|\mathcal{O}_{as,-k}^\ddagger,\bN\right] + o_p(1) \\
    &\le E\left[\left[\frac{1}{\pi_a} \frac{\underline{\kappa}_{as}(\bX_i(s)) - \widetilde{\kappa}^\ddagger_{as,-k}(\bX_i(s))}{\underline{\kappa}_{as}(\bX_i(s))\widetilde{\kappa}^\ddagger_{as,-k}(\bX_i(s))}\{\underline{\eta}_{as}(\bX_i(s))-\widetilde{\eta}_{as,-k}^\ddagger(\bX_i(s))\}\right]^2\bigg|\mathcal{O}_{as,-k}^\ddagger,\bN\right] + o_p(1) \\
    &\le C\times E\left[\{\underline{\eta}_{as}(\bX_i(s))-\widetilde{\eta}_{as,-k}^\ddagger(\bX_i(s))\}^2\bigg|\mathcal{O}_{as,-k}^\ddagger,\bN\right] + o_p(1),
\end{align*}
where the first inequality is because $N_{ask} < n$ and variance is smaller than the second moment, and the second inequality is because $ \{\underline{\kappa}_{as}(\bX)\}^{-1}, \{\widehat{\kappa}_{as,-k}(\bX)\}^{-1}$ are uniformly bounded. Applying Lemma~\ref{lemma: O_as-k},we get $E[\widetilde{U}_{sk3}^\ddagger{}^2|\mathcal{O}_{as,-k}^\ddagger,\bN] =o_p(1)$. Therefore, we complete the proof for $U_{sk} = o_p(1)$. This completes the proof for asymptotic linearity under stratified rerandomization. 

Then the asymptotic distribution is a direct result of Theorem 3, which also implies consistency. Next, we can compute
\begin{align*}
    EIF(\bO(1)) - EIF(\bO(0)) = \sum_{a=0}^1 (-1)^{1-a} f_a'\left[ \frac{1}{\pi_a} \frac{R(a)}{E[R(a)|\bX]}\{Y(a) - E[Y(a)|\bX]\} \right],
\end{align*}
which implies
\begin{align*}
    E\left[EIF(\bO(1)) - EIF(\bO(0))|\bX\right] &= \sum_{a=0}^1 (-1)^{1-a} f_a' \frac{1}{\pi_a} E\left[\frac{R(a)}{E[R(a)|\bX]}\{Y(a) - E[Y(a)|\bX]\}\bigg|\bX\right] \\
    &= 0
\end{align*}
If $\bX$ contains $\bX^r$ and $S$ as a subset, we have
\begin{align*}
    & E[Cov\{IF(\bO(1)) - IF(\bO(0)), \bX^r|S\}]\\
    &= E[\{IF(\bO(1)) - IF(\bO(0))\}\bX^r] - E[E[IF(\bO(1)) - IF(\bO(0))|S]E[\bX^r|S]] \\
    &= \bzero,
\end{align*}
which implies $R^2=0$.

Finally, we construct consistent variance estimators as
\begin{align}
       \widehat{\widetilde{V}} &= \widehat{V} - \pi(1-\pi) \sum_{s \in \mathcal{S}} \widehat{p}_s \widehat{d}_s^2, \label{vhat: ml-st}\\
    \widehat{\widetilde{R}^2} &= \frac{1}{\widehat{\widetilde{V}}} \widehat{C}^\top \widehat{Var}(\widetilde{\bI})^{-1} \widehat{C},\label{R2hat: ml-st}
\end{align}
where $\widehat{p}_s = n^{-1}\sum_{i=1}^n I\{S_i=s\}$ and
\begin{align*}
    \widehat{V} &= \frac{1}{K}\sum_{k=1}^K \sum_{s \in \mathcal{S}} \widehat{p}_s\frac{1}{|\mathcal{I}_{sk}|} \sum_{i \in \mathcal{I}_{sk}} \widehat{EIF}_{i}^2, \\
    \widehat{d}_s &= \frac{1}{K}\sum_{k=1}^K \frac{1}{|\mathcal{I}_{sk}|}\sum_{i \in \mathcal{I}_{sk}}\frac{A_i-\pi}{\pi(1-\pi)}  \widehat{EIF}_i, \\
    \widehat{C} &=  \sum_{s \in \mathcal{S}} \widehat{p}_s \left[ \frac{1}{K}\sum_{k=1}^K \frac{1}{|\mathcal{I}_{sk}|}\sum_{i \in \mathcal{I}_{sk}}\frac{A_i-\pi}{\pi(1-\pi)}  \widehat{EIF}_i\bX_i^r -\widehat{d}_s\overline{\bX}_s^r \right], \\
    \widehat{EIF}_i &= f_1'\widehat{EIF}_i(1) + f_0'\widehat{EIF}_i(0), \\
    \widehat{EIF}_i(a) &= \frac{I\{A_i=a\}}{\pi^a(1-\pi)^{1-a}}\frac{R_i}{\widehat{\kappa}_a(\bX_i)}\{Y_i - \widehat{\eta}_a(\bX_i)\} + \widehat{\eta}_a(\bX_i)-\widehat{\mu}_a^{\textup{dml}},\\
    \overline{\bX}_s^r &= \frac{1}{n} \sum_{i=1}^n \frac{I\{S_i=s\}} {\widehat{p}_s}\bX_i^r. 
\end{align*}
Consistency of $\widehat{\widetilde{V}}$ and $\widehat{\widetilde{R}^2}$ can be proof similar to Theorem 5.
\end{proof}

\begin{definition}\label{def: conditional-W}
    For each $s \in \mathcal{S}$, we define the conditional distribution in strautm $s$ as $\mathcal{P}^{\bW(s)} = \mathcal{P}^{\bW|(S=s)}$ on $\bW(s) = \{R(a,s), Y(a,s), \bX(s): a = 0,1\}$, where $\mathcal{P}^{\bW}$ is defined in Assumption 1. We further define $\{\bW_i(s)\}_{i=1}^n$ as $n$ independent samples from $\mathcal{P}^{\bW(s)}$, where $\{\bW_i(s)\}_{i=1}^n$ is also independent of $\{\bW_i\}_{i=1}^n$ and $\{\tA_i, T_i\}_{i=1}^n$, where $T_i$ is the random variable for sample splitting.
\end{definition}

\begin{lemma}\label{lemma: O_as-k}
Under Assumptions 1-4, we have
\begin{align*}
    E\left[\{\underline{\kappa}_{as}(\bX_i(s)) - \widetilde{\kappa}_{as,-k}^\ddagger(\bX_i(s))\}^2\bigg|\mathcal{O}_{as,-k}^\ddagger,\bN\right] &= o_p(n^{-1/2}), \\
    E\left[\{\underline{\eta}_{as}(\bX_i(s)) - \widetilde{\eta}_{as,-k}^\ddagger(\bX_i(s))\}^2\bigg|\mathcal{O}_{as,-k}^\ddagger,\bN\right] &= o_p(n^{-1/2})
\end{align*}
for the conditional expectations defined in Equation~\eqref{Eq: kappa-convergence-s}.
\end{lemma}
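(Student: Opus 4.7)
The plan is to mirror the proof of Lemma~\ref{lemma: O_a-k}, replacing the simple-randomization iid copies by the stratum-$s$ iid copies $\bW_i^{\ddagger}(s)$ from $\mathcal{P}^{\bW(s)}$ and using $N_{as,-k}$ in place of $N_{a,-k}$. The engine is again Lemma~\ref{lemma: random indices} on convergence with random indices.

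First, I would strip down the conditioning event $(\mathcal{O}_{as,-k}^{\ddagger}, \bN)$ to just $N_{as,-k}$. By construction, $\mathcal{O}_{as,-k}^{\ddagger}$ is built from exactly $N_{as,-k}$ iid copies from $\mathcal{P}^{\bW(s)}$, independent of the remaining coordinates of $\bN$ and of the $\bW_i(s)$ that drives the outer expectation. Hence the conditional expectation in question depends on $\bN$ only through $N_{as,-k}$. As in the proof of Lemma~\ref{lemma: O_a-k}, the Markov plus dominated convergence argument reduces the problem to showing
\begin{equation*}
    E\!\left[\{\underline{\kappa}_{as}(\bX_i(s)) - \widetilde{\kappa}_{as,-k}^{\ddagger}(\bX_i(s))\}^2 \,\big|\, N_{as,-k}\right] = o_p(n^{-1/2}),
\end{equation*}
and analogously for $\widetilde{\eta}_{as,-k}^{\ddagger}$.

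Next I would set up Lemma~\ref{lemma: random indices} with
\begin{equation*}
    X_n \;=\; \widehat{\kappa}_a\!\left(\bX_i(s); \{\bO_j(a,s)\}_{j=1}^n\right) - \underline{\kappa}_{as}(\bX_i(s)), \qquad Y_n \;=\; N_{as,-k},
\end{equation*}
where $\bO_j(a,s) = (R_j(a,s)Y_j(a,s), R_j(a,s), \bX_j(s))$ are iid draws from $\mathcal{P}^{\bW(s)}$. By the distributional identity embedded in the construction of $\mathcal{O}_{as,-k}^{\ddagger}$, we have $E[X_{Y_n}^2 \mid Y_n] = E[\{\underline{\kappa}_{as}(\bX_i(s)) - \widetilde{\kappa}_{as,-k}^{\ddagger}(\bX_i(s))\}^2 \mid N_{as,-k}]$. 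The three hypotheses then follow: (i) independence of $X_n$ and $Y_n$ holds because $\{\bW_i(s)\}_{i=1}^n$ and $\{\bW_i^{\ddagger}(s)\}_{i=1}^n$ are independent of $\{\widetilde{A}_i, S_i, T_i\}_{i=1}^n$, while $Y_n$ is a function of the latter; (ii) $Y_n/n \xrightarrow{p} \pi_a\, P(S=s)\,(K-1)/K > 0$ by the law of large numbers for stratified randomization together with equal-sized sample splitting; (iii) $E[X_n^2] = o(n^{-1/2})$ is the stratum-$s$ version of Assumption~4(1), obtained by applying that assumption to the conditional distribution $\mathcal{P}^{\bW(s)}$ (Assumption~4 is required to hold on each stratum-conditional law, which is the natural setting since $S$ is one of the adjustment covariates). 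Lemma~\ref{lemma: random indices} with $r=1/2$ and $c = \pi_a P(S=s)(K-1)/K$ then delivers $E[X_{Y_n}^2 \mid Y_n] = o_p(n^{-1/2})$, which is exactly what we need.

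The second display, involving $\widetilde{\eta}_{as,-k}^{\ddagger}$ and $\underline{\eta}_{as}$, is handled by repeating the argument word-for-word after swapping the roles of $\widehat{\kappa}_a$ with $\widehat{\eta}_a$ and of $R(a,s)$ with $Y(a,s)$. The main obstacle is the clean bookkeeping around the three independent layers of randomness ($\{\bW_i(s)\}$, $\{\bW_i^{\ddagger}(s)\}$, and the assignment/splitting variables) that make the conditional distribution of $\mathcal{O}_{as,-k}^{\ddagger}$ given $N_{as,-k}$ equal to the iid stratum-$s$ law; once that is spelled out, Lemma~\ref{lemma: random indices} closes the argument mechanically, and the only substantive assumption used is the within-stratum counterpart of Assumption~4(1).
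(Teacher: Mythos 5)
Your proposal follows essentially the same route as the paper: reduce the conditioning to $N_{as,-k}$ via dominated convergence and Markov's inequality, then apply Lemma~\ref{lemma: random indices} with $X_n$ the error of $\widehat{\kappa}_a$ trained on $n$ iid stratum-$s$ copies and $Y_n = N_{as,-k}$, using $Y_n/n \xrightarrow{p} \pi_a P(S=s)(K-1)/K$ and the rate condition of Assumption~4(1). Your explicit remark that Assumption~4(1) must be read as applying to the stratum-conditional law $\mathcal{P}^{\bW(s)}$ is a point the paper glosses over, but the argument is otherwise identical.
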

\begin{proof}
Denoting $Z= E\left[\{\underline{\kappa}_{as}(\bX_i(s)) - \widetilde{\kappa}_{as,-k}^\ddagger(\bX_i(s))\}^2\bigg|\mathcal{O}_{as,-k}^\ddagger,\bN\right]$, we have\\ $P\{n^{1/2}Z>\varepsilon\} = E[P(n^{1/2}Z>\varepsilon|N_{as,-k})]$. By the dominated convergence theorem, it suffices to show $P(n^{1/2}Z>\varepsilon|N_{as,-k}) = o_p(1)$. By Markov's inequality, we have $P(n^{1/2}Z>\varepsilon|N_{as,-k}) \le\varepsilon^{-1} E[n^{1/2}Z|N_{as,-k}]$, and we only need to show $E[Z|N_{as,-k}] = o_p(n^{-1/2})$. We next apply Lemma~\ref{lemma: random indices} to prove this result.

Defining $X_n = \underline{\kappa}_{as}(\bX_i(s)) - \widehat{\kappa}_{a}(\bX_i(s); \{\bO_i^\ddagger(a,s)\}_{i=1}^n)$ and $Y_n = N_{as,-k}$, where $\bO_i^\ddagger(a,s)=\{R_i^\ddagger(a,s) Y_i^\ddagger(a,s), R_i^\ddagger(a,s),\bX_i^\ddagger(s)\}$. We observe that $\widetilde{\kappa}_{as,-k}^\ddagger(\bX_i(s)) =  \widehat{\kappa}_{a}(\bX_i(s); \{\bO_i^\ddagger(a,s)\}_{i=1}^{N_{as,-k}})$, which implies $ E[Z|N_{as,-k}] = E[X_{Yn}|Y_n]$. To show $E[X_{Yn}^2|Y_n] = o_p(n^{-1/2})$, we verify the conditions of Lemma~\ref{lemma: random indices}. By definition, we have $N_{as,-k}$ is independent of $\bX_i(s)$ and $\{\bO_i^\ddagger(a,s)\}_{i=1}^n$. Furthermore, we have $N_{as,-k}/n \xrightarrow{p} \pi_a P(S=s)(K-1)/K$ under stratified randomization. Finally, Assumption 4 (1) implies that $E[X_n^2] = o_p(n^{-1/2})$. Therefore, Lemma~\ref{lemma: random indices} implies the desired result for $\kappa$. Following the same procedure, we can obtain the convergence result for $\eta$. 
\end{proof}

\subsubsection{Efficiency invariance to randomization schemes}

Recall $\bW=(Y(0),Y(1),R(0),R(1), X)$. For brevity, let $\bT = (Y(0),Y(1),R(0),R(1))$ denote the vector of potential outcomes, and let $A_{1:n}, \bX_{1:n}, \bT_{1:n}$ denote $\{A_i\}_{i=1}^n, \{\bX_i\}_{i=1}^n, \{\bT_i\}_{i=1}^n$, respectively.  We further define a \emph{design} $D$ as a known, parameter-free conditional law
$G^D(A_{1:n}\mid \bX_{1:n})$, e.g.\ simple randomization with fixed $\pi$ or rerandomization with $\pi(x)=P(A=1\mid X=x)$.  We assume conditional ignobility, i.e., $A\perp \bT |\bX$, which automatically holds under simple randomization. This is also true under rerandomization, because $A_{1:n}\perp \bT_{1:n} | \bX_{1:n}$ and hence, for any Borel set $B$,
\begin{align*}
    P(A_i \in B | \bX_i, \bT_i) &= E[P(A_i\in B |\bX_{1:n}, \bT_i)|\bX_{1:n}, \bT_i] \\
    &= E[P(A_i\in B |\bX_{1:n})|\bX_i, \bT_i]\\
    &= E[P(A_i\in B |\bX_{1:n})|\bX_i] \quad \textup{(by Assumption 1)} \\
    &= P(A_i \in B | \bX_i).
\end{align*}

Below we give the formal statement that different randomization schemes provide the same semiparametric efficiency bound using the average treatment effect $\Delta = E[Y(1)-Y(0)]$ as an example; the proposition directly generalizes to other estimands.

\begin{proposition}[Efficiency invariance to parameter-free designs with MAR]\label{prop:inv-paramfree-R}
For the observed-data experiments $\bO_1,\dots, \bO_n$ induced by $(P,G^D)$ for any parameter-free design $D$ satisfying $A\perp \bT |\bX$, we have, under Assumptions 1-2,

\noindent (i) The observed-data tangent space $\mathcal T$ for $\Delta$ is the same for all such $D$.

\noindent (ii) The efficient influence function for $\Delta$ in this causal model is
\begin{align*}
\phi(\bO)&=
\frac{A\,R}{\pi(\bX)\,E[R(1)|\bX]}\{Y-E[Y(1)|\bX]\}
\;-\;\frac{(1-A)\,R}{(1-\pi(\bX))\,E[R(0)|\bX]}\{Y-E[Y(1)|\bX]\} \\
&\quad + E[Y(1)|\bX]-E[Y(0)|\bX]-\Delta.
\end{align*}

\noindent (iii) Hence the semiparametric efficiency bound equals $Var\{\phi(\bO)\}$ and is identical across all parameter-free designs $D$.
\end{proposition}

\begin{proof}
Under design $D$, the joint likelihood factors as
\[
L_n^D(P)\ =\
\underbrace{G^D\!\big(A_{1:n}\mid X_{1:n}\big)}_{\text{known, parameter-free design}}
\cdot \prod_{i=1}^n f_X(\bX_i)\cdot
\prod_{i=1}^n f_{T\mid X}(\bT_i\mid \bX_i)
\cdot \prod_{i=1}^n f_{O_i\mid X_i,A_i,T_i}
\]
where $f_{O_i\mid  X_i,A_i,T_i}$ is deterministic and enforces consistency:
if $A_i=a$ and $R_i(a)=1$, then $R_i=1$ and $Y_i=Y_i(a)$; if $R_i(a)=0$, then $R_i=0$ and $Y_i$ is absent.
Let $\{P_\varepsilon\}$ be any regular parametric submodel of $P$.
Because $G^D$ is parameter-free, its derivative over $\varepsilon$ is zero. Thus, the observed-data score equals the sum of $n$ per-unit scores
that live in the span of scores for $f_X$ and for the conditional law of $T$ given $X$ mapped through the observation mechanism.
Therefore, the observed-data tangent space $\mathcal T$ does not depend on $D$, proving (i).

With the same tangent space across designs, the estimand $\Delta(P)$ is pathwise differentiable, and its canonical gradient is the well-known augmented inverse-probability weighted form of $\phi(\bO)$. Finally, we have $Var(\sum_{i=1}^n \phi(\bO_i)) = \sum_{i=1}^n Var(\phi(\bO_i))$. This is because $T_i \perp A_{1:n}|\bX_{1:n}$ and Assumptions 1-2 imply that $\bT_1,\dots, \bT_n$ are conditional independent given $(\bX_{1:n}, A_{1:n})$ and hence, for $i\ne j$,
\begin{align*}
   E[\phi(\bO_i)\phi(\bO_j)] &= E[E[\phi(\bO_i)\phi(\bO_j)|A_{1:n}, 
   \bX_{1:n}]] \\
   &= E[E[\phi(\bO_i)|A_{1:n}, \bX_{1:n}]E[\phi(\bO_j)|A_{1:n}, \bX_{1:n}]] \\
   &= E[\{E[Y(1)|\bX_i]-E[Y(0)|\bX_i]-\Delta\}\{E[Y(1)|\bX_j]-E[Y(0)|\bX_j]-\Delta\}]\\
   &=0,
\end{align*}
thereby removing all interaction terms. Consequently, the Fisher information is still $n$ times the per-unit information, exactly as under simple randomization, and hence the semiparametric efficiency bound $Var\{\phi(\bO)\}$ is the same under every parameter-free design $D$, proving (iii).
\end{proof}

\subsection{Finite-sample results}

{
\bibliographystyle{apalike}
\bibliography{references}
}